\documentclass[11pt,letterpaper]{article}

\usepackage[english]{babel}

\usepackage[english]{babel}

\usepackage[letterpaper,margin=1in]{geometry}

\usepackage{amsmath}

\usepackage{mathrsfs}
\usepackage{graphicx}
\usepackage[colorlinks=true, allcolors=blue]{hyperref}
\usepackage[capitalize,noabbrev]{cleveref}
\usepackage{xspace}
\usepackage{array}
\usepackage{microtype}
\usepackage{booktabs}
\usepackage{needspace}
\newcommand{\R}{\mathbb{R}}
\newcommand{\eps}{\varepsilon}
\newcommand{\N}{\mathbb{N}}

\usepackage{amsmath}
\usepackage{amssymb}
\usepackage{mathtools}
\usepackage{amsthm}

\usepackage[capitalize,noabbrev]{cleveref}

\theoremstyle{plain}
\newtheorem{theorem}{Theorem}[section]
\newtheorem{proposition}[theorem]{Proposition}
\newtheorem{lemma}[theorem]{Lemma}
\newtheorem{corollary}[theorem]{Corollary}
\theoremstyle{definition}
\newtheorem{definition}[theorem]{Definition}

\theoremstyle{remark}
\newtheorem{remark}[theorem]{Remark}

\newtheorem{claim}[theorem]{Claim}

\renewcommand{\epsilon}{\varepsilon}

\newcommand{\eat}[1]{}

\DeclareMathOperator*{\E}{\ensuremath{\mathbb{E}}}

\newcommand{\ddim}{\ensuremath{\mathrm{ddim}}}

\newcommand{\calM}{\mathcal{M}}

\newcommand{\calX}{\mathcal{X}}

\newcommand{\cost}{\ensuremath{\mathsf{cost}}}

\usepackage{longtable}

\usepackage{url}

\def\norm#1{\left\| #1 \right\|}

\newcommand{\ip}[2]{\left\langle #1,#2\right\rangle}

\title{Tight Coreset Bounds for $(k,z)$-Clustering in Euclidean and Doubling Metrics}
\author{Junqiu Hu\thanks{
        Email: \texttt{junqiu2025@gmail.com}
    }\\
    Nanjing University
    \and
Lingxiao Huang\thanks{
        Email: \texttt{huanglingxiao@nju.edu.cn}
    }\\
    Nanjing University
    \and
    Jian Li\thanks{
        Email: \texttt{lapordge@gmail.com}
    }\\
    Tsinghua University
    \and
    Baoduo Xu\thanks{ 
        Email: \texttt{bdxu03@gmail.com}}\\
    Nanjing University
}
\date{}
\begin{document}
\maketitle
\begin{abstract}
We study the optimal size of strong coresets for $(k,z)$-clustering and focus on two well-studied classes of metric spaces: Euclidean spaces and doubling metrics.

In Euclidean space, Huang, Li, and Wu (STOC 2024) proved a matching lower bound $\Omega(k \eps^{-z-2})$ when $\eps\gtrsim k^{-1/(z+2)}$, but the upper bound $\widetilde O(k^{(2z+2)/(z+2)}\eps^{-2})$ remained unmatched for smaller $\eps$.

We close this gap with a lower bound matching up to polylogarithmic factors, including for arbitrarily small $\eps$.
Together with existing upper bounds, this determines the optimal worst-case coreset size as
\begin{equation*}
\widetilde\Theta\!\left(\min\left\{k^{\frac{2z+2}{z+2}}\eps^{-2},\,k\eps^{-z-2}\right\}\right)
\end{equation*}
for every fixed $z\ge1$, all $k\ge1$, and $0<\eps<1$.
We replace each offset in the previous hard instance by a group of distinct offsets and construct a fixed family of queries, each using $k$ centers.
Four-wise independent signs yield a discrepancy guarantee: for every sufficiently small subset and every choice of nonnegative weights, some query has a large relative cost error.

For metrics of doubling dimension $D$, the optimal coreset size was known up to logarithmic factors for $1\le z\le2$ when $D\gtrsim\log(k/\eps)$~\cite{CohenAddadLSS22,cohenaddad2021newframework}, but the bounds for $z>2$ did not match.

We obtain a new $\widetilde O(k^{2-2/z}D\eps^{-2})$ upper bound and a new lower bound that, together with the known upper bounds, determine the optimal size
\begin{equation*}
\widetilde\Theta\!\left(kD\min\{k^{1-2/z}\eps^{-2},\,\eps^{-z}\}\right)
\end{equation*}
for fixed $z>2$, $0<\eps\le\eps_z$, and $D\ge C_z\log(2/\eps)$, where $C_z>0$ and $\eps_z\in(0,1)$ depend only on $z$.

The lower bound exploits the greater freedom to prescribe distances in doubling metrics than in Euclidean space: a center can encode $\Theta(D)$ independently chosen signs at a distance scale independent of $D$, without diluting the relative cost discrepancy.

For the upper bound, we introduce a new coreset construction based on independent, unbiased two-point summaries, each exactly preserving its cluster’s mass and anchor cost. We combine this construction with iterative size reduction to obtain the tight size bound.
\end{abstract}

\newpage
\tableofcontents
\newpage
\pagenumbering{arabic}

\section{Introduction}

We study the optimal coreset size for $(k,z)$-clustering, where $z\ge1$ is a fixed constant.\footnote{Throughout, asymptotic notation suppresses constants depending only on the fixed exponent $z$. Unless stated otherwise, tildes additionally suppress factors polynomial in $\log(2+k)$ and $\log(2+1/\eps)$.}

\paragraph{$(k,z)$-clustering.}
Given a metric space $(\calX,d)$, a finite dataset $X\subseteq\calX$, and an integer $k\ge1$, the goal is to find a set $C\subseteq\calX$ of $k$ centers minimizing
\begin{equation*}
\cost_z(X,C):=\sum_{p\in X}d^z(p,C),\qquad d(p,C):=\min_{c\in C}d(p,c).
\end{equation*}
The cases $z=1$ and $z=2$ are the classical $k$-median and $k$-means problems, respectively.
These clustering objectives have been extensively studied in data analysis, machine learning, and approximation algorithms~\cite{Jain2010DataClustering,DBLP:conf/stoc/Huang0024}.

Euclidean space is the most extensively studied geometric setting, with a long line of approximation algorithms for these objectives~\cite{cohenAddadEtAl2022nested,charikarEtAl2026euclidean}.
Another central setting is metrics of bounded \emph{doubling dimension}: a metric has doubling dimension at most $D$ if every ball can be covered by at most $2^D$ balls of half its radius.
This covering property captures low-dimensional structure without requiring a Euclidean embedding; for constant $D$, it supports near-linear-time approximation schemes for clustering~\cite{cohenAddadFeldmannSaulpic2019doubling}.
More recent parameterized approximation schemes cover both Euclidean and doubling metrics~\cite{abbasiEtAl2023normClustering}.
Other well-studied settings include shortest-path metrics of planar and bounded-treewidth graphs~\cite{abbasiEtAl2023normClustering}.

\paragraph{Coresets.}
For large datasets, coresets provide a way to reduce the input size while preserving the clustering objective~\cite{HM04,FL11}.

\begin{definition}[Coreset]
For $0<\eps<1$, a strong $\eps$-coreset for $(k,z)$-clustering of $X$ is a subset $S\subseteq X$ with weights $w:S\to\R_{\ge0}$ such that
\begin{equation*}
\bigl|\cost_z(S,C)-\cost_z(X,C)\bigr|\le\eps\cost_z(X,C)
\quad\text{for every }C\subseteq\calX\text{ with }|C|=k,
\end{equation*}
where $\cost_z(S,C):=\sum_{p\in S}w(p)d^z(p,C)$.
\end{definition}

\noindent
A clustering algorithm can run on a coreset instead of the full dataset while retaining nearly the same approximation guarantee.
An efficiently constructed small coreset separates the cost of reading the data from the more expensive task of optimizing the centers.
This has motivated extensive work on coresets in Euclidean spaces~\cite{HM04,FL11,FeldmanSchmidtSohler2020,huangVishnoi2020coresets}, doubling metrics~\cite{huangJiangLiWu2018doubling,cohenaddad2021newframework}, and graph metrics~\cite{braverman2021excludedminor,cohenAddad2025tightVC}.

\paragraph{Coreset sizes.}
Determining the optimal coreset size in Euclidean space is a central problem in the area; Cohen-Addad, Saulpic, and Schwiegelshohn~\cite{cohenaddad2021powermeans} described improving the Euclidean bounds as arguably ``the most important open problem in coresets.''
Following a sequence of dimension-independent constructions, Cohen-Addad et al.~\cite{CohenAddadLSS22} proved an $\Omega(k\eps^{-2})$ lower bound, which became a natural target for upper bounds.
Subsequent upper bounds of $\widetilde O(k^{4/3}\eps^{-2})$ for $k$-median and $\widetilde O(k^{3/2}\eps^{-2})$ for $k$-means still left open whether a superlinear dependence on $k$ was necessary~\cite{CohenAddadLSSO22}.
Huang, Li, and Wu~\cite{DBLP:conf/stoc/Huang0024} showed that such a superlinear dependence is unavoidable in part of the parameter range: their $\Omega(k\eps^{-z-2})$ lower bound matches the known upper bound for $\eps\gtrsim k^{-1/(z+2)}$.
They also obtained the upper bound $\widetilde O(k^{(2z+2)/(z+2)}\eps^{-2})$ for every fixed $z\ge1$.
The remaining gap lies in the small-error regime $0<\eps<k^{-1/(z+2)}$, where the known lower bounds are $\Omega(k^{(2z+3)/(z+2)}\eps^{-1})$ for $\eps\gtrsim k^{-(z+1)/(z+2)}$ and $\Omega(k\eps^{-2})$ for smaller $\eps$.
These lower bounds do not match the upper bound throughout the small-error regime, leaving the following question.
\begin{quote}
\noindent\textbf{Problem 1.} \emph{For fixed $z\ge1$, what is the optimal worst-case coreset size for Euclidean $(k,z)$-clustering in the remaining regime $0<\eps<k^{-1/(z+2)}$?}
\end{quote}

Doubling metrics provide a second important setting in which to seek tight coreset bounds.
Beginning with the first efficient input-size-independent construction of Huang et al.~\cite{huangJiangLiWu2018doubling}, subsequent work sharpened the dependence on $k$, $\eps$, and $D$.
The framework of Cohen-Addad, Saulpic, and Schwiegelshohn~\cite{cohenaddad2021newframework} yields size $\widetilde O(kD\min\{k\eps^{-2},\eps^{-\max\{z,2\}}\})$.
The $\Omega(kD\eps^{-2})$ lower bound of Cohen-Addad et al.~\cite{CohenAddadLSS22} holds for $D\gtrsim\log(k/\eps)$, matching this upper bound up to logarithmic factors for $1\le z\le2$ in that dimension range.

For $z>2$, however, the dependence on $k$ and $\eps$ remained unresolved.

Huang, Li, and Wu~\cite{DBLP:conf/stoc/Huang0024} proved a lower bound of $\Omega(kD\eps^{-z}/\log k)$ for $\eps\gtrsim k^{-1/(z+2)}$, matching the corresponding upper bound up to logarithmic factors, and an additional lower bound of $\Omega(k^{(2z+1)/(z+2)}D\eps^{-1})$ for $k^{-(z+1)/(z+2)}\lesssim\eps\lesssim k^{-1/(z+2)}$. In both constructions, $D$ denotes the resulting ambient doubling dimension.
For smaller $\eps$, the bounds remained unmatched, raising the following problem:
\begin{quote}
\noindent\textbf{Problem 2.} \emph{For fixed $z>2$, what is the optimal dependence of the worst-case coreset size on $k$, $\eps$, and the doubling dimension $D$ for $0<\eps<k^{-1/(z+2)}$?}
\end{quote}

\Needspace{8\baselineskip}
\subsection{Our Contributions}

We establish matching coreset-size bounds, up to polylogarithmic factors, in Euclidean spaces and doubling metrics.
For Euclidean spaces, our contribution is a lower bound that closes the remaining small-error gap.
For doubling metrics, we improve both upper and lower bounds for $z>2$, resolving the remaining gap when $D\gtrsim\log(2/\eps)$ and $\eps$ is sufficiently small.
Table~\ref{tab:coreset-bounds} summarizes the bounds.

\begin{table}[htbp]
\centering
\small
\setlength{\tabcolsep}{4pt}
\renewcommand{\arraystretch}{1.2}
\newcommand{\coresetcell}[2]{\makebox[#1][c]{\begin{tabular}[c]{@{}c@{}}#2\end{tabular}}}
\resizebox{\textwidth}{!}{\begin{tabular}{|c|@{}c@{}|}
\hline
\multicolumn{2}{|@{}c@{}|}{\begin{tabular}{c|c|c}
\coresetcell{\dimexpr 8.8em+2\tabcolsep+\arrayrulewidth\relax}{\textbf{Metric space}} &
\coresetcell{18em}{\textbf{Previous results}} &
\coresetcell{17em}{\textbf{This work}}
\end{tabular}} \\
\hline
\coresetcell{5em}{Euclidean\\$z\ge1$} &
\begin{tabular}{c|c|c}
\coresetcell{3.8em}{Upper} &
\coresetcell{18em}{
$\min\{k^{\frac{2z+2}{z+2}}\eps^{-2},\,k\eps^{-z-2}\}$\\
\cite{CohenAddadLSS22,DBLP:conf/stoc/Huang0024}
} &
\coresetcell{17em}{---} \\
\hline
\coresetcell{3.8em}{Lower} &
\coresetcell{18em}{
$k\eps^{-2}$~\cite{CohenAddadLSS22}\\
$\min\{k^{\frac{2z+3}{z+2}}\eps^{-1},\,k\eps^{-z-2}\}$~
\cite{DBLP:conf/stoc/Huang0024}
} &
\coresetcell{17em}{
$\min\{k^{\frac{2z+2}{z+2}}\eps^{-2},\,k\eps^{-z-2}\}$ 
}
\end{tabular} \\
\hline
\coresetcell{5em}{Doubling\\$z>2$} &
\begin{tabular}{c|c|c}
\coresetcell{3.8em}{Upper} &
\coresetcell{18em}{
$kD\min\{k\eps^{-2},\,\eps^{-z}\}$ ~
\cite{cohenaddad2021newframework}
} &
\coresetcell{17em}{New upper branch:\\$k^{2-2/z}D\eps^{-2}$} \\
\hline
\coresetcell{3.8em}{Lower} &
\coresetcell{18em}{
$kD\eps^{-2}$~\cite{CohenAddadLSS22}\\
$kD\min\{k^{\frac{z-1}{z+2}}\eps^{-1},\,\eps^{-z}\}$ ~
\cite{DBLP:conf/stoc/Huang0024}
} &
\coresetcell{17em}{
    $kD\min\{k^{1-2/z}\eps^{-2},\,\eps^{-z}\}$
} \\
\end{tabular} \\
\hline
\end{tabular}}
\caption{Coreset-size bounds up to polylogarithmic factors and $z$-dependent constants. The new doubling upper branch holds for all $D\ge1$ and $0<\eps<1$; combined with the previous $kD\eps^{-z}$ branch, it matches our lower bound for $0<\eps\le\eps_z$ and $D\ge C_z\log(2/\eps)$. Other restrictions are as in the cited results and Corollary~\ref{thm:intro-euclidean}.}
\label{tab:coreset-bounds}
\end{table}

We strengthen the construction of Huang, Li, and Wu~\cite{DBLP:conf/stoc/Huang0024} to obtain the following lower bound, which closes the gap throughout $0<\eps<k^{-1/(z+2)}$.

\begin{theorem}[Euclidean coreset lower bound]\label{thm:intro-small-error}
For every fixed real $z\ge1$, every integer $k\ge1$, and every $0<\eps<1$, there exist an integer $d\ge1$ and a finite dataset $X\subseteq\R^d$ such that every $\eps$-coreset for $(k,z)$-clustering of $X$ has size
\begin{equation*}
\widetilde\Omega\!\left(\min\left\{k^{\frac{2z+2}{z+2}}\eps^{-2},\,k\eps^{-z-2}\right\}\right).
\end{equation*}
\end{theorem}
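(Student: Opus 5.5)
First I reduce to the hard regime. When $\eps\gtrsim k^{-1/(z+2)}$ the minimum equals $k\eps^{-z-2}$, and the Huang--Li--Wu lower bound $\Omega(k\eps^{-z-2})$ already gives the statement. So assume $\eps<k^{-1/(z+2)}$; the target is then $\widetilde\Omega(k^{(2z+2)/(z+2)}\eps^{-2})$. Following the description in the abstract, I take the HLW hard instance, in which $k$ far-apart clusters each consist of a base point plus a collection of orthogonal offset directions. Let $\eps_0:=k^{-1/(z+2)}$ be the threshold error at which the HLW instance already forces $k\eps_0^{-z-2}=k^{(2z+2)/(z+2)}$ points. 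I then replace each offset $e_i$ of that instance by a group of $m\approx(\eps_0/\eps)^2$ distinct nearly parallel offsets, each in its own orthogonal direction, and scale so that every group has the same total mass and anchor cost as the original single offset. The resulting dataset has $n\approx k^{(2z+2)/(z+2)}\eps^{-2}$ points. I will show that any weighted subset of size at most $cn/\mathrm{polylog}$ fails.

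The query family is fixed. For each cluster and each sign vector $\sigma\in\{\pm1\}^{m}$ drawn from a four-wise independent family, I place a center at the base point shifted by a small amount $\delta$ in the direction $\sum_j\sigma_j u_j$, where the $u_j$ are the within-group offset directions. Points whose sign agrees with the shift are moved closer to the center and the others farther away. Expanding $d^z$ to second order in $\delta$ shows that the cost change of point $p$ is a first-order term $\sigma_{j(p)}\,\delta\,\cdot$(a function of $p$) plus lower-order terms that do not depend on $\sigma$. A query uses $k$ centers, one per cluster, combined with the HLW choice of which offset coordinates to target. Summed over clusters, the relative error of a weighted subset $S$ is then governed by a signed sum $\sum_{p}(w(p)-\mathbb 1[p\in X])\sigma_{j(p)}a_p$ of size comparable to $\eps\cost_z(X,C)$, with $a_p$ the first-order coefficient.

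The discrepancy step comes next. If $|S|\le n/\mathrm{polylog}$, most points of $X$ are missing from $S$, and the vector $v=w-\mathbb 1_X$ has $\ell_2$ mass at least $\Omega(\sqrt n)$ in the scaled coefficients no matter how the nonnegative weights are chosen. Nonnegativity matters here, because a missing point cannot be compensated with negative weight. Let $Y_\sigma=\sum v_p\sigma_{j(p)}a_p$. Four-wise independence gives $\E Y^2=\|va\|_2^2$ and $\E Y^4\le 3(\E Y^2)^2$. Paley--Zygmund then yields a fixed $\sigma$ in the family with $|Y_\sigma|\gtrsim\|va\|_2$, for every $S$ and $w$ simultaneously, since the family is fixed in advance. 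The mass and anchor-cost constraints force the zeroth-order and $\sigma$-independent terms to nearly cancel, and otherwise a query with $\delta=0$ or a scaled query already exhibits error. I handle this with the usual case split: either $w$ badly misestimates cluster mass or anchor cost, or the first-order term dominates. Finally I choose $\delta$ and $m$ so that $\|va\|_2/\cost_z(X,C)>\eps$ whenever $|S|\lesssim n/\mathrm{polylog}$. This balancing is where $m\approx(\eps_0/\eps)^2$ comes from. The dimension $d$ is the total number of orthogonal directions used.

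The main obstacle is controlling the second-order and higher Taylor terms of $d^z$ uniformly over the $k$ centers and across clusters, while keeping $\delta$ large enough for the first-order discrepancy to exceed $\eps$. Points of different clusters also must not be served by the wrong center. I would try to choose $\delta$ as a small power of $\eps/\eps_0$ times the offset length and invoke the HLW estimates for within-cluster expansion. Tracking the polylog losses from the failed-cluster counting argument should give the $\widetilde\Omega$.
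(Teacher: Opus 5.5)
There is a genuine gap: as described, your query family and discrepancy step cannot certify more than roughly $k\eps^{-2}$ points, so the final balancing does not reach $k^{(2z+2)/(z+2)}\eps^{-2}$.

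\textbf{The query family is too weak.} Suppose every query places one center in each of $k$ far-apart clusters. Then the union of per-cluster strong $(1,z)$-coresets, each of size $\widetilde O(\eps^{-2})$, has relative error at most $\eps$ on every query in the family. So no argument restricted to that family can force a larger coreset. The HLW instance is not of this form. Its superlinear factor comes from about $n^{z/2}$ copies, each with $n\asymp k^{2/(z+2)}$ spheres, where one center per offset label serves that label on all $n$ spheres. Also, at $\eps_0$ it has $k\eps_0^{-z-2}=k^2$ points, not $k^{(2z+2)/(z+2)}$.

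\textbf{The discrepancy step is too weak.} You use one four-wise independent family over all points and the global bound $|Y_\sigma|\gtrsim\|va\|_2$. Against arbitrary nonnegative weights, this guarantees only a signal of order $\sqrt N$ times the per-point perturbation, on $N$ active points of total cost $\Omega(N)$ times the baseline. The certified relative error is therefore $O(1/\sqrt N)$.

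\textbf{The per-point signal shrinks with $m$.} Encoding $m$ independent signs by a displacement $\delta\sum_j\sigma_ju_j$ caps a point's relative perturbation at $2\eta\delta/(\eta^2+\delta^2m)\le1/\sqrt m$. A $\delta$ that decreases with $\eps/\eps_0$ only makes this worse. In the paper, the per-point relative perturbation stays at the HLW scale $\Theta(1/(\sqrt n L^2))$ independent of $q$, and the entire gain comes from the $1/\sqrt q$ discrepancy factor.

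\textbf{The missing idea.} You need each replacement to carry a \emph{product} sign: a sign chosen freely per sphere--label group, times a four-wise independent sign within the group. This must use few centers shared across spheres. The paper achieves it with projection levels:
\begin{itemize}
\item It takes unit vectors $x_s$ and directions $v\in V$ with $\langle x_s,v\rangle\in\{1-2\ell/L\}$.
\item It uses one center $c_{i,\ell}(v,\sigma)$ per label--level pair, so only $m(L+1)$ active centers are needed.
\item The center's sphere-axis component $-(-1)^\ell\sum_j\sigma_{ji}e_j/(16L^2)$ is amplified by $\sqrt n\,e_j$ into the exact perturbation $\frac{\sqrt n}{8L^2}\sigma_{ji}(-1)^{\ell(s;v)}$ on a baseline $\Theta(n)$.
\item The level penalty $\sqrt n(t_{\ell'}-t_\ell)^2\ge 4\sqrt n/L^2$ protects the nearest-center assignments.
\end{itemize}
Aligning each $\sigma_{ji}$ with its group sum and averaging $\sum_{j,i}|Y_{ji}(v)|$ over $v$ gives the mixed-norm bound $\sum_{j,i}\|A_{ji}\|_2\ge|P_1|/(2\sqrt q)$ with one common $v$. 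Your global bound gives only about $\sqrt{|P_1|}$; the mixed-norm bound is what yields relative error $1/(\sqrt{nq}\,L^2)$.

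Subtracting $C(v,\sigma)$ and $C(v,-\sigma)$ also cancels all baselines and inactive costs exactly, so your mass/anchor case split is unnecessary. Simply adding your $u_j$-shift to HLW's sphere-axis shift would not repair the argument. The perturbation becomes additive, $\alpha\tau_j+\beta\sigma_s$. A coreset whose weight array for each label has the row and column sums of the all-ones $n\times q$ array, which is achievable with support $n+q-1$, has zero error on all such queries.
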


\noindent
Combining this theorem with the known upper bounds~\cite{CohenAddadLSS22,DBLP:conf/stoc/Huang0024} gives the optimal size.
\begin{corollary}[Optimal Euclidean coreset size]\label{thm:intro-euclidean}
For every fixed real $z\ge1$, the lower bound in Theorem~\ref{thm:intro-small-error} is tight up to polylogarithmic factors for every integer $k\ge1$ and $0<\eps<1$, over datasets of unrestricted dimension.
\end{corollary}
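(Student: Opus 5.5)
The statement to prove is Corollary~\ref{thm:intro-euclidean}. Its content is that the lower bound of Theorem~\ref{thm:intro-small-error} is matched by an upper bound of the same form, up to polylogarithmic factors. So the proof reduces to assembling known upper bounds and checking that their minimum agrees with the lower-bound expression. The plan is to show that for every $k\ge1$ and $0<\eps<1$, every finite $X\subseteq\R^d$ admits an $\eps$-coreset of size
\[
\widetilde O\bigl(\min\{k^{\frac{2z+2}{z+2}}\eps^{-2},\,k\eps^{-z-2}\}\bigr),
\]
with no dependence on $d$. Combined with Theorem~\ref{thm:intro-small-error}, the worst case over datasets of unrestricted dimension is then pinned down to $\widetilde\Theta$ of that minimum.

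For the upper bound I will invoke two dimension-independent constructions and, for each pair $(k,\eps)$, take whichever is smaller.
\begin{itemize}
\item The first is the Huang--Li--Wu bound $\widetilde O(k^{(2z+2)/(z+2)}\eps^{-2})$~\cite{DBLP:conf/stoc/Huang0024}, valid for every fixed $z\ge1$.
\item The second is a bound of the form $\widetilde O(k\eps^{-z-2})$ from the dimension-independent line of work~\cite{CohenAddadLSS22}. If that paper states it as $\widetilde O(k\eps^{-2}\min\{\eps^{-z},k\})$ or in a similar form, then the $k\eps^{-z-2}$ branch can be read off directly.
\end{itemize}
Since we may use whichever construction is smaller, the achievable size is the minimum of the two expressions. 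Both constructions are dimension-independent, which is essential because Theorem~\ref{thm:intro-small-error} allows $d$ to be as large as needed.

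It remains to compare the upper and lower expressions. They are literally the same minimum, so tightness holds up to the polylogarithmic factors hidden in the tildes on both sides. These factors are polynomial in $\log(2+k)$ and $\log(2+1/\eps)$, following the convention in the paper's footnote. The crossover between the two branches is $k^{(2z+2)/(z+2)}\eps^{-2}=k\eps^{-z-2}$, that is, $\eps=k^{-1/(z+2)}$, which is exactly the threshold in Problem~1. I will also check the edge cases: $k=1$, where both branches are $\Theta(\eps^{-2})$ or are ordered consistently, and $\eps$ near $1$, where the lower bound is $\widetilde\Omega(k)$ and a trivial argument already gives at least about $k$ points.

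The main obstacle is bookkeeping: confirming that the cited upper bounds hold for all real $z\ge1$ (not only $z\in\{1,2\}$) and for all $0<\eps<1$. They must also be dimension-independent and have only polylogarithmic overhead compatible with the tilde convention. If some cited branch carries an extra factor, such as $\log(1/\eps)$ or $\log k$ powers, it is still absorbed by $\widetilde O$. If a branch were stated only for restricted $z$, I would fall back on the general-$z$ bounds of~\cite{DBLP:conf/stoc/Huang0024}, which cover both branches for every fixed $z\ge1$.
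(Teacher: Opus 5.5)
Your proposal is correct and is essentially the paper's own argument: pair Theorem~\ref{thm:intro-small-error} with the dimension-independent upper bounds $\widetilde O(k\eps^{-z-2})$ from~\cite{CohenAddadLSS22} and $\widetilde O(k^{(2z+2)/(z+2)}\eps^{-2})$ from~\cite{DBLP:conf/stoc/Huang0024}, whose minimum coincides with the lower-bound expression up to polylogarithmic factors. Your fallback claim that~\cite{DBLP:conf/stoc/Huang0024} alone supplies both branches is unnecessary, and the paper does not use it, since it attributes the $k\eps^{-z-2}$ branch to~\cite{CohenAddadLSS22}.
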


\noindent
In particular, the optimal size is $\widetilde\Theta(k^{4/3}\eps^{-2})$ for $k$-median when $\eps<k^{-1/3}$, and $\widetilde\Theta(k^{3/2}\eps^{-2})$ for $k$-means when $\eps<k^{-1/4}$.
Up to polylogarithmic factors, the additional dependence on $k$ remains necessary throughout this regime, including for arbitrarily small $\eps$.
For example, for $k$-means at $\eps=k^{-1}$, the lower bound increases from $\Omega(k^3)$~\cite{CohenAddadLSS22} to $\widetilde\Omega(k^{7/2})$.

For doubling metrics, we improve both the upper and lower bounds, obtaining the following characterization.

\begin{theorem}[Optimal coreset size in doubling metrics]\label{thm:intro-doubling}
For every fixed real $z\ge2$, there are constants $C_z>0$ and $\eps_z\in(0,1)$ such that, for every integer $k\ge1$, every $0<\eps\le\eps_z$, and every $D\ge C_z\log(2/\eps)$, the optimal worst-case size of a strong $\eps$-coreset for $(k,z)$-clustering, over finite positively weighted inputs in metric spaces of doubling dimension at most $D$, is
\begin{equation*}
\widetilde\Theta\!\left(kD\min\{k^{1-2/z}\eps^{-2},\,\eps^{-z}\}\right).
\end{equation*}
\end{theorem}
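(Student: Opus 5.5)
The theorem combines an upper bound with a matching lower bound. I would prove the two sides separately and then check that the minima agree.

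\emph{Upper bound.} The branch $\widetilde O(kD\eps^{-z})$ is already supplied by the framework of Cohen-Addad, Saulpic, and Schwiegelshohn~\cite{cohenaddad2021newframework}, so the new work is the branch $\widetilde O(k^{2-2/z}D\eps^{-2})$. I would follow the plan sketched in the abstract.

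First, compute an $O(1)$-approximate solution with centers $a_1,\dots,a_k$ and partition $X$ into clusters and geometric rings in the standard way. Rings with negligible cost are handled by the usual grouping argument.

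Second, replace each group of points by an \emph{independent unbiased two-point summary}. This is a random pair of points with weights chosen so that the summary exactly preserves the group's total mass and its anchor cost $\sum_p d^z(p,a_i)$, and so that its cost on any fixed center set $C$ is unbiased for the true cost.

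Third, bound the error by a chaining argument over nets of center sets. In doubling dimension $D$ these nets have $\log$-size $\widetilde O(kD)$ per scale.

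The key variance computation is as follows. Preserving mass and anchor cost exactly cancels the first-order terms of $d^z(p,C)-d^z(p,a_i)$. The remaining fluctuation of a cluster whose points are far from $C$ is then controlled relative to $\cost_z(X,C)$. Balancing near clusters against far clusters via H\"older's inequality with exponents $z/2$ and $z/(z-2)$ should give a variance proxy of order $k^{1-2/z}$ times the usual one, which yields size $k\cdot k^{1-2/z}D\eps^{-2}$. To remove extra logarithmic factors in $1/\eps$ and the dependence on $n$, I would finish with iterative size reduction: apply the construction repeatedly to the current weighted coreset, with geometrically decreasing error parameters.

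\emph{Lower bound.} Fix the regime $\eps\le\eps_z$ and $D\ge C_z\log(2/\eps)$. The case $\eps\gtrsim k^{-1/z}$, where $\eps^{-z}$ is the smaller term, follows from the construction of Huang, Li, and Wu~\cite{DBLP:conf/stoc/Huang0024}, possibly after adapting it. The new case $\eps\lesssim k^{-1/z}$ needs $\widetilde\Omega(k^{2-2/z}D\eps^{-2})$.

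I would build $k$ well-separated gadgets, each containing $m\approx k^{1-2/z}D\eps^{-2}$ points, and define an explicit metric on them in which each candidate center encodes $\Theta(D)$ signs $\sigma\in\{\pm1\}^D$. Distances would be set to $d(p,c_\sigma)^z=r^z(1+\delta\langle\sigma,v_p\rangle)$ at a fixed scale $r$, independent of $D$.

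Three things must then be verified. The space must satisfy the triangle inequality, which holds since $\delta$ is small. Its doubling dimension must be $O(D)$, which uses $D\ge C_z\log(2/\eps)$ to absorb the $1/\eps$ many distance levels. Finally, queries must combine one sign-encoded center in some gadgets with far centers elsewhere, so that the total cost is concentrated enough for a relative discrepancy of about $\delta\sqrt{m}/\text{(mass)}$ to exceed $\eps$.

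The discrepancy step goes as in the Euclidean part. Take four-wise independent random signs. For any small subset $S$ with nonnegative weights, a second- and fourth-moment (Paley--Zygmund) argument shows that some query incurs relative error greater than $\eps$. This forces $|S|\gtrsim km$.

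\emph{Main obstacle.} I expect the hardest step to be the upper-bound variance analysis that achieves exactly $k^{1-2/z}$, together with making chaining work with the two-point summaries and without $\log n$ losses. On the lower-bound side, the delicate part is keeping the doubling dimension $O(D)$ while prescribing $\Theta(D)$ independent signs per center. The final step is to check that the two bounds' minima coincide up to polylogarithmic factors.
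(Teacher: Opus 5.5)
Your proposal has a genuine gap on the lower-bound side, in two places.

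\emph{First}, the $\eps^{-z}$ branch does not follow from Huang, Li, and Wu for all $\eps\gtrsim k^{-1/z}$. Their $\Omega(kD\eps^{-z}/\log k)$ bound requires $\eps\gtrsim k^{-1/(z+2)}$. When $z>2$ and $k^{-1/z}\lesssim\eps<k^{-1/(z+2)}$, the best earlier bound is $\Omega(k^{(2z+1)/(z+2)}D\eps^{-1})$. This is smaller than the target $kD\eps^{-z}$ by the polynomial factor $(k^{1/(z+2)}\eps)^{z-1}<1$. Moreover, their $D$ is whatever dimension the construction produces, not an arbitrary prescribed $D\ge C_z\log(2/\eps)$. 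So a new construction must cover that intermediate range as well.

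\emph{Second}, the gadget construction you sketch cannot produce the factor $k^{1-2/z}$. Suppose an active gadget of $m$ points is served by a single sign-encoded center at scale $r$. Then the paired queries $\pm\sigma$ certify an error of at most $O(\delta r^z\|\sum_p(w_p-1)v_p\|_1)$: only $\Theta(D)$ signed groups can be aligned, each contributing roughly the square root of its size. For a typical half-sample with doubled weights, this is about $\delta r^z\sqrt{Dm}$ up to logarithms, against an active cost of order $r^zm$. Concentrating the total cost on the active gadgets only avoids dilution; it cannot raise the relative error above roughly $\delta\sqrt{D/m}$ with $\delta\lesssim1$. So the argument forces only $m\lesssim D\eps^{-2}$ per gadget and total size $kD\eps^{-2}$, which is the old bound.

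The missing idea is an asymmetry between active and inactive copies. In the paper, every copy has $m$ labels, each with $n\asymp D$ groups of $q$ points, all within unit distance of an anchor. An inactive copy is served by its anchor alone, at unit cost per point. An active copy instead spends one center per label, at distance $R-\sigma_{bi}^{(j)}\chi_s$ from its own label's points and $R+1$ from the other labels. Because the copy lies in a unit ball, the triangle inequality caps the perturbation at $\pm1$. The relative error is therefore only $1/(R\sqrt q)$, which forces $q\asymp\eps^{-2}R^{-2}$. But with $g$ active copies, $gm\asymp k$, and $B\asymp gR^z\asymp k$ copies for cost balance, one gets $m\asymp R^z$ and $|X|=Bnmq\asymp kD\eps^{-2}R^{z-2}$, which grows with $R$. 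Taking $R^z\asymp\min\{k,\eps^{-z}\}$ yields both branches from one construction; the $\eps^{-z}$ branch uses $R\asymp\eps^{-1}$, $q\asymp1$, and $g\asymp k\eps^z$. The bounds $m\le\eps^{-z}$ and $q\le\eps^{-2}$ keep every component of size $2^{O(n)}$, so no $\log k$ enters the dimension condition.

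Your upper-bound sketch also omits the computation that produces $k^{1-2/z}$. Write $M_i=\sum_{p\in P_i}u_p$, $V_i=\cost_z(P_i,A)$, and $V=\sum_iV_i$. The construction allocates $s_i\asymp1+T(V_i/V)^{2/z}$ summaries to each anchor cluster; no rings are needed. The two preserved totals and H\"older bound every draw deterministically by $V_i+d(a_i,C)^{z-1}M_i^{1-1/z}V_i^{1/z}$. Hoeffding then applies with $T\asymp\eps^{-2}k(D\log(2/\eps)+\log(2|X|))$ coming from the net, and concavity gives $\sum_i(V_i/V)^{2/z}\le k^{1-2/z}$. Anchor-cost preservation does not cancel any first-order term in a general metric. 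Its roles are this deterministic range bound and absorbing the net error.
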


\noindent
The new upper bound is $\widetilde O(k^{2-2/z}D\eps^{-2})$, valid for all $D\ge1$ and $0<\eps<1$; combining it with the existing $\widetilde O(kD\eps^{-z})$ bound~\cite{cohenaddad2021newframework} gives the combined upper bound in the theorem.
Our lower bound matches it for every prescribed $D$ under the stated conditions, using an ambient space with doubling dimension between $D/8$ and $D$.

For $z>2$, the new upper branch is smaller than the previous $k^2D\eps^{-2}$ branch~\cite{cohenaddad2021newframework} by a factor of $k^{2/z}$.
For sufficiently small $\eps$, our lower bound improves on the previous $\Omega(kD\eps^{-2})$ bound~\cite{CohenAddadLSS22} by a factor of $k^{1-2/z}$.

The two branches of the optimal-size bound meet at $\eps=k^{-1/z}$.

\subsection{Technical Overview}

Starting from the separated-copy framework of Huang, Li, and Wu~\cite{DBLP:conf/stoc/Huang0024}, we replace individual offsets by distinct points and design a fixed family of queries with $k$ centers. The challenge is to ensure that every sufficiently small subset fails some query under any nonnegative reweighting. Shared centers realize four-wise independent sign tests within the center budget; for general $z$, we calibrate perturbations so that powered-cost differences retain a common coefficient. Doubling metrics allow direct distance assignments, letting a center test more groups without increasing the distance scale. For the upper bound, we average unbiased two-point summaries that exactly preserve cluster mass and anchor cost.

\paragraph{Revisiting the previous lower-bound construction.}
At $\eps=\Theta(k^{-1/4})$, Huang, Li, and Wu~\cite{DBLP:conf/stoc/Huang0024} construct $\Theta(k^2)$ points in $\Theta(\sqrt k)$ far-apart copies $P_b$. Within each copy, $j$ indexes spheres and $i$ labels offsets. Suppressing $b$, the local points are
\begin{equation*}
 p_{j,i}=k^{1/4}e_j+u_i,
\end{equation*}
where $e_j$ is a coordinate direction, $k^{1/4}e_j$ is sphere $j$'s center, and $u_i$ is a unit offset. Each copy has $\Theta(\sqrt k)$ spheres and $\Theta(k)$ almost orthogonal offsets. Each offset occurs on $\Theta(\sqrt k)$ spheres, at most once per sphere, giving $\Theta(k^{3/2})$ points per copy.

Suppose a putative coreset $(S,w)$ retains at most half the dataset. Choose an \emph{active copy} $P_1$ with $|S_1|\le |P_1|/2$, where $S_1=S\cap P_1$. The query $C$ uses one center $c_i$ for all occurrences of $u_i$ in $P_1$. The term $-2k^{1/4}\langle e_j,c_i\rangle$ turns a constant change in $c_i$'s $e_j$-coordinate into a $\Theta(k^{1/4})$ cost change on sphere $j$. The other coordinates keep the center norms and $\langle u_i,c_i\rangle$ fixed; the offset design preserves nearest-center assignments. Thus
\begin{equation*}
 d^2(p,C)=
 \begin{cases}
 B,&p\in S_1,\\
 B-\gamma,&p\in P_1\setminus S_1,
 \end{cases}
 \qquad B=\Theta(\sqrt k),\quad\gamma=\Theta(k^{1/4}).
\end{equation*}
The perturbation can be reversed. The other copies are served by their sphere centers, at unit squared cost per point, and the combined query respects the center budget~\cite[Claim~2.2]{DBLP:conf/stoc/Huang0024}.

A distant-center query forces $\sum_{p\in S_1}w(p)=(1\pm O(\eps))|P_1|$~\cite[Claim~2.1]{DBLP:conf/stoc/Huang0024}. Orienting the cost difference to avoid cancellation with the inactive error gives absolute error $\Omega(\gamma|P_1|)-O(\eps B|P_1|)$. Since the active and total costs are both $\Theta(B|P_1|)$, the relative error is at least
\begin{equation*}
 \Omega(\gamma/B)-O(\eps)=\Omega(k^{-1/4})-O(\eps).
\end{equation*}
This proves the $\Omega(k^2)$ bound for a sufficiently small constant in $\eps=\Theta(k^{-1/4})$: one center controls an offset's occurrences on many spheres.

\paragraph{Failed attempts.}
Below $\eps=k^{-1/4}$, the target polynomial size exceeds $k^2$. Repeating each point $q\asymp k^{-1/2}\eps^{-2}$ times gives a multiset $X_{\mathrm{rep}}$ of size $\Theta(k^2q)$ and suggests the required error scale. Draw $N=|X_{\mathrm{rep}}|$ uniform samples with replacement. Each location's count has mean $q$ and standard deviation $\Theta(\sqrt q)$. An active copy has $\Theta(k^{3/2})$ locations with cost differences of order $k^{1/4}$, while the total cost is of order $k^2q$. This suggests
\begin{equation*}
 \frac{k^{1/4}\cdot k^{3/2}\sqrt{q}}{k^2q}
 =\frac{k^{-1/4}}{\sqrt q}
 \asymp\sqrt{\frac{k^{3/2}}{N}}.
\end{equation*}
Equating this expression to $\eps$ gives $N\asymp k^{3/2}\eps^{-2}$. This calculation only suggests the scale: one representative of weight $q$ at each location preserves every query cost exactly. Sampling fluctuations therefore do not establish a lower bound against arbitrary reweighting.

To prevent identical points from collapsing into one representative, a second attempt replaces $u_i$ by distinct offsets $u_{i,s}$:
\begin{equation*}
 p_{j,i,s}=k^{1/4}e_j+u_{i,s},\qquad s\in[q].
\end{equation*}
One center per replacement requires $\Theta(kq)$ active centers, exceeding $k$ as $q$ grows. If instead the offsets with $s\in G\subseteq[q]$ share $c_{i,G}$ across spheres, their distance to that center contains the common term
\begin{equation*}
 -2k^{1/4}\langle e_j,c_{i,G}\rangle
 \qquad\text{for every }s\in G.
\end{equation*}
Changing $\langle e_j,c_{i,G}\rangle$ cannot reproduce the costs $B$ and $B-\gamma$ for retained and omitted replacements in the same group while the other distance terms remain fixed. A common cost change on this group changes the coreset error by that amount times $\sum_{s\in G}(w(p_{j,i,s})-1)$, with $w=0$ on omitted points. We therefore need a family of groupings, each using few centers, such that for every reweighting, at least one grouping detects the aggregate discrepancy. The following properties make this requirement precise.

\paragraph{Identifying the properties needed from the new offsets.}
We keep equal-sized, far-apart copies. For any $(S,w)$ retaining at most half of the new dataset $X$, choose an active copy $P_1$ with at least half its points omitted. We require:
\begin{enumerate}
 \item[P1.] \emph{Distinct replacements.} Each label's $q$ replacements on a sphere are distinct, preventing compression of repeated points into a single representative.
 \item[P2.] \emph{Few centers per grouping.} Each label admits partitions into polylogarithmically many groups, with one center per group shared across spheres.
 \item[P3.] \emph{Balanced costs.} Every query satisfies $\cost_2(P_1,C)\asymp\cost_2(X,C)\asymp |X|$, so inactive costs do not dilute the active error.
 \item[P4.] \emph{Error under arbitrary reweighting.} Some query has absolute coreset error $\widetilde\Omega(k^{1/4}|P_1|/\sqrt q)$, assuming only that at least half of $P_1$ is omitted, not half of each group.
\end{enumerate}
The dataset and query family are fixed before $(S,w)$; only the active copy and witnessing query depend on it. With $\Theta(\sqrt k)$ copies, P3 turns P4 into the relative-error scale suggested by repetition.

\paragraph{Constructing the dataset and sharing centers.}
Keep $b$ for copies, $j$ for spheres, and $i$ for labels; $s$ indexes a label's replacements. Take $n$ far-apart copies, each with $n$ spheres and $m$ labels. Now choose the $u_i$ to be orthonormal, and replace each label's single offset by $\{u_i+x_s:s\in[q]\}$. The local configuration is
\begin{equation*}
 P=\{p_{j,i,s}=\sqrt n\,e_j+u_i+x_s:j\in[n],\ i\in[m],\ s\in[q]\}.
\end{equation*}
The $e_j$, the $u_i$, and the distinct unit vectors $x_s$ lie in three mutually orthogonal subspaces; the $x_s$ need not be mutually orthogonal. Every offset now occurs on every sphere. All triples are distinct, each at distance $\sqrt2$ from its sphere center $\sqrt n\,e_j$, so $|P|=nmq$ and $|X|=n^2mq$, giving P1.

Fix $(S,w)$ and an active copy $P_1$ as above; use local coordinates for its points and query centers. Lemma~\ref{lem:euclidean-code} supplies the $x_s$ and a fixed family $V$ of directions with
\begin{equation*}
 \langle x_s,v\rangle=t_{\ell(s;v)},\qquad
 t_\ell=1-\frac{2\ell}{L},\qquad 0\le\ell\le L,
\end{equation*}
where $\ell(s;v)$ is the level index of $x_s$, and $L=O(\log(q+1))$. One center $c_{i,\ell}$ will serve all points of label $i$ and level $\ell$ across spheres. Inactive copies use their $n(n-1)$ sphere centers, giving the budget
\begin{equation*}
 n(n-1)+m(L+1)\le k.
\end{equation*}
Thus a query needs $m(L+1)$ active centers instead of $mq$. Changing $v$ changes the grouping without increasing this count. A fixed partition with equal costs within each sphere--label--level group could be compressed to one weighted representative per group. To ensure that some grouping detects omissions, Lemma~\ref{lem:euclidean-code} also makes $(-1)^{\ell(s;v)}$ unbiased four-wise independent signs for uniform $v\in V$. The discrepancy bound below guarantees that, for any retained weights, at least one of these groupings detects omissions.

\begin{figure}[t]  
  \centering
    \includegraphics[width=\textwidth]{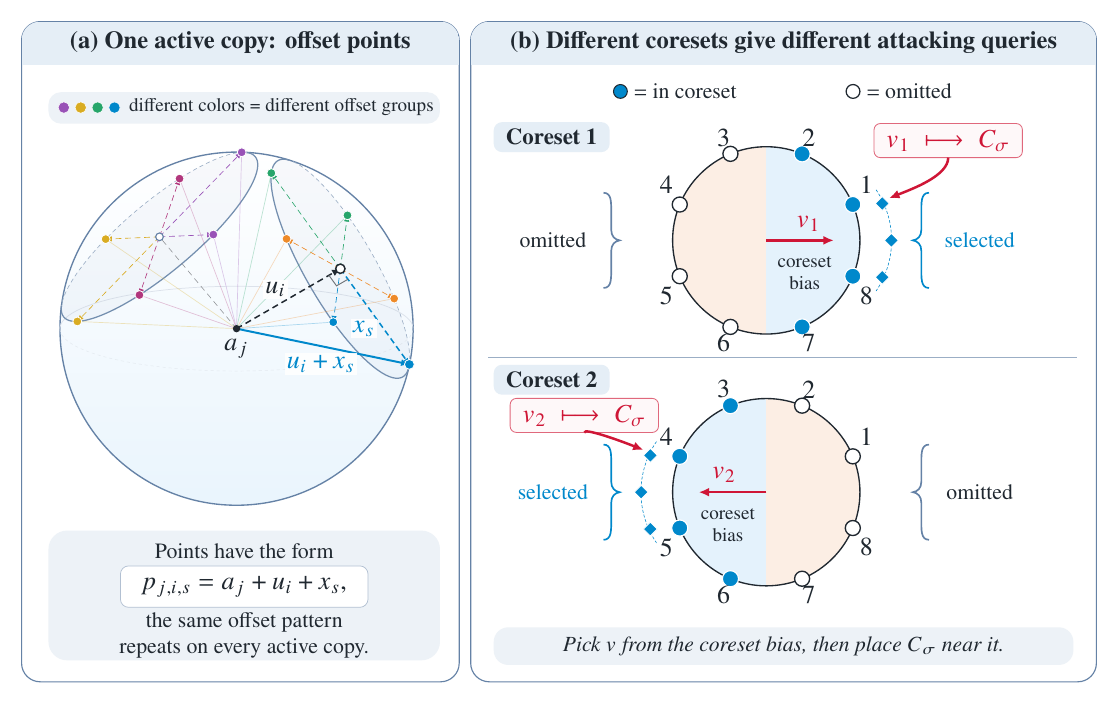}
  \caption{The Euclidean lower-bound construction.
(a) Points have the form $p_{j,i,s}=a_j+u_i+x_s$.
Different colors indicate different offset groups.
The same offset pattern repeats across spheres and copies.
(b) Two candidate coresets select different sets of four points from
the same eight-point group. Filled circles are selected points, each
with weight $2$; hollow circles are omitted points.
For each candidate, we choose a direction $v\in V$ and signs $\sigma$
to obtain a center set $C_\sigma=C(v,\sigma)$ from the fixed query family.
The diamonds illustrate how the centers change with $v$ and $\sigma$.
The query detects a gap in total weighted cost; it need not make
every selected point cheaper than every omitted point.
Only a few centers are shown, and the drawing is not to scale.}
\label{fig:euclidean-instance}
\end{figure}

\paragraph{Realizing level costs with shared centers.}
For signs $\sigma=(\sigma_{ji})\in\{-1,1\}^{n\times m}$, let $C(v,\sigma)$ contain the active centers, the inactive sphere centers, and remote centers padding the count to $k$. The active costs are
\begin{equation}\label{eq:intro-cost-identity}
 d^2(p_{j,i,s},C(v,\sigma))
 =D_{\ell(s;v)}+\frac{\sqrt n}{8L^2}\sigma_{ji}(-1)^{\ell(s;v)},
 \qquad D_\ell=\Theta(n),
\end{equation}
where the $D_\ell$ are the baselines in Eq.~\eqref{eq:pointwise_cost}, independent of $v,\sigma$. The sphere-axis component of $c_{i,\ell}$ is proportional to $-(-1)^\ell\sum_j\sigma_{ji}e_j$, encoding one sign per sphere without changing its norm when signs are reversed. The label component separates labels. For a point at level $\ell$, the projection component and an orthogonal norm adjustment make a competing level $\ell'$ incur the penalty $\sqrt n(t_{\ell'}-t_\ell)^2$. With grid spacing $2/L$, this penalty dominates any change in the signed term. Claim~\ref{clm:euclidean-nearest-centers} verifies that $c_{i,\ell(s;v)}$ remains nearest for every $\sigma$, giving P2. Inactive points have squared cost $2$, so the active and total costs are both $\Theta(n|P_1|)=\Theta(|X|)$, giving P3.

\paragraph{Defeating arbitrary reweighting with opposite queries.}
Extend $w$ by zero to omitted points, and write $\Delta(C)=\cost_2(S,C)-\cost_2(X,C)$. The queries $C(v,\sigma)$ and $C(v,-\sigma)$ have identical baselines and inactive costs. Subtracting their signed errors gives
\begin{equation*}
 \Delta(C(v,\sigma))-\Delta(C(v,-\sigma))
 =\frac{\sqrt n}{4L^2}\sum_{j,i}\sigma_{ji}
 \sum_s(w(p_{j,i,s})-1)(-1)^{\ell(s;v)}.
\end{equation*}
The common baselines and all inactive contributions cancel exactly, without an upper bound on the retained weights. For each fixed $v$, choose $\sigma_{ji}$ to match the sign of its inner sum. Lemma~\ref{lem:euclidean-code} then supplies one common direction with
\begin{align*}
 \max_{v\in V}\sum_{j,i}\left|\sum_s(w(p_{j,i,s})-1)(-1)^{\ell(s;v)}\right|
\ge\frac1{\sqrt3}\sum_{j,i}\left(\sum_s(w(p_{j,i,s})-1)^2\right)^{1/2}\ge\frac{|P_1|}{2\sqrt{3q}}.
\end{align*}
For the last inequality, each omitted point contributes $1$ to a squared discrepancy. Each $q$-point group's norm is at least its number of omissions divided by $\sqrt q$, and $P_1$ omits at least half its points. The common direction makes the aggregate discrepancy large; individual groups need not attain their maxima at that direction.

The opposite queries therefore have signed errors differing by $\Omega(\sqrt n\,|P_1|/(L^2\sqrt q))$. At least one query has absolute error at least half this difference. Its true cost is $\Theta(n|P_1|)$, giving relative error $\Omega(1/(\sqrt{nq}\,L^2))$. For $n\asymp\sqrt k$, this proves P4 and recovers the scale suggested by repetition, now for every reweighting. Proposition~\ref{prop:query-discrepancy} formalizes this argument.

\paragraph{Choosing parameters to obtain the lower bound.}
In the main parameter range, the center budget permits $n\asymp\sqrt k$ and $m\asymp k/L$. The relative-error bound then allows $q=\widetilde\Theta(k^{-1/2}\eps^{-2})$, giving $|X|=n^2mq=\widetilde\Theta(k^{3/2}\eps^{-2})$. Every coreset must retain more than half of these points. Section~\ref{sec:center-set} makes these parameter choices precise and handles the integer boundary cases.

\paragraph{Extension to a general power $z$.}
For $z\ne2$, raising Eq.~\eqref{eq:intro-cost-identity} to $z/2$ produces a level-dependent difference between opposite queries. Since $\ell(s;v)$ changes with $v$, the coefficient of $(-1)^{\ell(s;v)}$ in the error sum would vary with the direction. The discrepancy lemma requires coefficients that do not depend on the direction. Constant-factor estimates do not suffice for this signed sum.

Choose a sphere-axis displacement $\delta_\ell$ for each level so that
\begin{equation*}
 (D_\ell+2\sqrt n\,\delta_\ell)^{z/2}
 -(D_\ell-2\sqrt n\,\delta_\ell)^{z/2}
 =\frac{n^{(z-1)/2}}{32L^2}.
\end{equation*}
Readjusting the orthogonal center component preserves $D_\ell$. Monotonicity and the mean value theorem give the required displacement without changing the nearest-center assignments; see Section~\ref{sec:euclidean-extensions}. The powered-cost difference is now a fixed coefficient times $\sigma_{ji}(-1)^{\ell(s;v)}$, so the same discrepancy argument applies.

An active point costs $\Theta(n^{z/2})$. Using $\lceil n^{z/2}\rceil$ copies, each with $n$ spheres, preserves the cost balance and uses $O(n^{(z+2)/2})$ inactive centers. The relative-error scale remains $1/(\sqrt{nq}\,L^2)$. Hence
taking $n\asymp k^{2/(z+2)}$, $L\asymp \log(2+k^{-1/(z+2)}\eps^{-1})$, $m\asymp k/L$, and $q\asymp\eps^{-2}/(nL^4)$ gives
\begin{equation*}
 |X|=\lceil n^{z/2}\rceil nmq
 =\widetilde\Theta(k^{(2z+2)/(z+2)}\eps^{-2}).
\end{equation*}
Section~\ref{sec:euclidean-extensions} completes the parameter choices and the larger-error case of Theorem~\ref{thm:intro-small-error}.

\paragraph{Lower bounds in doubling metrics (Theorem~\ref{thm:intro-doubling}).}
Doubling metrics allow more flexible distance assignments than Euclidean
space. This lets one center control more groups without increasing
its distance scale.
Take $B$ separated copies, each with $nmq$ distinct points $p_{b,j,i,s}$
at unit distance from an anchor. Each label $i\in[m]$ has $n$ groups
of $q$ points, indexed by $j$.
An active copy uses one center per label, with
\begin{equation*}
 d(p_{b,j,i,s},c_{b,i})=R-\sigma_{bi}^{(j)}\chi_s,\qquad R\ge4.
\end{equation*}
Here $\sigma_{bi}\in\{-1,1\}^n$, and $\chi\in\mathcal H$ ranges over a
fixed polynomial-size family supporting unbiased, four-wise independent signs.

Other labels' centers have distance $R+1$, so $c_{b,i}$ serves all
its label's groups without projection levels. Inactive copies use
their anchors.

The dataset and candidate centers are fixed before
seeing $(S,w)$.

For a candidate $(S,w)$ with $|S|\le|X|/2$, activate the $g$ copies with most omissions.
They omit at least half their points in total, not necessarily in
each group. Subtract the signed errors of $C^\pm=C(\pm\sigma,\chi)$.
Inactive contributions cancel; each active $p=p_{b,j,i,s}$ satisfies
\begin{equation*}
 d^z(p,C^-)-d^z(p,C^+)
 =[(R+1)^z-(R-1)^z]\sigma_{bi}^{(j)}\chi_s.
\end{equation*}
The coefficient is constant across groups, so no calibration is needed.
For fixed $\chi$, choose $\sigma$ so that group contributions add
rather than cancel.
Lemma~\ref{lem:doubling-signs} gives one common $\chi$ with error gap
$\Omega(gR^{z-1}nm\sqrt q)$ for arbitrary nonnegative retained weights.
One query has absolute error at least half this gap and true cost
$O((gR^z+B)nmq)$. Balancing $B\asymp gR^z$ therefore gives relative
error $\Omega(1/(R\sqrt q))$, rather than the Euclidean scale
$1/\sqrt{nq}$ up to logarithmic factors. Increasing $n$ adds points
without weakening this error.

Counting all $nmq$ inputs and $m2^n|\mathcal H|$ candidates per copy
gives dimension $O(n+\log(m+1)+\log(q+1))$, since
$|\mathcal H|=\operatorname{poly}(q+1)$.
A radius-two ball contains $2^n$ candidates requiring separate
radius-one balls, giving dimension $\Omega(n)$.
The choices below satisfy $m\lesssim\eps^{-z}$ and $q\lesssim\eps^{-2}$.
Thus $D\gtrsim\log(2/\eps)$ allows $n\asymp D$ while keeping
dimension at most $D$.
Joining copies and padding points along a line avoids dependence
on their number.

Choose $q\asymp\eps^{-2}R^{-2}$ with a small constant to make the
relative error exceed $\eps$.
Allocate $B\asymp k$ and $gm\asymp k$ within the center budget
$B+g(m-1)\le k$.
Together with cost balance, this gives $m\asymp R^z$, $g\asymp k/R^z$, and
\begin{equation*}
 |X|=Bnmq\asymp kD\eps^{-2}R^{z-2},\qquad
 R^z\lesssim\min\{k,\eps^{-z}\},
\end{equation*}
where the restrictions follow from $g,q\ge1$.
Maximizing $R$ yields
\begin{equation*}
 |X|\asymp kD\min\{k^{1-2/z}\eps^{-2},\eps^{-z}\}.
\end{equation*}
Every coreset must retain more than half these points.
The first branch uses $g\asymp1$; the second has $R\asymp\eps^{-1}$,
$q\asymp1$, and $m\asymp\eps^{-z}$.
We then need $g\asymp k\eps^z$ active copies to use the center budget.
For $z=2$, both branches give $\Theta(kD\eps^{-2})$.
Section~\ref{sec:doubling-lower-proof} verifies the metric and
handles boundary cases.

\paragraph{Upper bounds in doubling metrics (Theorem~\ref{thm:intro-doubling}).}
For fixed $z\ge2$, partition $X$ around a constant-factor solution $A$
with at most $k$ anchors. Write $M_i=\sum_{p\in P_i}u_p$ and
$V_i=\cost_z(P_i,A)$ for the mass and anchor cost of cluster $P_i$ at $a_i$.
Preserving $M_i$ cancels the anchor term from the error, leaving
$g_C(p)=d^z(p,C)-d^z(a_i,C)$. To control it, sample nonnegative weights
$\widehat u$ preserving $V_i$ as well:
\begin{equation*}
 \sum_{p\in P_i}\widehat u_p=M_i,\qquad
 \sum_{p\in P_i}\widehat u_p d^z(p,A)=V_i,\qquad
 \E\widehat u_p=u_p.
\end{equation*}
The two exact constraints define a polytope whose vertices have
at most two nonzero weights.
Sampling vertices from a convex decomposition of the input weights
gives unbiased two-point summaries (Lemma~\ref{lem:doubling-quadrature}).
We average independent summaries; their coordinates need not be independent.
No ring decomposition is needed.

H\"older's inequality gives
$\sum_{p\in P_i}\widehat u_p d(p,A)\le M_i^{1-1/z}V_i^{1/z}$.
Together with the metric inequalities and anchor-cost preservation,
this yields, on every draw,
\begin{equation*}
 \left|\sum_{p\in P_i}\widehat u_p g_C(p)\right|
 \lesssim V_i+d(a_i,C)^{z-1}M_i^{1-1/z}V_i^{1/z}.
\end{equation*}
Each draw is bounded by cluster totals, not the largest individual residual.

Let $V=\sum_iV_i>0$, $\alpha_i=V_i/V$, and take
$s_i=1+\lceil T\alpha_i^{2/z}\rceil$ summaries in $P_i$.
After averaging, their squared ranges sum to $O(1/s_i)$ times
the preceding bound squared, canceling $V_i^{2/z}$.
Since $2-2/z\ge1$, each remaining sum of powers is bounded by the
power of the sum.
Using $\sum_iM_i d^z(a_i,C)\lesssim\cost_z(X,C)+V$ and
$V\lesssim\cost_z(X,C)$ gives a total of $O(\cost_z(X,C)^2/T)$.
For each fixed query, Hoeffding bounds the probability of relative
error exceeding $\eps$ by $2\exp(-\Omega(\eps^2T))$.
Concavity gives $\sum_i\alpha_i^{2/z}\le k^{1-2/z}$, hence
$|S|\le2\sum_i s_i\lesssim k+Tk^{1-2/z}$.
Zero-cost clusters are exact, including when $V=0$.

To cover all queries, discard residuals where
$d(p,A)<c_z\eps d(a_i,C)$ for small $c_z>0$.
These are $O(\eps)d^z(p,C)$ and controlled directly for input and sample.
For the others, record a smallest covering ball for each center,
together with its scale and the cluster cutoffs. At $p$, only records
at scales at most $d(p,A)$ are needed: they retain the nearest centers
to $p$ and $a_i$. Comparing queries with the same records controls
both nearest distances in both directions.
Lemma~\ref{lem:doubling-covering} gives a cover with logarithmic size
$H=O(k\log(2|X|)+kD\log(2/\eps))$ and pointwise error
$O(\eps)d^z(p,A)$, whose sum is bounded by anchor-cost preservation.
For each vector, fix before sampling a representative with
$\cost_z(X,C)+V$ at most twice that of any query mapped to it.
Its concentration bound then applies to all those queries.
A union bound with $T=O(\eps^{-2}H)$ gives constant success probability.

Finally, iterative size reduction~\cite{braverman2021excludedminor}
gives a weighted coreset of size $\operatorname{poly}(k,D,\eps^{-1})$.
Apply the sampling bound again, splitting the error budget between stages.
Coreset composition gives
\begin{equation*}
 O\!\left(k^{2-2/z}\eps^{-2}[D\log(2/\eps)+\log(2k)]\right)
 =\widetilde O(k^{2-2/z}D\eps^{-2}).
\end{equation*}
Section~\ref{sec:doubling-upper-proof} gives the details.

\subsection{Other Related Work}

\paragraph{Sharper bounds for clustering coresets.}
Bansal et al.~\cite{bansal2024sensitivitysampling} showed that
sensitivity sampling constructs Euclidean $k$-means coresets of
size $\widetilde O(\min\{k^{3/2}\eps^{-2},k\eps^{-4}\})$.
For instances with constant cost stability, the same algorithm
achieves size $\widetilde O(k\eps^{-2})$ without knowing the
stability parameter.
Using VC dimension as a structural parameter, Cohen-Addad et al.
\cite{cohenAddad2025tightVC} obtained improved $k$-median coreset bounds, including
$\widetilde O(k\eps^{-2})$ for shortest-path metrics in planar graphs.

\paragraph{Coresets for constrained clustering.}
Coresets for capacitated and fair clustering, as well as more general
assignment constraints, have been studied extensively~\cite{huangJiangVishnoi2019fair,bandyapadhyayFominSimonov2021fair,braverman2022powerUniformSampling,huangLiLuWu2025constrained,prabhuEtAl2027dualConcentration}.
For Euclidean capacitated $(k,z)$-clustering with fixed $z\ge1$,
Huang et al.~\cite{huangLiLuWu2025constrained} obtained coresets of size
$\widetilde O(k^2\eps^{-2z-2})$.\footnote{For fair clustering,
the corresponding bound has an additional factor $\Gamma$, the number
of distinct group-membership profiles.}

\paragraph{Coresets for robust clustering.}
Coresets for clustering with outliers have also received substantial
attention~\cite{huangJiangLouWu2023robust,huangLiLuWu2025constrained,jiangLou2025robustReductions,fangHuangYang2025robustMedian,huangJiangLiWu2025tightRobust}.
For Euclidean $(k,z)$-clustering with $m$ outliers and fixed $z\ge1$,
Huang et al.~\cite{huangJiangLiWu2025tightRobust} obtained coresets of size
$O(m\eps^{-z})+\widetilde O(\min\{k^{(2z+2)/(z+2)}\eps^{-2},k\eps^{-z-2}\})$,
and an alternative bound of $O(m)+\widetilde O(k\eps^{-2z-2})$.

\section{Optimal Euclidean lower bound}
This section proves Theorem~\ref{thm:intro-small-error}. The first four subsections treat $k$-means with $0<\eps<k^{-1/4}$, and the final subsection extends the construction to general powers $z\ge1$ across both the small- and large-error regimes, $0<\eps<k^{-1/(z+2)}$ and $k^{-1/(z+2)}\le\eps<1$.

\subsection{Construction of the dataset}\label{sec:euclidean-main-z2}

We follow the far-apart-copy framework of~\cite{DBLP:conf/stoc/Huang0024}, in which each copy contains points on spheres centered along coordinate axes. We construct each offset from two orthogonal components: one identifies its label, and the other determines its projection level. Let $n,m,q,L\ge1$ be integer parameters to be chosen later. For now, assume that $2^L>q+\binom q3$, as required by Lemma~\ref{lem:euclidean-code}; we verify this condition when choosing the parameters. The dataset $X\subseteq\R^d$, where $d=n+m+2L+1$, consists of $n$ sufficiently separated translates of the set
\begin{equation}
p_{j,i,s}=\sqrt n\,e_j+u_i+x_s,\qquad P=\{p_{j,i,s}:j\in[n],\ i\in[m],\ s\in[q]\}.
\end{equation}
Here $e_1,\ldots,e_n$ and $u_1,\ldots,u_m$ are orthonormal bases of mutually orthogonal subspaces $\R^n$ and $\R^m$. The distinct unit vectors $x_1,\ldots,x_q$ lie in a third orthogonal subspace $\R^{2L}$; fix them, together with the set $V\subseteq\R^{2L}$, as supplied by Lemma~\ref{lem:euclidean-code} below. We reserve the remaining coordinate direction, with unit vector $f$, for adjusting center norms. Because the three summands of $p_{j,i,s}$ lie in mutually orthogonal subspaces, distinct triples $(j,i,s)$ yield distinct points, giving $|P|=nmq$ and $|X|=n^2mq$. Moreover, since $\norm{p_{j,i,s}-\sqrt n\,e_j}^2=\norm{u_i+x_s}^2=2$, each copy is partitioned in local coordinates into $n$ spheres of radius $\sqrt2$ centered at $\sqrt n\,e_1,\ldots,\sqrt n\,e_n$, with $mq$ points per sphere. Our queries will incur squared cost $\Theta(n)$ per point in one \emph{active copy} and $O(1)$ per point in the remaining $n-1$ copies, keeping both contributions of order $|X|$. The following lemma provides the vectors $x_s$ and a family of directions $V$ that induce a small number of projection levels while certifying a large discrepancy on the active copy.

\begin{lemma}[Discrete projections and a discrepancy bound]\label{lem:euclidean-code}
Let $L,q\ge1$ be integers satisfying $2^L>q+\binom q3$. There exist distinct vectors $x_1,\ldots,x_q\in\R^{2L}$ and a nonempty finite set $V\subseteq\R^{2L}$ with the following properties:
\begin{itemize}
\item Every $x_s$ has norm $1$, and every $v\in V$ has squared norm $2$.
\item For every $s\in[q]$ and $v\in V$,
\begin{equation}\label{eq:inner_products_simple}
\ip{x_s}{v}\in\left\{1-\frac{2\ell}{L}:\ell\in\{0,\ldots,L\}\right\}.
\end{equation}
Define the level index $\ell(s;v)\in\{0,\ldots,L\}$ by $\ip{x_s}{v}=1-2\ell(s;v)/L$.
\item For every integer $t\ge1$ and every collection of fixed coefficient vectors $A^{(1)},\ldots,A^{(t)}\in\R^q$,
\begin{equation}\label{eq:euclidean-sign-test}
\max_{v\in V}\sum_{r=1}^t\left|\sum_{s=1}^q A_s^{(r)}(-1)^{\ell(s;v)}\right| \ge\frac1{\sqrt3}\sum_{r=1}^t\|A^{(r)}\|_2.
\end{equation}
\end{itemize}
The vectors $x_s$ and the set $V$ depend only on $L,q$, not on the coefficient vectors.
\end{lemma}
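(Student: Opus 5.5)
The plan is to build the vectors from a binary code with $4$-wise independent parities, embed that code geometrically so that inner products count coordinate agreements, and then obtain \eqref{eq:euclidean-sign-test} from a fourth-moment (H\"older) argument applied to a uniformly random $v\in V$.

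\textbf{Step 1: a code with $4$-wise independent parities.} I would choose vectors $h_1,\ldots,h_q\in\mathbb F_2^L$ greedily. The requirement is that every sum of between one and four distinct $h_s$ is nonzero. When choosing $h_s$, it suffices to avoid the set
\begin{equation*}
\{0\}\cup\{h_a\}\cup\{h_a+h_b\}\cup\{h_a+h_b+h_c\},
\end{equation*}
where $a,b,c<s$ are distinct. This set has at most $1+(s-1)+\binom{s-1}2+\binom{s-1}3\le q+\binom q3$ elements, using $\binom{q-1}2+\binom{q-1}3=\binom q3$. The hypothesis $2^L>q+\binom q3$ guarantees that an admissible choice always exists. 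Consequently, for $a$ uniform in $\mathbb F_2^L$, the signs $\epsilon_s=(-1)^{\langle h_s,a\rangle}$ are unbiased and $4$-wise independent: for every nonempty set $T$ of at most four indices, $\sum_{s\in T}h_s\ne0$, so $\E\prod_{s\in T}\epsilon_s=0$.

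\textbf{Step 2: geometric embedding.} Index the coordinates of $\R^{2L}$ by pairs $(i,b)$ with $i\in[L]$ and $b\in\{0,1\}$. Set
\begin{equation*}
x_s=L^{-1/2}\sum_i e_{(i,h_{s,i})},
\end{equation*}
and for each $a\in\mathbb F_2^L$ set $(v_a)_{(i,b)}=L^{-1/2}(1-2a_ib)$. Let $V=\{v_a:a\in\mathbb F_2^L\}$, which is finite and nonempty. Each $x_s$ has exactly $L$ nonzero entries equal to $L^{-1/2}$, so $\|x_s\|=1$. Every entry of $v_a$ is $\pm L^{-1/2}$ across $2L$ coordinates, so $\|v_a\|^2=2$. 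The $x_s$ are distinct because the $h_s$ are.

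A direct computation then gives
\begin{equation*}
\langle x_s,v_a\rangle=\frac1L\sum_i(1-2a_ih_{s,i})=1-\frac{2\ell}{L},
\end{equation*}
where $\ell=|\{i:a_i=h_{s,i}=1\}|\in\{0,\ldots,L\}$. This proves \eqref{eq:inner_products_simple}. Moreover $\ell\equiv\langle h_s,a\rangle\pmod2$, so $(-1)^{\ell(s;v_a)}=\epsilon_s$.

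\textbf{Step 3: the discrepancy bound.} Fix the coefficient vectors and put $Y_r=\sum_sA^{(r)}_s\epsilon_s$, with $a$ uniform.

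\begin{itemize}
\item By pairwise independence, $\E Y_r^2=\|A^{(r)}\|_2^2$.
\item By $4$-wise independence, only terms with paired indices survive in $\E Y_r^4$. Hence $\E Y_r^4=3\|A^{(r)}\|_2^4-2\sum_s(A_s^{(r)})^4\le3\|A^{(r)}\|_2^4$.
\item H\"older's inequality gives $\E Y^2\le(\E|Y|)^{2/3}(\E Y^4)^{1/3}$. Therefore $\E|Y_r|\ge(\E Y_r^2)^{3/2}/(\E Y_r^4)^{1/2}\ge\|A^{(r)}\|_2/\sqrt3$. When $A^{(r)}=0$ the bound holds trivially.
\item Summing over $r$ and using that the maximum over $v\in V$ is at least the average over uniform $a$ yields \eqref{eq:euclidean-sign-test}.
\end{itemize}

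The construction uses only $L$ and $q$, as the statement requires. The one genuinely delicate point is matching the counting in Step 1 exactly to the hypothesis $2^L>q+\binom q3$; the embedding and the moment bound are routine.
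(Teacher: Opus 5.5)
Your proof is correct and follows essentially the paper's route: the same greedy binary code idea, the same embedding $x_s=L^{-1/2}\sum_i e_{(i,h_{s,i})}$, the identification of $(-1)^{\ell(s;v)}$ with a parity of random signs, and the second/fourth-moment H\"older bound combined with ``maximum $\ge$ average.'' The only variation is that you fix the $(i,0)$-signs of $v$ to $+1$, giving $2^L$ directions instead of the paper's $2^{2L}$ arrays of independent signs $r_{a,\beta}$. This is why you must also exclude $0$ and pairwise sums in the greedy step. That extra exclusion is harmless, since Pascal's identity keeps the count within $q+\binom q3$, but it is unnecessary: only products over index sets of size $2$ or $4$ enter the moment calculation, so distinctness together with the absence of four-word zero sums (the paper's weaker condition) already suffices.
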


The proof is deferred to Section~\ref{sec:euclidean-main-lemmas}. Each $v\in V$ groups the indices $s\in[q]$ according to their projection levels $\ell(s;v)$. Having only $L+1$ levels allows the construction in Section~\ref{sec:euclidean-centers} to use just $L+1$ centers for each $i\in[m]$. Varying $v$ allows both the grouping and the parity signs $(-1)^{\ell(s;v)}$ to change; the discrepancy bound guarantees a single direction $v\in V$ common to all sphere--label pairs that produces a large coreset error, as shown in Section~\ref{sec:center-set}.

\subsection{A family of query center sets}\label{sec:euclidean-centers}

Fix a candidate weighted subset $(S,w)$ with $|S|\le|X|/2$. By averaging, at least one of the $n$ copies has at most half of its points in $S$; we fix such a copy $P_1$ as the \emph{active copy} and call the remaining $n-1$ copies \emph{inactive}. Our query family places adjustable centers in $P_1$ and a fixed center at each of the $n(n-1)$ sphere centers of the inactive copies. Working in the local coordinates of $P_1$ (so that points $p_{j,i,s}\in P_1$ and their active centers undergo the same translation), we write $w(p_{j,i,s})$ for the weight assigned by $S$ to $p_{j,i,s}\in P_1$, setting $w(p_{j,i,s})=0$ if $p_{j,i,s}\notin S$.

We call $i\in[m]$ the \emph{label} of the offsets $u_i+x_s$ and their corresponding points on each sphere. For a fixed $v\in V$, the \emph{level} of $p_{j,i,s}$ is $\ell(s;v)$, determined by $\ip{x_s}{v}=1-2\ell(s;v)/L$. The points on sphere $j$ with label $i$ and level $\ell$ form the cell
\begin{equation*}
X_{j,i,\ell}(v)=\{p_{j,i,s}:s\in[q],\ \ell(s;v)=\ell\},\qquad j\in[n],\quad i\in[m],\quad 0\le\ell\le L.
\end{equation*}
Because $\ell(s;v)$ depends only on $s$ and $v$, this partition into cells (some of which may be empty) is identical across all $n$ spheres. A single active center $c_{i,\ell}(v,\sigma)$ will serve the cells $X_{j,i,\ell}(v)$ across all spheres $j\in[n]$, reducing the number of active centers from $nm(L+1)$ to $m(L+1)$ (see Figure~\ref{fig:euclidean-instance}).

Our family $C(v,\sigma)$ is parameterized by $v\in V$ and a sign array $\sigma\in\{-1,1\}^{n\times m}$. For each fixed $v$, the construction must keep the nearest-center assignments independent of $\sigma$. For a point on sphere $j$ with label $i$ and level $\ell$, its squared cost should consist of a sign-independent baseline depending only on $\ell$, plus a perturbation proportional to $\sigma_{ji}(-1)^\ell$ with the same positive coefficient for all points. Let $t_\ell:=1-2\ell/L$ be the $\ell$-th projection value in~\eqref{eq:inner_products_simple}. For each $(i,\ell)\in[m]\times\{0,\ldots,L\}$, define the local center
\begin{equation}
c_{i,\ell}(v,\sigma) =2\sqrt n\,u_i+\sqrt n\,t_\ell v -\frac{(-1)^\ell}{16L^2}\sum_{j=1}^n\sigma_{ji}e_j+\lambda_\ell f,
\end{equation}
where $f$ is a unit vector orthogonal to every data point, so its coefficient adjusts the center norm without changing point--center inner products. Choose
\begin{equation*}
\lambda_\ell =\left(4n+(\sqrt n-2n)t_\ell^2-\frac{n}{256L^4}\right)^{1/2}.
\end{equation*}
The square root is well-defined: $\sqrt n-2n\le0$, $t_\ell^2\le1$, and $L\ge1$ imply that its radicand satisfies
\begin{equation*}
4n+(\sqrt n-2n)t_\ell^2-\frac{n}{256L^4}
\ge\frac{511}{256}n+\sqrt n>0.
\end{equation*}
The sign component has squared norm $\left\|\sum_{j=1}^n\sigma_{ji}e_j\right\|^2
=\sum_{j=1}^n\sigma_{ji}^2=n$.
The remaining norms are $\norm{u_i}=\norm{f}=1$ and $\norm{v}^2=2$, the latter by Lemma~\ref{lem:euclidean-code}. Combining these values with the sign-norm calculation gives
\begin{equation}\label{eq:euclidean-center-norm}
\begin{aligned}
\norm{c_{i,\ell}(v,\sigma)}^2
&=4n+2nt_\ell^2+\frac{n}{256L^4}+\lambda_\ell^2\\
&=8n+\sqrt n\,t_\ell^2.
\end{aligned}
\end{equation}
The center norm is therefore independent of $v$ and $\sigma$. We include a center for every label--level pair, even when its cells are empty. Different labels have different $u_i$-components; for a fixed label, different levels have different $v$-components because the values $t_\ell$ are distinct and $\norm{v}^2=2$. Thus the $m(L+1)$ active centers are distinct.

The full center set contains these $m(L+1)$ active centers and the $n(n-1)$ fixed sphere centers in the inactive copies. The parameter choice in Section~\ref{sec:center-set} will ensure that $m(L+1)+n(n-1)\le k$. We obtain exactly $k$ distinct centers by placing the remaining centers sufficiently far from $X$, using the same additional centers for every query. The following claim verifies the assignments and costs of the resulting center sets.

\begin{claim}[Nearest centers and costs]\label{clm:euclidean-nearest-centers}
For every $v\in V$ and $\sigma\in\{-1,1\}^{n\times m}$, the center set $C(v,\sigma)$ has the following properties:
\begin{itemize}
\item \emph{Nearest-center assignment.} For every $j\in[n]$, $i\in[m]$, and $0\le\ell\le L$, the unique nearest center of each point $p\in X_{j,i,\ell}(v)$ is $c_{i,\ell}(v,\sigma)$.
\item \emph{Pointwise costs.} For every $p\in X_{j,i,\ell}(v)$,
\begin{equation}\label{eq:pointwise_cost}
d^2(p,C(v,\sigma))=D_\ell+\frac{\sqrt n}{8L^2}\sigma_{ji}(-1)^\ell,
\end{equation}
where $D_\ell=9n+2-4\sqrt n-\sqrt n\,t_\ell^2$, and
\begin{equation}\label{eq:pointwise_cost_bound}
5n\le d^2(p,C(v,\sigma))\le9n.
\end{equation}
Every point in an inactive copy has squared cost $2$.
\item \emph{Total cost.} The full dataset satisfies
\begin{equation*}
\cost_2(X,C(v,\sigma))\le11|X|.
\end{equation*}
\end{itemize}
\end{claim}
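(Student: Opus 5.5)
The plan is to expand $\norm{p-c}^2=\norm p^2+\norm c^2-2\ip pc$ for every active point $p=p_{j,i,s}$ (local coordinates) and every active center $c=c_{i',\ell'}(v,\sigma)$. We then compare the result against the inactive sphere centers and the remote padding centers.

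\textbf{Step 1: the active distance formula.} We have $\norm p^2=n+2$. By~\eqref{eq:euclidean-center-norm}, $\norm c^2=8n+\sqrt n\,t_{\ell'}^2$. By the mutual orthogonality of the subspaces,
\begin{equation*}
\ip pc=2\sqrt n\,[i=i']+\sqrt n\,t_{\ell'}\ip{x_s}{v}-\frac{(-1)^{\ell'}\sqrt n}{16L^2}\sigma_{ji'},
\end{equation*}
since $\ip{\sqrt n e_j}{\sum_{j'}\sigma_{j'i'}e_{j'}}=\sqrt n\sigma_{ji'}$ and $f$ is orthogonal to all data. Write $\ell=\ell(s;v)$, so that $\ip{x_s}{v}=t_\ell$. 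Then
\begin{equation*}
d^2(p,c)=9n+2+\sqrt n\,t_{\ell'}^2-4\sqrt n[i=i']-2\sqrt n\,t_{\ell'}t_\ell+\frac{(-1)^{\ell'}\sqrt n}{8L^2}\sigma_{ji'}.
\end{equation*}
For $i'=i$, this equals
\begin{equation*}
9n+2-4\sqrt n-\sqrt n\,t_\ell^2+\sqrt n\,(t_{\ell'}-t_\ell)^2+\frac{(-1)^{\ell'}\sqrt n}{8L^2}\sigma_{ji}.
\end{equation*}
Setting $\ell'=\ell$ gives exactly $D_\ell+\frac{\sqrt n}{8L^2}\sigma_{ji}(-1)^\ell$, which is~\eqref{eq:pointwise_cost}.

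\textbf{Step 2: uniqueness among the active centers.} For $i'=i$ and $\ell'\ne\ell$, the penalty is $\sqrt n(t_{\ell'}-t_\ell)^2\ge 4\sqrt n/L^2$. The signed terms change by at most $\sqrt n/(4L^2)$, so the center $c_{i,\ell'}$ is strictly farther. For $i'\ne i$, we lose the $-4\sqrt n$ term. Since $t_\ell,t_{\ell'}\in[-1,1]$, we have
\begin{equation*}
\sqrt n(t_{\ell'}-t_\ell)^2-4\sqrt n\ge -4\sqrt n.
\end{equation*}
The difference from the cost at $c_{i,\ell}$ is therefore at least $4\sqrt n-\sqrt n\,(t_{\ell'}-t_\ell)^2\cdot 0$ minus the sign terms. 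More carefully, it equals $4\sqrt n+\sqrt n(t_{\ell'}^2-2t_{\ell'}t_\ell+t_\ell^2)\ge 4\sqrt n$ minus at most $\sqrt n/(4L^2)$, which is positive.

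\textbf{Step 3: the bounds~\eqref{eq:pointwise_cost_bound}.} From $D_\ell\in[9n+2-5\sqrt n,\,9n+2-4\sqrt n]$ and $\sqrt n/(8L^2)\le\sqrt n/8$, we get $d^2\le 9n+2-4\sqrt n+\sqrt n/8\le 9n$ for $n\ge1$; this needs a check at small $n$, where $2\le 3.875\sqrt n$ holds. For the lower bound, $9n+2-5\sqrt n-\sqrt n/8\ge 5n$ holds because $4n-5.125\sqrt n+2>0$: its discriminant in $\sqrt n$ is negative.

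\textbf{Step 4: the non-active centers.} An inactive sphere center, or a point of another copy, lies at distance much larger than $3\sqrt n$ from $P_1$ once the copies are separated sufficiently. The remote padding centers are likewise far from everything. Hence active points keep $c_{i,\ell}$ as their unique nearest center. In an inactive copy, each point is at distance $\sqrt2$ from its own sphere center. The other sphere centers in that copy are at distance at least
\begin{equation*}
\sqrt{2n}-\sqrt2\ge\sqrt2 \quad\text{for } n\ge 2,
\end{equation*}
and active centers are far away. For $n=1$ there are no inactive copies. This gives squared cost exactly $2$ for every inactive point.

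\textbf{Step 5: the total cost.} Summing the pointwise bounds,
\begin{equation*}
\cost_2(X,C(v,\sigma))\le 9n\cdot nmq+2(n-1)nmq\le 11\,n^2mq=11|X|.
\end{equation*}

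The main obstacle is Step 2, namely the bookkeeping showing that the level penalty $4\sqrt n/L^2$ dominates the worst-case sign change $\sqrt n/(4L^2)$. The other delicate point is fixing the separation distance of the copies so that cross-copy comparisons are strict.
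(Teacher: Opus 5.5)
Your Steps 1--3 and 5 follow the paper's proof. You use the same expansion of $\norm{p-c}^2$ into $D_\ell$ plus a level penalty $\sqrt n(t_{\ell'}-t_\ell)^2$, a label penalty $4\sqrt n\,\mathbf{1}_{\{i'\ne i\}}$, and a sign perturbation of size at most $\sqrt n/(8L^2)$, followed by the same summation. The identity you give after ``more carefully'' in Step~2 is the right one; the sentence before it is garbled and should be deleted. Your discriminant check for the lower bound $5n$ is a valid substitute for the paper's use of $2\sqrt n\le n+1$.

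The step that fails is the inactive-copy assignment in Step~4. The inequality $\sqrt{2n}-\sqrt2\ge\sqrt2$ is false for $n=2$ and $n=3$, where the left side is $2-\sqrt2\approx0.59$ and $\sqrt6-\sqrt2\approx1.04$; it holds only for $n\ge4$. The claim is asserted for every integer $n\ge1$, and $n$ is a rounded small multiple of $\sqrt k$, so these values can occur. For them your argument gives only the trivial upper bound $2$ on an inactive point's squared cost, not the claimed exact value.

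The triangle inequality discards the orthogonality you already used in Step~1. Writing $p_{j,i,s}-\sqrt n\,e_{j'}=\sqrt n(e_j-e_{j'})+u_i+x_s$ with mutually orthogonal summands gives exactly $\norm{p_{j,i,s}-\sqrt n\,e_{j'}}^2=2n\mathbf{1}_{\{j'\ne j\}}+2$. So the point's own sphere center is uniquely nearest for every $n$.

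You should also make the separation step explicit instead of appealing to ``sufficiently separated'' copies. By \eqref{eq:euclidean-center-norm}, each active center has norm at most $3\sqrt n$, independently of $v$ and $\sigma$, and every local point and sphere center has norm at most $\sqrt{3n}$. Hence a single translation separation such as $20\sqrt n$ works. It is fixed before $v$, $\sigma$, and the choice of active copy, and gives cross-copy distances at least $14\sqrt n$ against within-copy distances at most $6\sqrt n$. This uniformity is what allows the dataset to be fixed in advance.
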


\begin{proof}
Fix $p=p_{j,i,s}\in X_{j,i,\ell}(v)$, so that $\ip{x_s}{v}=t_\ell$. First,
\begin{equation*}
\norm{p}^2=n+2.
\end{equation*}
For any candidate indices $i'\in[m]$ and $0\le\ell'\le L$, bilinearity and orthogonality give
\begin{equation*}
\ip{p}{c_{i',\ell'}(v,\sigma)}
=2\sqrt n\,\mathbf{1}_{\{i'=i\}}
 +\sqrt n\,t_\ell t_{\ell'}
 -\frac{\sqrt n}{16L^2}(-1)^{\ell'}\sigma_{ji'}.
\end{equation*}
Substituting these quantities and~\eqref{eq:euclidean-center-norm} into the squared-distance identity gives
\begin{align*}
&\norm{p-c_{i',\ell'}(v,\sigma)}^2\\
&\quad=9n+2+\sqrt n\,t_{\ell'}^2
       -4\sqrt n\,\mathbf{1}_{\{i'=i\}}
       -2\sqrt n\,t_\ell t_{\ell'}
       +\frac{\sqrt n}{8L^2}(-1)^{\ell'}\sigma_{ji'}\\
&\quad=D_\ell
       +\underbrace{\sqrt n(t_{\ell'}-t_\ell)^2}_{\text{level penalty}}
       +\underbrace{4\sqrt n\,\mathbf{1}_{\{i'\ne i\}}}_{\text{label penalty}}
       +\underbrace{\frac{\sqrt n}{8L^2}(-1)^{\ell'}\sigma_{ji'}}_{\text{sign perturbation}}.
\end{align*}
Here the square completion uses the two identities
\begin{align*}
\sqrt n\,t_{\ell'}^2-2\sqrt n\,t_\ell t_{\ell'}
&=\sqrt n(t_{\ell'}-t_\ell)^2-\sqrt n\,t_\ell^2,\\
-4\sqrt n\,\mathbf{1}_{\{i'=i\}}
&=-4\sqrt n+4\sqrt n\,\mathbf{1}_{\{i'\ne i\}}.
\end{align*}
The $u_i$- and $v$-components produce the label and level penalties. The norm adjustment supplies the quadratic term needed to complete the square and keeps the common cost of order $n$.

Both penalties vanish at $(i',\ell')=(i,\ell)$. If $\ell'\ne\ell$, then
\begin{equation*}
|t_{\ell'}-t_\ell|=\frac{2|\ell'-\ell|}{L}\ge\frac2L.
\end{equation*}
A wrong level therefore incurs at least $4\sqrt n/L^2$. A wrong label incurs $4\sqrt n\ge4\sqrt n/L^2$, since $L\ge1$. Meanwhile,
\begin{equation*}
\left|(-1)^{\ell'}\sigma_{ji'}-(-1)^\ell\sigma_{ji}\right|\le2.
\end{equation*}
Consequently, whenever $(i',\ell')\ne(i,\ell)$,
\begin{align*}
&\norm{p-c_{i',\ell'}(v,\sigma)}^2-\norm{p-c_{i,\ell}(v,\sigma)}^2\\
&\quad\ge\frac{4\sqrt n}{L^2}-\frac{\sqrt n}{4L^2}
       =\frac{15\sqrt n}{4L^2}>0.
\end{align*}
Thus $c_{i,\ell}(v,\sigma)$ is the unique nearest active center. Substituting $(i',\ell')=(i,\ell)$ removes both penalties and yields~\eqref{eq:pointwise_cost}. The signs $\sigma_{ji}$ will align the error contributions across pairs $(j,i)$, while $(-1)^\ell$ distinguishes even and odd levels. The perturbation therefore tests the distribution of weights across levels, not just their total mass.

To verify the constants in~\eqref{eq:pointwise_cost_bound}, first bound the cost from below:
\begin{align*}
\norm{p-c_{i,\ell}(v,\sigma)}^2
&\ge9n+2-\frac{41}{8}\sqrt n
    \qquad\text{($t_\ell^2\le1$, $L\ge1$)}\\
&\ge\frac{103n-9}{16}\ge5n
    \qquad\text{($2\sqrt n\le n+1$, $n\ge1$)}.
\end{align*}
For the upper estimate,
\begin{align*}
\norm{p-c_{i,\ell}(v,\sigma)}^2
&\le9n+2-\frac{31}{8}\sqrt n
    \qquad\text{($t_\ell^2\ge0$, $L\ge1$)}\\
&\le9n.
\end{align*}
Note for Section~\ref{sec:euclidean-extensions} that both bounds in~\eqref{eq:pointwise_cost_bound} remain valid whenever the sign perturbation has magnitude at most $\sqrt n/(8L^2)$.

We also verify that no center from another copy can be closer. In local coordinates,
\begin{equation*}
\norm{p}^2=n+2\le3n,\qquad
\norm{c_{i,\ell}(v,\sigma)}^2\le8n+\sqrt n\le9n,\qquad
\norm{\sqrt n\,e_j}^2=n,
\end{equation*}
where the middle bound uses~\eqref{eq:euclidean-center-norm} and $n\ge1$. Thus all local points and centers have norm at most $3\sqrt n$. The copies can be fixed in advance with translation vectors separated by at least $20\sqrt n$. The reverse triangle inequality then gives a cross-copy distance of at least $20\sqrt n-3\sqrt n-3\sqrt n=14\sqrt n$, whereas the triangle inequality gives a within-copy distance of at most $6\sqrt n$. The copy separation is fixed uniformly over $v$ and $\sigma$.

In an inactive copy, the squared distance from $p_{j,i,s}$ to sphere center $\sqrt n\,e_{j'}$ is
\begin{equation*}
\norm{p_{j,i,s}-\sqrt n\,e_{j'}}^2
=2n\,\mathbf{1}_{\{j'\ne j\}}+2.
\end{equation*}
Thus its own sphere center is uniquely nearest, with squared cost $2$. Additional centers are placed far enough away to affect none of these assignments. The within-copy and cross-copy comparisons establish the global assignments and pointwise costs. Summing these costs gives
\begin{equation}\label{eq:euclidean-cost-bound}
\begin{aligned}
\cost_2(X,C(v,\sigma))
&\le9n|P_1|+2(n-1)|P_1|
    \qquad\text{(by~\eqref{eq:pointwise_cost_bound})}\\
&\le11n|P_1|=11|X|.
\end{aligned}
\end{equation}
\end{proof}

\subsection{Coreset error and parameter choice for $k$-means}\label{sec:center-set}

We now choose two center sets from the preceding family whose signed errors are far apart, forcing at least one to have a large absolute error. Throughout this subsection, $z=2$.

\begin{proposition}[Detecting omissions with paired queries]\label{prop:query-discrepancy}
For the dataset $X$ constructed above, suppose $n(n-1)+m(L+1)\le k$. Let $(S,w)$ be a nonnegatively weighted subset with $|S|\le|X|/2$, and choose an active copy $P_1$ as in Section~\ref{sec:euclidean-centers}. The associated query family contains a center set $C$ such that
\begin{equation}\label{eq:query-discrepancy-bound}
\frac{|\cost_2(S,C)-\cost_2(X,C)|}{\cost_2(X,C)}
\ge\frac{1}{176\sqrt3\,\sqrt{nq}\,L^2}.
\end{equation}
\end{proposition}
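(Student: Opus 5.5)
The plan is to compare the paired queries $C(v,\sigma)$ and $C(v,-\sigma)$ for a well-chosen $v\in V$ and sign array $\sigma$. Extend $w$ by zero to all omitted points and write $\Delta(C)=\cost_2(S,C)-\cost_2(X,C)$.

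First, I would show that the difference $\Delta(C(v,\sigma))-\Delta(C(v,-\sigma))$ involves only the active copy. By Claim~\ref{clm:euclidean-nearest-centers}, every point of an inactive copy has squared cost $2$ under both queries, and these costs do not depend on $(v,\sigma)$. The far padding centers are the same for both queries and serve no points. In $P_1$, each point of $X_{j,i,\ell}(v)$ has cost $D_\ell\pm\frac{\sqrt n}{8L^2}\sigma_{ji}(-1)^\ell$. The baselines $D_\ell$ cancel exactly in the difference, for every value of $w$. Since $\ell=\ell(s;v)$, this gives
\begin{equation*}
\Delta(C(v,\sigma))-\Delta(C(v,-\sigma))=\frac{\sqrt n}{4L^2}\sum_{j\in[n]}\sum_{i\in[m]}\sigma_{ji}\sum_{s=1}^q\bigl(w(p_{j,i,s})-1\bigr)(-1)^{\ell(s;v)}.
\end{equation*}

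Second, I would apply Lemma~\ref{lem:euclidean-code}. Take the $t=nm$ coefficient vectors $A^{(j,i)}_s=w(p_{j,i,s})-1$; these depend only on $(S,w)$, while $V$ and the $x_s$ were fixed in advance. The lemma gives some $v\in V$ with $\sum_{j,i}\bigl|\sum_s A^{(j,i)}_s(-1)^{\ell(s;v)}\bigr|\ge\frac1{\sqrt3}\sum_{j,i}\|A^{(j,i)}\|_2$. For that $v$, set $\sigma_{ji}$ equal to the sign of the inner sum, so the difference above equals $\frac{\sqrt n}{4L^2}$ times the left-hand side.

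Next comes the lower bound on the norms. Let $o_{ji}$ be the number of omitted points in group $(j,i)$. Each omitted point contributes $1$ to $\|A^{(j,i)}\|_2^2$, so $\|A^{(j,i)}\|_2\ge\sqrt{o_{ji}}\ge o_{ji}/\sqrt q$, using $o_{ji}\le q$. The active copy was chosen with $|S\cap P_1|\le|P_1|/2$, so $\sum_{j,i}o_{ji}\ge|P_1|/2$. Combining these, the difference is at least $\frac{\sqrt n}{4L^2}\cdot\frac{|P_1|}{2\sqrt{3q}}$.

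Finally, I would convert the gap into a relative error. One of the two queries, call it $C$, has $|\Delta(C)|\ge\frac{\sqrt n\,|P_1|}{16\sqrt3\,L^2\sqrt q}$. By Claim~\ref{clm:euclidean-nearest-centers}, $\cost_2(X,C)\le 11|X|=11n|P_1|$. Dividing gives $\frac{1}{176\sqrt3\sqrt{nq}\,L^2}$, which is exactly \eqref{eq:query-discrepancy-bound}. I also need to check that $C(v,\pm\sigma)$ is a valid $k$-center set, which follows from $n(n-1)+m(L+1)\le k$ and the padding centers.

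I expect the only genuine obstacle to be the step where arbitrary, possibly huge, weights must be handled. This is resolved by the exact cancellation of the baselines $D_\ell$ and of all inactive terms in the paired difference, so no a priori bound on $w$ is needed. Beyond that, the steps are bookkeeping, with the norm lower bound $\|A^{(j,i)}\|_2\ge o_{ji}/\sqrt q$ being the key inequality.
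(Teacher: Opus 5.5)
Your proposal is correct and follows the paper's proof essentially step for step. The paired queries $C(v,\pm\sigma)$ cancel the baselines $D_\ell$ and all inactive contributions exactly; Lemma~\ref{lem:euclidean-code} supplies one common $v$, with $\sigma$ aligned to the inner sums; the omission count gives $\sum_{j,i}\sqrt{o_{ji}}\ge|P_1|/(2\sqrt q)$; and halving the gap against $\cost_2(X,C)\le 11n|P_1|$ yields the constant $176\sqrt3$. The only cosmetic difference is that you fix $v$ via the lemma before choosing $\sigma$, whereas the paper first optimizes $\sigma$ for each $v$ and then maximizes over $v$, which is equivalent.
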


\begin{proof}
Extend $w$ by zero to $X$, and write
\[
\Delta(C):=\cost_2(S,C)-\cost_2(X,C)
=\sum_{p\in X}(w(p)-1)d^2(p,C).
\]

\paragraph{Opposite perturbations and sign maximization.}
For a fixed $v\in V$, reversing $\sigma$ keeps the grouping fixed and cancels the baseline $D_\ell$. By Eq.~\eqref{eq:pointwise_cost},
\[
d^2(p_{j,i,s},C(v,\sigma))-d^2(p_{j,i,s},C(v,-\sigma))
=\frac{\sqrt n}{4L^2}\sigma_{ji}(-1)^{\ell(s;v)}.
\]
Inactive points have the same cost under both queries, so their contributions cancel regardless of their weights. Hence
\begin{align*}
&\Delta(C(v,\sigma))-\Delta(C(v,-\sigma))\\
&\quad=\frac{\sqrt n}{4L^2}\sum_{j,i}\sigma_{ji}
\sum_{s=1}^q(w(p_{j,i,s})-1)(-1)^{\ell(s;v)}.
\end{align*}
For each fixed $v$, choose $\sigma_{ji}$ with the sign of its inner sum, taking either sign when that sum is zero. These choices remove cancellation between different sphere--label pairs. Maximizing also over $v$ gives
\begin{equation}\label{eq:euclidean-error-maximum}
\begin{aligned}
&\max_{v,\sigma}\bigl[\Delta(C(v,\sigma))-\Delta(C(v,-\sigma))\bigr]\\
&\quad=\frac{\sqrt n}{4L^2}\max_{v\in V}\sum_{j,i}
\left|\sum_{s=1}^q(w(p_{j,i,s})-1)(-1)^{\ell(s;v)}\right|.
\end{aligned}
\end{equation}

\paragraph{Choosing $v$ and counting omissions.}
The discrepancies within a pair may still cancel for a particular $v$. Apply Lemma~\ref{lem:euclidean-code} to the fixed coefficient vectors $(w(p_{j,i,s})-1)_{s=1}^q$, indexed by $(j,i)\in[n]\times[m]$, to obtain
\begin{equation}\label{eq:euclidean-averaging}
\begin{aligned}
&\max_{v\in V}\sum_{j,i}\left|\sum_{s=1}^q(w(p_{j,i,s})-1)(-1)^{\ell(s;v)}\right|\\
&\quad\ge\frac1{\sqrt3}\sum_{j,i}\left(\sum_{s=1}^q(w(p_{j,i,s})-1)^2\right)^{1/2}.
\end{aligned}
\end{equation}
The maximum is outside the sum, so one common $v$ serves every pair.

Let $M_{ji}$ count omitted points with sphere index $j$ and label $i$ in $P_1$. Each omitted point has discrepancy $-1$, while every squared discrepancy is nonnegative. Since $M_{ji}\le q$ and $\sum_{j,i}M_{ji}\ge|P_1|/2$,
\begin{equation}\label{eq:euclidean-omissions}
\begin{aligned}
\sum_{j,i}\left(\sum_{s=1}^q(w(p_{j,i,s})-1)^2\right)^{1/2}
&\ge\sum_{j,i}\sqrt{M_{ji}}\\
&\ge\frac{\sum_{j,i}M_{ji}}{\sqrt q}
\ge\frac{|P_1|}{2\sqrt q}.
\end{aligned}
\end{equation}
Choose $v,\sigma$ attaining the maximum in Eq.~\eqref{eq:euclidean-error-maximum}, and set $C^+=C(v,\sigma)$ and $C^-=C(v,-\sigma)$. Combining Eqs.~\eqref{eq:euclidean-averaging} and~\eqref{eq:euclidean-omissions} gives
\[
\Delta(C^+)-\Delta(C^-)
\ge\frac{\sqrt n\,|P_1|}{8\sqrt3\,\sqrt q\,L^2}.
\]
At least one query has absolute error at least half this gap. Its true cost is at most $11|X|=11n|P_1|$ by Eq.~\eqref{eq:euclidean-cost-bound}, proving Eq.~\eqref{eq:query-discrepancy-bound}.
\end{proof}

Consequently, at least one query has relative error at least $2\eps$ whenever
\begin{equation}\label{eq:contradiction}
\eps\sqrt{nq}\,L^2\le c_0,\qquad c_0:=\frac{1}{352\sqrt3}.
\end{equation}
Together with the coding condition and the center budget, this forces every $\eps$-coreset to contain more than $|X|/2$ points.

\paragraph{Parameter choice and contradiction.}
Ignoring rounding and fixed constants, the center budget and error condition suggest
\[
n^2\lesssim k,\qquad mL\lesssim k,\qquad q\lesssim\frac{\eps^{-2}}{nL^4}.
\]
For $0<\eps<k^{-1/4}$, choose
\[
L=\left\lceil8\log_2(2+k^{-1/4}\eps^{-1})\right\rceil,\qquad
q=\max\left\{1,\left\lfloor\frac{k^{-1/2}\eps^{-2}}{L^4}\right\rfloor\right\},
\]
and round $n,m$ down from sufficiently small fixed multiples of $\sqrt k,k/L$, respectively. Since $q\le k^{-1/2}\eps^{-2}$,
\[
2^L\ge(2+k^{-1/4}\eps^{-1})^8
>2(k^{-1/2}\eps^{-2})^3\ge q+\binom q3.
\]
Moreover, $k^{-1/4}\eps^{-1}\ge1$ implies $L^4\lesssim k^{-1/2}\eps^{-2}$, so rounding $q$ up to one loses only an absolute factor and $q\asymp k^{-1/2}\eps^{-2}/L^4$.

Suppose first that $n,m\ge1$. Then $n\asymp\sqrt k$ and $m\asymp k/L$, and the sufficiently small leading constants ensure $n(n-1)+m(L+1)\le k$. With $L$ fixed as above,
\[
\eps\sqrt{nq}\,L^2\lesssim\sqrt{\frac{n}{\sqrt k}}.
\]
Taking the leading constant in $n$ sufficiently small makes this quantity at most $c_0$. Thus every $\eps$-coreset satisfies
\[
|S|>\frac{|X|}{2}=\frac12n^2mq
\asymp\frac{k^{3/2}\eps^{-2}}{L^5}
\asymp\frac{k^{3/2}\eps^{-2}}{[\log(2+k^{-1/4}\eps^{-1})]^5}.
\]
If either rounded count vanishes, then $k=O(1)$ or $k=O(L)$, respectively. In either case,
\[
\frac{k^{3/2}\eps^{-2}}{L^5}
=k\eps^{-2}\frac{\sqrt k}{L^5}
=O(k\eps^{-2}).
\]
For $\eps<1/2$, the known $\Omega(k\eps^{-2})$ lower bound for $k$-means~\cite[Theorem~11]{CohenAddadLSS22} therefore suffices. For $\eps\ge1/2$, take $X$ to consist of $k$ distinct points. Any proper weighted subset has zero cost for a query containing its retained points and padded to $k$ distinct centers outside $X$, while the true cost is positive. Thus all $k$ points are necessary; since $\eps^{-2}\le4$, this gives the required $\Omega(k\eps^{-2})$ bound. This completes the $k$-means lower bound throughout $0<\eps<k^{-1/4}$.

\subsection{Proof of Lemma~\ref{lem:euclidean-code}: projection levels and signs}\label{sec:euclidean-main-lemmas}

We now prove the projection lemma used in the preceding construction and error analysis.

\begin{proof}[Proof of Lemma~\ref{lem:euclidean-code}]
We first construct distinct binary words $b_1,\ldots,b_q\in\{0,1\}^L$ such that no four distinct words sum to zero. Sums of binary words are taken coordinatewise modulo two. Suppose $b_1,\ldots,b_{s-1}$ have been chosen. To choose $b_s$, exclude
\begin{equation*}
\{b_{s'}:1\le s'<s\}
\;\cup\;
\{b_{s_1}+b_{s_2}+b_{s_3}:1\le s_1<s_2<s_3<s\}.
\end{equation*}
The number of excluded words is at most
\begin{equation*}
(s-1)+\binom{s-1}{3}<q+\binom q3<2^L
\qquad\text{($s\le q$ and the lemma's hypothesis)}.
\end{equation*}
The two terms count previous words and previous triples; coincident exclusions only reduce the count. Fewer than all binary words are excluded, so a valid choice remains at every step. The first exclusion enforces distinctness; the second rules out a four-word zero sum, since its last word would equal the sum of three earlier words.

Let $h_{a,\beta}$, for $a\in[L]$ and $\beta\in\{0,1\}$, be an orthonormal basis of $\R^{2L}$, and let $b_s(a)$ denote the $a$th bit of $b_s$. Define
\begin{equation*}
x_s=\frac1{\sqrt L}\sum_{a=1}^Lh_{a,b_s(a)},\qquad s\in[q].
\end{equation*}
These vectors have unit norm:
\begin{equation*}
\norm{x_s}^2=\frac1L\sum_{a=1}^L1=1.
\end{equation*}
If two binary words differ at coordinate $a$, their vectors have different coefficients on $h_{a,0}$ and $h_{a,1}$. Hence the vectors are distinct.

Draw the signs $r_{a,\beta}$ independently and uniformly from $\{-1,1\}$, and set
\begin{equation*}
v=\frac1{\sqrt L}\sum_{a=1}^L\sum_{\beta\in\{0,1\}}r_{a,\beta}h_{a,\beta}.
\end{equation*}
Distinct sign arrays have distinct coordinates in the basis, so the $2^{2L}$ arrays yield $2^{2L}$ vectors; let $V$ be their set. Moreover,
\begin{equation*}
\norm{v}^2=\frac1L\sum_{a=1}^L\sum_{\beta\in\{0,1\}}r_{a,\beta}^2=2.
\end{equation*}
Throughout this proof, $\E_v$ denotes the uniform average over $V$, equivalently expectation over the independent signs. This averaging argument establishes a maximum bound for the fixed family; it does not randomize the dataset. Let $\ell(s;v)$ count the coordinates $a$ with $r_{a,b_s(a)}=-1$. Then
\begin{equation*}
\ip{x_s}{v}=\frac1L\sum_{a=1}^Lr_{a,b_s(a)}
=1-\frac{2\ell(s;v)}L.
\end{equation*}
This proves~\eqref{eq:inner_products_simple}. The same count also gives
\begin{equation}\label{eq:parity}
\prod_{a=1}^Lr_{a,b_s(a)}=(-1)^{\ell(s;v)}.
\end{equation}

For the moment calculations, write $\chi_s(v):=(-1)^{\ell(s;v)}$. Fix a coefficient vector $A=(A_s)_{s=1}^q\in\R^q$ and define
\begin{equation*}
Y(v):=\sum_{s=1}^qA_s\chi_s(v).
\end{equation*}
The second moment identifies the scale $\|A\|_2^2$; a fourth-moment bound ensures that this scale is not due only to rare large values. It suffices to prove that for every fixed subset $T\subseteq[q]$ of size $2$ or $4$,
\begin{equation}\label{eq:euclidean-parity-cancellation}
\E_v\prod_{s\in T}\chi_s(v)=0.
\end{equation}
For each $a$, let $N_a:=\sum_{s\in T}b_s(a)$ be an ordinary integer sum, counting the words in $T$ whose $a$th bit is $1$. The other $|T|-N_a$ words have bit $0$. Hence
\begin{equation*}
\prod_{s\in T}\chi_s(v)
=\prod_{a=1}^Lr_{a,0}^{|T|-N_a}r_{a,1}^{N_a}
\qquad\text{(by~\eqref{eq:parity})}.
\end{equation*}
If $|T|=2$, distinctness gives $\sum_{s\in T}b_s\ne0$ modulo two. If $|T|=4$, the exclusion of four-word zero sums gives the same conclusion. Thus some coordinate $a_*$ has $N_{a_*}$ odd. At that coordinate,
\begin{equation*}
\E\bigl[r_{a_*,1}^{N_{a_*}}\bigr]=\E[r_{a_*,1}]=0,
\end{equation*}
because an odd power of a sign equals the sign itself and the sign is uniform. By independence,
\begin{align*}
\E_v\prod_{s\in T}\chi_s(v)
&=\prod_{a=1}^L
    \left(\E[r_{a,0}^{|T|-N_a}]\,\E[r_{a,1}^{N_a}]\right)
    \qquad\text{(independent signs)}\\
&=0 \qquad\text{(the $a_*,1$ factor vanishes)}.
\end{align*}
This proves~\eqref{eq:euclidean-parity-cancellation}. For a product of one or three distinct $\chi_s$, the total multiplicity at each coordinate is odd, so at least one of $r_{a,0},r_{a,1}$ has odd multiplicity. Its expectation is zero as well. Thus the signs are unbiased and four-wise independent; the following moment calculation only needs the even cases.

Every $\chi_s$ is a sign, so $\chi_s^2=1$. Expanding the second moment gives
\begin{equation*}
\E_vY^2=\sum_{s=1}^qA_s^2=\|A\|_2^2
\qquad\text{(by~\eqref{eq:euclidean-parity-cancellation})}.
\end{equation*}
In the fourth moment, multiplicity patterns $3+1$ and $2+1+1$ reduce to products of two distinct signs; pattern $1+1+1+1$ uses four distinct signs. Their expectations vanish by~\eqref{eq:euclidean-parity-cancellation}. The surviving patterns are $4$ and $2+2$, and the latter has $\binom42=6$ orderings. Therefore,
\begin{align*}
\E_vY^4
&=\sum_sA_s^4+6\sum_{s<s'}A_s^2A_{s'}^2\\
&=3\left(\sum_sA_s^2\right)^2-2\sum_sA_s^4
 \le3\|A\|_2^4.
\end{align*}
If $A=0$, the required average bound is immediate. Otherwise $\E_vY^2>0$, so $\E_vY^4>0$ and Lemma~\ref{lem:holder} applies:
\begin{align*}
\E_v|Y|
&\ge\frac{(\E_vY^2)^{3/2}}{(\E_vY^4)^{1/2}}
    \qquad\text{(Lemma~\ref{lem:holder})}\\
&\ge\frac1{\sqrt3}\|A\|_2
    \qquad\text{(the second and fourth moments)}.
\end{align*}
For the vectors $A^{(1)},\ldots,A^{(t)}$ in the lemma, let $Y_r(v):=\sum_sA_s^{(r)}\chi_s(v)$. Summing the expectation bounds for the coefficient vectors and comparing the average with the maximum yields
\begin{align*}
\max_{v\in V}\sum_{r=1}^t|Y_r(v)|
&\ge\E_v\sum_{r=1}^t|Y_r(v)|
 =\sum_{r=1}^t\E_v|Y_r(v)|\\
&\ge\frac1{\sqrt3}\sum_{r=1}^t\|A^{(r)}\|_2
    \qquad\text{(the average bound for each vector)}.
\end{align*}
Substituting the definitions of $Y_r$ and $\chi_s$ gives~\eqref{eq:euclidean-sign-test}, with one common $v$ for every coefficient vector.
\end{proof}

\begin{lemma}[Moment interpolation]\label{lem:holder}
Every real-valued random variable $Y$ with $0<\E|Y|^4<\infty$ satisfies
\begin{equation*}
\E|Y|\ge\frac{(\E|Y|^2)^{3/2}}{(\E|Y|^4)^{1/2}}.
\end{equation*}
\end{lemma}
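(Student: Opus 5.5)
The plan is to apply Hölder's inequality to the factorization $|Y|^2=|Y|^{2/3}\cdot|Y|^{4/3}$, with exponents chosen so that one factor becomes $\E|Y|$ and the other becomes $\E|Y|^4$. Take the conjugate exponents $p=3/2$ and $p'=3$. Hölder then gives
\begin{equation*}
\E|Y|^2=\E\bigl[|Y|^{2/3}|Y|^{4/3}\bigr]\le\bigl(\E|Y|\bigr)^{2/3}\bigl(\E|Y|^4\bigr)^{1/3}.
\end{equation*}
Indeed, $(|Y|^{2/3})^{3/2}=|Y|$ and $(|Y|^{4/3})^{3}=|Y|^4$, which is why these exponents are the right ones.

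Next, raise both sides to the power $3/2$. This gives $(\E|Y|^2)^{3/2}\le\E|Y|\,(\E|Y|^4)^{1/2}$. Since $0<\E|Y|^4<\infty$, we may divide by $(\E|Y|^4)^{1/2}$, and the claimed inequality follows.

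The only points that need checking concern finiteness. First, $\E|Y|$ and $\E|Y|^2$ are finite, because $\E|Y|^4<\infty$ and Lyapunov's inequality (or the bound $|Y|^r\le1+|Y|^4$ for $r\le4$) controls the lower moments. Second, Hölder applies to nonnegative random variables without any further conditions. I expect no real obstacle here; the only care needed is in choosing the exponents.
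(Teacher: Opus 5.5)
Your proposal is correct and matches the paper's proof essentially step for step: H\"older with exponents $3/2$ and $3$ applied to $|Y|^2=|Y|^{2/3}|Y|^{4/3}$, then raising to the power $3/2$ and dividing by $(\E|Y|^4)^{1/2}>0$. Your finiteness check via $|Y|^r\le1+|Y|^4$ is also the one the paper uses.
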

\begin{proof}
The first and second absolute moments are finite because $|Y|^r\le1+|Y|^4$ for $r\in\{1,2\}$. Factor $|Y|^2$ so that H\"older's exponents are $3/2$ and $3$:
\begin{align*}
\E|Y|^2
&=\E\bigl[|Y|^{2/3}|Y|^{4/3}\bigr]\\
&\le(\E|Y|)^{2/3}(\E|Y|^4)^{1/3}
    \qquad\text{(H\"older, $3/2$ and $3$)}.
\end{align*}
Raising both sides to the power $3/2$, which preserves the inequality between nonnegative quantities, gives
\begin{equation*}
(\E|Y|^2)^{3/2}\le\E|Y|\,(\E|Y|^4)^{1/2}.
\end{equation*}
Dividing by $(\E|Y|^4)^{1/2}>0$ proves the claim.
\end{proof}

\subsection{General powers and remaining parameter ranges}\label{sec:euclidean-extensions}

\paragraph{General powers $z\ge1$.}
Fix a real $z\ge1$ and $0<\eps<1$. We retain the local dataset and projection family, but use $\lceil n^{z/2}\rceil$ copies instead of $n$. Each copy still contains $nmq$ points, so
\[
|X|=\lceil n^{z/2}\rceil nmq\asymp n^{(z+2)/2}mq.
\]
The active points will have cost $\Theta(n^{z/2})$, so this copy count keeps the active contribution comparable to $|X|$. For a weighted subset $(S,w)$ with $|S|\le|X|/2$, extend $w$ by zero to $X$ and choose an active copy $P_1$ containing at most half its points in $S$, as before. Each inactive copy receives its $n$ fixed sphere centers. Together with the $m(L+1)$ active centers, this uses $n(\lceil n^{z/2}\rceil-1)+m(L+1)$ centers; we verify the budget below and pad to exactly $k$ centers as before.

The remaining change is to the perturbation magnitudes. For $z=2$, opposite perturbations give the same pointwise cost difference at every level. For general $z$, taking the $z/2$-th power introduces a dependence on the baseline $D_\ell$. We compensate by choosing, for every $0\le\ell\le L$, a magnitude $\delta_\ell\in(0,1/(16L^2)]$ satisfying
\begin{equation}\label{eq:euclidean-z-powered-gap}
\left(D_\ell+2\sqrt n\,\delta_\ell\right)^{z/2}
-\left(D_\ell-2\sqrt n\,\delta_\ell\right)^{z/2}
=\frac{n^{(z-1)/2}}{32L^2}.
\end{equation}
Both arguments lie in $[5n,9n]$ throughout the allowed interval, by the estimates in Claim~\ref{clm:euclidean-nearest-centers}. Since $(z/2)u^{z/2-1}\ge n^{z/2-1}/6$ on this interval for every $z\ge1$, the mean value theorem gives
\[
\left(D_\ell+\frac{\sqrt n}{8L^2}\right)^{z/2}
-\left(D_\ell-\frac{\sqrt n}{8L^2}\right)^{z/2}
\ge\frac{n^{(z-1)/2}}{24L^2}.
\]
The gap is continuous and strictly increasing from zero to a value above the target, so the required $\delta_\ell$ exists uniquely. It depends only on $z,n,L,\ell$, not on $v,\sigma,S,$ or $w$.

In the active-center formula of Section~\ref{sec:euclidean-centers}, replace $1/(16L^2)$ by $\delta_\ell$ and set
\[
\lambda_\ell=\left(4n+(\sqrt n-2n)t_\ell^2-n\delta_\ell^2\right)^{1/2}.
\]
The radicand only increases, and the squared center norm remains $8n+\sqrt n\,t_\ell^2$. The label and level penalties in Claim~\ref{clm:euclidean-nearest-centers} are unchanged, while each sign perturbation still has absolute value at most $\sqrt n/(8L^2)$. Thus the same nearest-center comparison and separation between copies apply. An active point $p_{j,i,s}$ at level $\ell=\ell(s;v)$ therefore has cost
\[
d^z(p_{j,i,s},C(v,\sigma))
=\left(D_\ell+2\sqrt n\,\delta_\ell\sigma_{ji}(-1)^\ell\right)^{z/2}
\in[(5n)^{z/2},(9n)^{z/2}],
\]
whereas every inactive point has cost $2^{z/2}$. Consequently,
\begin{equation}\label{eq:euclidean-z-cost}
\begin{aligned}
\cost_z(X,C(v,\sigma))
&\le(9n)^{z/2}|P_1|+2^{z/2}(\lceil n^{z/2}\rceil-1)|P_1|\\
&\lesssim n^{z/2}|P_1|\asymp|X|.
\end{aligned}
\end{equation}

Write $\Delta(C)=\cost_z(S,C)-\cost_z(X,C)$. By Eq.~\eqref{eq:euclidean-z-powered-gap}, opposite sign arrays satisfy
\[
d^z(p_{j,i,s},C(v,\sigma))-d^z(p_{j,i,s},C(v,-\sigma))
=\frac{n^{(z-1)/2}}{32L^2}\sigma_{ji}(-1)^{\ell(s;v)}.
\]
The coefficient is independent of the level and direction, and inactive contributions cancel. The same sign maximization and Eqs.~\eqref{eq:euclidean-averaging} and~\eqref{eq:euclidean-omissions} give a pair $C^+=C(v,\sigma)$ and $C^-=C(v,-\sigma)$ with
\[
\Delta(C^+)-\Delta(C^-)
\ge\frac{n^{(z-1)/2}|P_1|}{64\sqrt3\,\sqrt q\,L^2}.
\]
At least one query has absolute error at least half this gap. Together with Eq.~\eqref{eq:euclidean-z-cost}, this yields
\[
\max_{C\in\{C^+,C^-\}}\frac{|\Delta(C)|}{\cost_z(X,C)}
\gtrsim\frac{1}{\sqrt{nq}\,L^2}.
\]
Thus the relative-error scale is the same as for $k$-means. For a sufficiently small constant $c_z>0$ depending only on $z$, this error is at least $2\eps$ whenever
\begin{equation}\label{eq:euclidean-z-contradiction}
\eps\sqrt{nq}\,L^2\le c_z.
\end{equation}
Subject to the coding condition and center budget, this forces every $\eps$-coreset to contain more than $|X|/2$ points.

\paragraph{Small errors: $0<\eps<k^{-1/(z+2)}$.}
As in the $k$-means case, the center budget and error condition suggest
\[
n^{(z+2)/2}\lesssim k,\qquad mL\lesssim k,\qquad q\lesssim\frac{\eps^{-2}}{nL^4}.
\]
Choose
\[
L=\left\lceil8\log_2(2+k^{-1/(z+2)}\eps^{-1})\right\rceil,\qquad
q=\max\left\{1,\left\lfloor\frac{k^{-2/(z+2)}\eps^{-2}}{L^4}\right\rfloor\right\},
\]
and round $n,m$ down from sufficiently small fixed multiples of $k^{2/(z+2)},k/L$, respectively, allowing the multiplier for $n$ to depend on $z$. The same coding and rounding estimates as in Section~\ref{sec:center-set} give
\[
2^L>q+\binom q3,\qquad q\asymp\frac{k^{-2/(z+2)}\eps^{-2}}{L^4}.
\]
Suppose first that both rounded counts are positive. Then $n\asymp k^{2/(z+2)}$ and $m\asymp k/L$. Taking their leading constants sufficiently small ensures
\[
n(\lceil n^{z/2}\rceil-1)+m(L+1)
\le n^{(z+2)/2}+m(L+1)\le k.
\]
Moreover,
\[
\eps\sqrt{nq}\,L^2\lesssim\sqrt{\frac{n}{k^{2/(z+2)}}}.
\]
With $L$ fixed as above, choosing the leading constant in $n$ sufficiently small makes this quantity at most $c_z$. Hence every $\eps$-coreset satisfies
\[
\begin{aligned}
|S|>\frac{|X|}{2}
&\asymp n^{(z+2)/2}mq
\asymp\frac{k^{(2z+2)/(z+2)}\eps^{-2}}{L^5}\\
&\asymp\frac{k^{(2z+2)/(z+2)}\eps^{-2}}{[\log(2+k^{-1/(z+2)}\eps^{-1})]^5}.
\end{aligned}
\]
If either rounded count vanishes, then $k=O(1)$ or $k=O(L)$, respectively, so $k^{z/(z+2)}/L^5=O(1)$ and the target bound is $O(k\eps^{-2})$. For $\eps<1/2$, the known $\Omega(k\eps^{-2})$ lower bound~\cite[Theorem~11]{CohenAddadLSS22} suffices; for $\eps\ge1/2$, the $k$-point argument in Section~\ref{sec:center-set} gives the required $\Omega(k)$ bound. This completes the small-error case of Theorem~\ref{thm:intro-small-error}.

\paragraph{Larger errors: $k^{-1/(z+2)}\le\eps<1$.}
The same construction also recovers the known larger-error branch~\cite{DBLP:conf/stoc/Huang0024}. A single offset per label suffices: for sufficiently small $\eps$ depending only on $z$, choose
\[
L=q=1,\qquad n\asymp\eps^{-2},\qquad m\asymp k,
\]
rounding $n,m$ down from sufficiently small fixed multiples. These counts are positive, and $\eps^{-z-2}\le k$ allows the leading constants to be chosen so that
\[
n(\lceil n^{z/2}\rceil-1)+2m\le n^{(z+2)/2}+2m\le k,
\qquad \eps\sqrt n\le c_z.
\]
The coding condition is immediate, so every coreset has size $|S|>|X|/2\asymp n^{(z+2)/2}m\asymp k\eps^{-z-2}$. When $\eps$ is bounded below by a positive constant depending only on $z$, this target is $O(k)$ and the $k$-point argument again suffices. This proves the larger-error case and completes Theorem~\ref{thm:intro-small-error}; the known upper bounds then give Corollary~\ref{thm:intro-euclidean}.

\section{Lower bound in doubling metrics}\label{sec:doubling-lower-proof}

We prove the lower-bound part of Theorem~\ref{thm:intro-doubling} through the following lemma: it constructs an unweighted dataset in a finite metric space of prescribed doubling dimension for which every $\eps$-coreset must retain more than half of the input points.

The finite metric space consists of the dataset, all candidate centers, and auxiliary anchor and padding points.

\begin{lemma}[Coreset lower bound in doubling metrics]\label{lem:doubling-lower}
For every fixed real $z\ge2$, there exist constants $C_z>0$ and $\eps_z\in(0,1)$ such that for every integer $k\ge1$, every $\eps\in(0,\eps_z]$, and every $D\ge C_z\log(2/\eps)$, one can construct a finite metric space $(\calX,d)$ satisfying
\[
 D/8\le\ddim(\calX,d)\le D
\]
and an unweighted dataset $X\subseteq\calX$ of size
\begin{equation}\label{eq:doubling-lower}
 |X|\asymp kD
       \min\{k^{1-2/z}\eps^{-2},\,\eps^{-z}\}
\end{equation}
for which every $\eps$-coreset for $(k,z)$-clustering of $X$ in $(\calX,d)$ has size greater than $|X|/2$. 
\end{lemma}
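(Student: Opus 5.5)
We follow the construction sketched in the technical overview. The plan is to build the metric abstractly by specifying distances and checking the triangle inequality, to reuse the paired-query discrepancy argument of Proposition~\ref{prop:query-discrepancy}, and then to bound the doubling dimension.

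\emph{Signs.} First we need an analogue of Lemma~\ref{lem:euclidean-code} without geometry, namely Lemma~\ref{lem:doubling-signs}. It asks for a family $\mathcal H\subseteq\{-1,1\}^q$ of size $\operatorname{poly}(q+1)$ whose uniform distribution gives unbiased four-wise independent signs. A standard choice is evaluations of degree-$3$ polynomials over $\mathrm{GF}(2^a)$ with $2^a\ge q$, or the parity code from the proof of Lemma~\ref{lem:euclidean-code}. The same second- and fourth-moment calculation together with Lemma~\ref{lem:holder} then yields $\max_{\chi}\sum_r|\sum_sA^{(r)}_s\chi_s|\ge\frac1{\sqrt3}\sum_r\|A^{(r)}\|_2$.

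\emph{Metric.} Take $B$ copies $b$. Each copy has an anchor $o_b$, data points $p_{b,j,i,s}$ at distance $1$ from $o_b$, and candidate centers $c_{b,i}(\sigma,\chi)$ for $\sigma\in\{-1,1\}^n$ and $\chi\in\mathcal H$. Within a copy, set $d(p,p')=2$ for distinct points, set $d(p_{b,j,i,s},c_{b,i}(\sigma,\chi))=R-\sigma^{(j)}\chi_s$, and set the distance to centers of other labels to $R+1$. Centers are at distance $R$ from the anchor and at distance $2$ from each other. All listed values lie in $[R-1,R+1]$ or in $\{1,2\}$, and $R\ge4$, so the triangle inequality holds by the standard ``all distances in $[a,2a]$'' argument on each scale. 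Copies are placed far apart, for instance as points along a line at geometrically growing spacing, and padding points are added so that the total count, and hence the dimension, does not depend on $B$.

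\emph{Doubling dimension.} For the upper bound: a ball of radius $r<1$ is a single point. A ball of radius $r\in[1,R)$ inside one copy has at most $nmq+m2^n|\mathcal H|+1$ points. A ball of larger radius within a copy is covered by a few balls of half the radius centered at the anchor, up to constant factors. The line placement contributes $O(1)$. This gives $\ddim\le O(n+\log(m+1)+\log(q+1))$. For the lower bound, a radius-$2$ ball around $c_{b,i}(\cdot)$ contains the $2^n$ sign patterns at mutual distance $2$; radius-$1$ balls separate them, so $\ddim\ge n$ up to constants. Choose $n\asymp D$ with a small constant. Then $D\ge C_z\log(2/\eps)$ absorbs $\log m+\log q=O(\log(1/\eps))$, and the two bounds place the dimension in $[D/8,D]$.

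\emph{Error and parameters.} Given $(S,w)$ with $|S|\le|X|/2$, activate the $g$ copies with the most omissions; together they omit at least half their points. Query $C^\pm$ uses $c_{b,i}(\pm\sigma_{bi},\chi)$ on active copies, anchors on the other $B-g$ copies, and far padding to reach $k$ centers; the budget is $B+g(m-1)\le k$. Nearest assignments follow because $R-1<R+1$ and the anchors of other copies are far. In the difference $\Delta(C^-)-\Delta(C^+)$, inactive terms cancel and active terms contribute $[(R+1)^z-(R-1)^z]\sigma\chi_s(w-1)$. Choose $\sigma$ to align signs and $\chi$ by the sign lemma, then count omissions as in \eqref{eq:euclidean-omissions}. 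This gives a gap of $\Omega(R^{z-1}gnm\sqrt q)$ against a total cost of $O((gR^z+B)nmq)$. With $B\asymp gR^z$ the relative error is $\Omega(1/(R\sqrt q))$, so $q\asymp c\,\eps^{-2}R^{-2}$ suffices. Take $R^z\asymp\min\{k,\eps^{-z}\}$ with $R\ge4$; this is where $\eps\le\eps_z$ is used. Set $m\asymp R^z$, $g\asymp k/R^z$ and $B\asymp k$, with rounding handled as in Section~\ref{sec:center-set}. Then $|X|=Bnmq\asymp kD\eps^{-2}R^{z-2}$, which matches \eqref{eq:doubling-lower}.

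\emph{Main obstacle.} We expect the hardest part to be the doubling-dimension bookkeeping, in particular the intermediate radii and the cross-copy and padding structure: the upper bound must hold uniformly in $B$, and the lower bound $D/8$ must survive rounding. We will also verify the triangle inequality for center--center pairs, where $d(p,c)+d(p,c')\ge 2R-2\ge2$ holds easily.
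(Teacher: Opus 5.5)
Your proposal is correct and follows the paper's proof essentially step for step: the same per-copy metric with input--center distances $R-\sigma^{(j)}\chi_s$ and $R+1$, components joined along a line, the $g$ copies with the most omissions as active, the paired queries $C(\pm\sigma,\chi)$ with a four-wise independent sign family, and the scales $m\asymp R^z$, $g\asymp k/R^z$, $B\asymp gR^z$, $q\asymp\eps^{-2}R^{-2}$, $R^z\asymp\min\{k,\eps^{-z}\}$. When you write it up, two details need fixing: the mixed triangle $(p,a_b,c)$ with sides $(1,R,R\pm1)$ is tight and is not covered by your per-scale $[a,2a]$ argument (it holds only because $d(a_b,c)=R$ lies exactly between the two possible input--center values $R\pm1$), and the $k$ padding points must themselves sit on the line as singleton components, since mutually equidistant padding points would force $\ddim\ge\log_2 k$, which the hypothesis $D\ge C_z\log(2/\eps)$ does not allow.
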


\paragraph{Proof of Lemma~\ref{lem:doubling-lower}.}
Unlike in the Euclidean construction, we can prescribe the signed distances directly. A center will serve several groups of points at distances $R-1$ or $R+1$, with an independently chosen sign for each group. The doubling dimension, rather than the distance scale $R$, controls the number of groups. Comparing queries with opposite sign choices then detects omissions despite arbitrary coreset weights.

\subsection{Construction of the ambient metric}

Let $B,n,m,q\ge1$ be integer parameters and let $R\ge4$, to be chosen below. For each $b\in[B]$, create an anchor $a_b$ and a copy
\[
P_b=\{p_{b,j,i,s}:j\in[n],\ i\in[m],\ s\in[q]\}.
\]
The unweighted dataset is
\[
X=\bigcup_{b=1}^B P_b,\qquad |X|=Bnmq.
\]

We index the candidate centers by a finite family $\mathcal H$ of $\pm1$ vectors. This family has the same unbiased four-wise independence property as the Euclidean parity vectors $\{((-1)^{\ell(s;v)})_{s\in[q]}:v\in V\}$ from Lemma~\ref{lem:euclidean-code}. Its discrepancy guarantee will play the same role in detecting omitted points, but no projection levels are needed.

\begin{lemma}[A finite sign family]\label{lem:doubling-signs}
For every integer $q\ge1$, there is a distribution on a finite set $\mathcal H\subseteq\{-1,1\}^q$ such that a random vector $\chi\in\mathcal H$ has unbiased, four-wise independent coordinates and
\begin{equation*}
 2\le|\mathcal H|\le4(q+2)^6,
 \qquad
 \E_{\chi}\left| \left\langle{A,\chi}\right\rangle \right|
 \ge\frac1{\sqrt3}\left\| A \right\|_2
 \quad(A\in\R^q).
\end{equation*}
The family and its distribution depend only on $q$.
\end{lemma}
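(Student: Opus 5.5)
The plan is to reuse the binary-word construction from the proof of Lemma~\ref{lem:euclidean-code}, with a word length $L$ fixed by $q$. Set $L=\lceil 6\log_2(q+2)\rceil$, so that $2^L\ge(q+2)^6>q+\binom q3$. The greedy argument from that proof then produces distinct words $b_1,\ldots,b_q\in\{0,1\}^L$, no four of which sum to zero modulo two.

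The family $\mathcal H$ is built from independent uniform signs $r_{a,\beta}$, for $a\in[L]$ and $\beta\in\{0,1\}$. For each array of signs, define
\[
\chi_s=\prod_{a=1}^L r_{a,b_s(a)}.
\]
Let $\mathcal H$ be the set of resulting vectors $\chi\in\{-1,1\}^q$, and let the distribution be the pushforward of the uniform distribution on the $2^{2L}$ sign arrays. This is the parity vector $((-1)^{\ell(s;v)})_s$ of Lemma~\ref{lem:euclidean-code}, with the geometric embedding dropped. By construction, the family and its distribution depend only on $q$.

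The moment claims transfer directly. The parity-cancellation computation in Eq.~\eqref{eq:euclidean-parity-cancellation} shows that every product of $1$, $2$, $3$ or $4$ distinct coordinates has mean zero. For products of $2$ or $4$ coordinates this uses distinctness and the absence of four-word zero sums. For products of $1$ or $3$ coordinates it uses odd multiplicity at some coordinate. Hence the coordinates are unbiased and four-wise independent. For the discrepancy bound, let $Y=\langle A,\chi\rangle$. Then
\[
\E Y^2=\|A\|_2^2,\qquad \E Y^4\le 3\|A\|_2^4.
\]
Lemma~\ref{lem:holder} then gives $\E|Y|\ge\|A\|_2/\sqrt3$ when $A\neq0$, and the case $A=0$ is trivial.

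The cardinality bounds are the only step that needs new thought. For the upper bound, $|\mathcal H|\le 2^{2L}\le 4\cdot 2^{12\log_2(q+2)}=4(q+2)^{12}$. This misses the stated $4(q+2)^6$, so I will pick the word length more economically. One option is to shorten the words: each word only needs one bit per coordinate, so one could use signs $r_a$ multiplied by a fixed bit-dependent pattern, giving $2^L$ vectors instead of $2^{2L}$. The simplest option is a smaller $L$. With $L=\lceil 3\log_2(q+2)\rceil$, we get $2^L\ge(q+2)^3>q+\binom q3$, since $(q+2)^3>q+q^3/6$. Then $2^{2L}\le 4(q+2)^6$, because $2L\le 6\log_2(q+2)+2$. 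For the lower bound $|\mathcal H|\ge2$, note that $\chi_1$ is unbiased, so both of its values occur, and $\mathcal H$ contains at least two distinct vectors.

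The main obstacle is the counting check that $2^L>q+\binom q3$ holds with $L=\lceil3\log_2(q+2)\rceil$ uniformly for all $q\ge1$. This is the elementary inequality $(q+2)^3>q+\binom q3$, which I will verify by expanding the cube.
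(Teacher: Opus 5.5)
Your final argument is correct and is essentially the paper's proof: you use the same greedy binary words, the same parity signs $\chi_s=\prod_a r_{a,b_s(a)}$, the same parity-cancellation and second/fourth-moment computation followed by Lemma~\ref{lem:holder}, and you get $|\mathcal H|\ge2$ from unbiasedness of $\chi_1$. The only difference is cosmetic: you fix $L=\lceil 3\log_2(q+2)\rceil$ explicitly, whereas the paper takes the least $L$ with $2^L>q+\binom q3$ and uses minimality to get $2^L\le 2(q+2)^3$, and both choices give $2^{2L}\le4(q+2)^6$ (you should also delete the abandoned aside about single signs $r_a$ with a fixed bit pattern, which as phrased would not give four-wise independence).
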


The proof is given in Section~\ref{sec:doubling-sign-proof}.

\paragraph{Candidate centers and their distances.}

For every copy $b\in[B]$, label $i\in[m]$, sign vector $\sigma_{bi}=(\sigma_{bi}^{(j)})_{j\in[n]}\in\{-1,1\}^n$, and $\chi\in\mathcal H$, add a distinct candidate center $c_{b,i,\sigma_{bi},\chi}$. Each copy contains $m2^n|\mathcal H|$ candidate centers.

We also include $k$ isolated padding points $a_{B+1},\ldots,a_{B+k}$ so that every query can be padded to $k$ centers.

The full point set is
\[
\begin{aligned}
 \calX={}&X\cup\{a_b:b\in[B+k]\}\\
 &{}\cup\{c_{b,i,\sigma_{bi},\chi}:
 b\in[B],\ i\in[m],\ \sigma_{bi}\in\{-1,1\}^n,\ \chi\in\mathcal H\}.
\end{aligned}
\]
A component consists of one copy $P_b$, its anchor $a_b$, and all its candidate centers; each padding point $a_{B+1},\ldots,a_{B+k}$ forms a separate singleton component.

The distances are specified by the following cases.
\begin{itemize}
\item \emph{Input--anchor distances.} $d(p,a_b)=1$ for every $p\in P_b$.
\item \emph{Input--input distances.} For distinct $p,p'\in P_b$, set $d(p,p')=2$.
\item \emph{Anchor--center distances.} For every candidate center $c$ in copy $b$, set $d(a_b,c)=R$.
\item \emph{Center--center distances.} For distinct candidate centers $c,c'$ in the same copy, set $d(c,c')=2$.
\item \emph{Input--center distances.} Within a copy $b$, set
\[
 d(p_{b,j,i,s},c_{b,i',\sigma_{bi'},\chi})=
 \begin{cases}
 R-\sigma_{bi}^{(j)}\chi_s,&i'=i,\\
 R+1,&i'\ne i.
 \end{cases}
\]
\item \emph{Distances between different components.} Assign the $B$ copies and $k$ singleton components fixed consecutive positions. For points in different components, set
\begin{equation}\label{eq:doubling-line-join}
 d(x,y)=10R|t-t'|
 \quad\text{if their components have positions $t$ and $t'$.}
\end{equation}
\item \emph{Diagonal and symmetry.} Finally, set $d(x,x)=0$ and impose symmetry.
\end{itemize}
These cases specify every distance, including those involving padding points. The line arrangement keeps the dimension independent of the number of components; equal cross-distances would not.

Thus all $nmq$ input points in $P_b$ lie in the unit ball centered at $a_b$. For each label $i$, the index $j$ specifies one of $n$ groups, each containing $q$ points indexed by $s$. A single selected center of label $i$ will serve all $n$ groups, with an independent sign choice for each group.

\subsection{The metric and its doubling dimension}\label{sec:doubling-dim-of-the-metric}

We prove that the points and distances specified above form a finite metric space $(\mathcal{X},d)$, and then bound its doubling dimension.

We will choose the parameters to satisfy
\begin{equation}\label{eq:doubling-capacity}
 nmq+1\le2^{2n},\qquad m|\mathcal H|\le2^n.
\end{equation}
The first condition bounds the inputs and anchor in each copy; the second bounds its $m2^n|\mathcal H|$ candidate centers. The next lemma converts these counts into a bound on the full metric's doubling dimension.

\begin{lemma}\label{lem:doubling-copy}
The prescribed distances define a metric. Under \eqref{eq:doubling-capacity}, its doubling dimension satisfies
\[
 n+1\le\ddim(\calX,d)\le2n+1,
\]
independently of the number of copies and padding points.
\end{lemma}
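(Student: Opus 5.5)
The proof has three parts: the metric axioms, the upper bound on the doubling dimension, and the lower bound. For the \emph{metric axioms}, only the triangle inequality needs work. I would first handle triples inside one component. There all nonzero distances lie in $\{1,2,R-1,R,R+1\}$, since $\sigma\chi_s\in\{\pm1\}$. Because $R\ge4$, every such distance is at least $1$ and at most $R+1\le 2(R-1)$. So any triangle with two sides of length at least $R-1$ is fine. The remaining triangles have at least two sides in $\{1,2\}$, hence involve at most one center. I would check these case by case:
\begin{itemize}
\item input--input--center: $|d(p,c)-d(p',c)|\le2=d(p,p')$;
\item anchor--input--center: $R\pm1$ versus $R$, with difference $1=d(a,p)$;
\item center--center--input: difference at most $2$;
\item center--center--anchor: both distances equal $R$.
\end{itemize}
For triples spanning components, the cross distance $10R|t-t'|$ dominates. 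The within-component diameter is at most $R+1<10R$, so the line metric handles these triangles: two points in the same component are at distance at most $R+1$ from each other, and both are at the same distance $10R|t-t'|$ from any third component.

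For the \emph{upper bound} $\ddim\le 2n+1$, fix a ball $B(x,r)$. I would cover it by at most $2^{2n+1}$ balls of radius $r/2$, splitting by scale.
\begin{itemize}
\item If $r<1$, the ball is a single point, since the minimum nonzero distance is $1$.
\item If $1\le r<10R$, the ball meets at most three components (the components of $x$ and of its line neighbours, and only those within distance $r$). Only when $r\ge10R$ is a neighbouring component reached, which I treat in the last case. So here the ball lies in one component.
\item Inside a component with $r<4$, there are two sub-cases. If the ball contains no center, it consists of inputs plus the anchor, so it has at most $nmq+1\le 2^{2n}$ points. If it contains centers, then $r\ge2$ forces only centers together with nearby inputs. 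Then the point count is at most $m2^n|\mathcal H|+nmq+1\le 2^{2n+1}$, and singleton balls suffice.
\item Inside a component with $4\le r<10R$, use the structure. All inputs and the anchor lie in the ball $B(a_b,1)$, and all centers lie in a set of diameter $2$. So two balls of radius $2\le r/2$ cover the component: one at $a_b$ and one at any center.
\item If $r\ge10R$, the ball lies in a contiguous interval of positions of length about $r/(10R)$. Each component has diameter less than $r/2$, so one half-radius ball per component works when the interval is short. For long intervals, use the doubling property of the line: about $O(1)$ balls of radius $r/2$, centered in equally spaced components, cover the interval, since a radius $r/2$ ball covers roughly $r/(10R)$ consecutive components.
\end{itemize}
The bookkeeping of constants across these cases is the main obstacle. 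The constants must land at $2^{2n+1}$ rather than a larger power. Wherever a case produces a few extra balls, I will use the slack from $2^{2n}+2^n\le2^{2n+1}$.

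For the \emph{lower bound} $\ddim\ge n+1$, take $x=a_b$ and $r=R+1$, or any radius at least $R$, so that the ball contains all $m2^n|\mathcal H|\ge 2\cdot2^n$ candidate centers of copy $b$. Since $|\mathcal H|\ge2$ by Lemma~\ref{lem:doubling-signs}, this gives at least $2^{n+1}$ centers. A ball of radius $r/2<2$ cannot contain two distinct centers, because they are at distance $2$ from each other. For this I need $r/2<1$, so I would instead use a ball centered at a center $c$ with radius $r=2$. That ball contains every center of the copy and no anchor. Each radius-$1$ ball contains at most one center, since centers are pairwise at distance $2$ and a ball of radius $1$ around an input has every center at distance $\ge R-1>1$. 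So $2^{n+1}$ balls are needed, and $\ddim\ge n+1$.
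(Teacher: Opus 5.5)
Your proposal is correct and follows the same route as the paper: direct triangle checks within a component plus the line metric across components, point-count covers inside a single component for $r<10R$, the line structure for $r\ge10R$, and the radius-$2$ ball around a candidate center for the lower bound. Your two-ball cover for $4\le r<10R$ is valid but unnecessary, since \eqref{eq:doubling-capacity} already bounds each component by $2^{2n+1}$ points. The one loose end is the ``$O(1)$'' for $r\ge10R$: the ball is a union of at most $2\lfloor r/(10R)\rfloor+1$ consecutive whole components, which split into at most three runs of at most $2\lfloor r/(20R)\rfloor+1$ components each, and a radius-$r/2$ ball at the anchor of each run's middle component covers that run, so three balls suffice and $3\le2^{2n+1}$ holds without any slack argument.
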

\begin{proof}
Within a single copy, the inputs and anchor have pairwise distances in $\{1,2\}$, and the candidate centers have pairwise distance $2$. For triangles involving both inputs/anchor and candidate centers: a triangle with an input, the anchor, and a candidate center has side lengths $(1,R,R\pm1)$; a triangle with the anchor and two candidate centers has side lengths $(R,R,2)$; and a triangle with either two inputs and one center or one input and two centers has one side of length $2$ and two sides in $[R-1,R+1]$. Since $R\ge4$, all intra-copy triangle inequalities hold.

Across components, each component has diameter at most $R+1<5R$. A triangle spanning three distinct components reduces to the line metric in \eqref{eq:doubling-line-join}, while a triangle with two points in the same component and one in another has two equal sides of length at least $10R$ and a third side of length at most $R+1$. Thus $(\calX,d)$ is a metric space.

By \eqref{eq:doubling-capacity}, each component has size at most $2^{2n+1}$. Any ball of radius $r<T:=10R$ is contained in a single component and can therefore be covered by at most $2^{2n+1}$ balls of radius $r/2$ (one per point). For $r\ge T$, a ball of radius $r$ is the union of all components whose line indices $t$ lie in an interval of length at most $2\lfloor r/T\rfloor+1$. Partitioning this index interval into at most three sub-intervals of length at most $2\lfloor r/(2T)\rfloor+1$ and placing a ball of radius $r/2$ at the anchor of each sub-interval's middle component covers every component in the sub-interval: points in the middle component lie within its diameter $R+1<T/2\le r/2$, and points in the other components of the sub-interval lie at distance at most $\lfloor r/(2T)\rfloor T\le r/2$. Since $3\le 2^{2n+1}$ for $n\ge1$, this establishes $\ddim(\calX,d)\le 2n+1$.

For the lower bound, a radius-two ball centered at any candidate center contains all $m2^n|\mathcal H|$ candidate centers in its copy. No radius-one ball covers two of them: their pairwise distance is two, every input and anchor is at distance at least $R-1>1$ from them, and all other components are farther away. Thus the doubling constant is at least $m2^n|\mathcal H|\ge2^{n+1}$, using $m\ge1$ and $|\mathcal H|\ge2$.
\end{proof}

\subsection{Query center sets and costs}

Fix an integer $g$ with $1\le g\le B$ and a weighted subset $(S,w)$ with $|S|\le|X|/2$. We choose the $g$ copies containing the most points omitted by $S$ as active, breaking ties arbitrarily. Without loss of generality, denote them by $P_1,\ldots,P_g$. The line positions remain fixed when the active copies are relabeled. We select $m$ candidate centers in each active copy, one per label, and the anchor of each inactive copy.

For $\chi\in\mathcal H$, choose one vector $\sigma_{bi}$ for each active copy $b\in[g]$ and label $i\in[m]$, and write
\[
 \sigma=(\sigma_{bi})_{b\in[g],\,i\in[m]}
 \in\bigl(\{-1,1\}^n\bigr)^{g\times m},
 \qquad \sigma_{bi}=(\sigma_{bi}^{(j)})_{j\in[n]}.
\]
Thus $\sigma$ consists of $gm$ $n$-dimensional vectors, one for each selected active center; $j$ indexes the coordinates of each vector. Together with the inactive anchors, this uses $gm+(B-g)=B+g(m-1)$ centers. To stay within the center budget, we will choose the parameters so that
\begin{equation}\label{eq:doubling-center-budget}
 B+g(m-1)\le k.
\end{equation}
Adding fixed padding centers gives the following set of $k$ centers:
\begin{equation*}
\begin{aligned}
 C(\sigma,\chi)
 ={}&\{c_{b,i,\sigma_{bi},\chi}:b\in[g],\ i\in[m]\}\\
 &{}\cup\{a_b:g<b\le B\}\\
 &{}\cup\{a_{B+t}:1\le t\le k-B-g(m-1)\}.
\end{aligned}
\end{equation*}
With these active copies fixed, define
\begin{equation*}
 \mathcal C=
 \left\{C(\sigma,\chi):
 \sigma\in\bigl(\{-1,1\}^n\bigr)^{g\times m},\quad
 \chi\in\mathcal H\right\}.
\end{equation*}
In our error analysis, we pair each $C(\sigma,\chi)\in\mathcal C$ with $C(-\sigma,\chi)$, where $-\sigma$ negates all $gm$ sign vectors while keeping $\chi$, the active copies, and the padding centers fixed.

\begin{claim}[Pointwise and total costs]
For every $C(\sigma,\chi)\in\mathcal C$, the following properties hold.
\begin{itemize}
\item \emph{Pointwise costs.} In an active copy, each point satisfies
\begin{equation}\label{eq:doubling-pointwise-cost}
 d^z(p_{b,j,i,s},C(\sigma,\chi))
 =(R-\sigma_{bi}^{(j)}\chi_s)^z.
\end{equation}
Every point in an inactive copy has cost one.
\item \emph{Total cost.} The full dataset satisfies
\begin{equation}\label{eq:doubling-cost-bound}
 \cost_z(X,C(\sigma,\chi))\lesssim gR^znmq
\end{equation}
provided $B\le2gR^z$.
\end{itemize}
\end{claim}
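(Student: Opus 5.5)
The plan is to verify the nearest-center assignment for each point by direct case analysis on the prescribed distances, and then sum the resulting pointwise costs. Fix $C=C(\sigma,\chi)\in\mathcal C$. The centers of $C$ fall into three types: selected candidate centers $c_{b,i,\sigma_{bi},\chi}$ in active copies $b\in[g]$, anchors $a_b$ of inactive copies, and padding points $a_{B+t}$. For a point $p$ in component $b$, every center lying in a different component is at distance at least $10R$ by \eqref{eq:doubling-line-join}. Each padding point is its own singleton component, so it is never within distance $10R$ of any input point.

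\textbf{Active copies.} Take $p=p_{b,j,i,s}$ with $b\le g$. The only centers of $C$ in its component are the $m$ selected centers $c_{b,i',\sigma_{bi'},\chi}$, one for each $i'\in[m]$; the anchor $a_b$ is not in $C$ because $b$ is active. By the input--center rule, the distance to the label-$i$ center is $R-\sigma_{bi}^{(j)}\chi_s\in\{R-1,R+1\}$, while every other label gives exactly $R+1$. Hence the minimum distance is $R-\sigma_{bi}^{(j)}\chi_s$, since it is at most $R+1$ and $R+1<10R$. Ties at $R+1$ do not affect the cost. Raising this distance to the power $z$ gives \eqref{eq:doubling-pointwise-cost}.

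\textbf{Inactive copies.} For $b>g$, the anchor $a_b$ is in $C$ and lies at distance $1$ from $p$. Input points are not centers, and all other centers are at distance at least $10R>1$. So the cost is $1^z=1$.

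\textbf{Total cost.} The active contribution is at most $g\,nmq\,(R+1)^z$. Using $R\ge4$, we have $(R+1)^z\le(5/4)^zR^z$, a constant depending only on $z$ times $R^z$. The inactive contribution is $(B-g)nmq\le Bnmq\le 2gR^znmq$ under the hypothesis $B\le2gR^z$. Adding the two gives \eqref{eq:doubling-cost-bound}.

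There is no real obstacle in this argument. The only points needing care are that the active anchor is excluded from $C$, so active points must be served at distance about $R$, and that padding points are isolated components, so they never serve input points.
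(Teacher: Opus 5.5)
Your proposal is correct and follows essentially the same argument as the paper. Both proofs identify the designated label-$i$ center as a nearest center: it lies at distance $R-\sigma_{bi}^{(j)}\chi_s$, other labels lie at $R+1$, and centers in other components lie at distance at least $10R$. Both then assign inactive points to their anchors at distance one and sum to $[g(R+1)^z+B-g]nmq=O(gR^znmq)$, using $R\ge4$ and $B\le2gR^z$.
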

\begin{proof}
In an active copy, the anchor is not selected. The point $p_{b,j,i,s}$ has distance $R-\sigma_{bi}^{(j)}\chi_s$ to its designated center and distance $R+1$ to each selected center of another label. All selected centers outside this copy have distance at least $10R>R+1$. Thus the designated center is always a nearest center, with possible ties when its distance is $R+1$, proving \eqref{eq:doubling-pointwise-cost}. An inactive point is served by its own anchor at distance one.

Summing these pointwise costs gives
\[
 \cost_z(X,C(\sigma,\chi))
 \le[g(R+1)^z+B-g]nmq.
\]
To obtain \eqref{eq:doubling-cost-bound}, it is sufficient to keep the inactive contribution within a constant factor of the active contribution. We therefore impose
\begin{equation}\label{eq:doubling-cost-balance}
 B\le2gR^z,
\end{equation}
which will be enforced in the parameter choice. Since $R\ge4$, the preceding cost estimate is then $O(gR^znmq)$.
\end{proof}

\subsection{Coreset error and parameter choice}

\begin{proposition}[Two center sets detecting omissions]\label{prop:doubling-discrepancy}
Under \eqref{eq:doubling-center-budget}, \eqref{eq:doubling-cost-balance}, and \eqref{eq:doubling-capacity}, for every weighted subset $(S,w)$ with $|S|\le|X|/2$, the family $\mathcal C$ constructed above for $S$ contains two center sets $C^+,C^-$, each of size exactly $k$, satisfying
\begin{equation}\label{eq:doubling-relative-error}
 \max_{C\in\{C^+,C^-\}}
 \frac{|\cost_z(S,C)-\cost_z(X,C)|}{\cost_z(X,C)}
 \gtrsim\frac1{R\sqrt q}.
\end{equation}
\end{proposition}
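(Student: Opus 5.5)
We follow the paired-query argument of Proposition~\ref{prop:query-discrepancy}, with the explicit distances of the doubling construction in place of the Euclidean cost identity. Extend $w$ by zero to $X$ and write $\Delta(C)=\sum_{p\in X}(w(p)-1)d^z(p,C)$.

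\emph{Cancellation.} Fix $\chi\in\mathcal H$ and compare $C(\sigma,\chi)$ with $C(-\sigma,\chi)$. The two queries use the same inactive anchors and the same padding centers. By \eqref{eq:doubling-pointwise-cost}, every inactive point has cost one under both queries, so the inactive contributions cancel exactly, whatever their weights. For an active point, $\sigma_{bi}^{(j)}\chi_s\in\{\pm1\}$ gives
\[
 d^z(p_{b,j,i,s},C(-\sigma,\chi))-d^z(p_{b,j,i,s},C(\sigma,\chi))=\kappa\,\sigma_{bi}^{(j)}\chi_s,\qquad \kappa:=(R+1)^z-(R-1)^z.
\]
Hence
\[
 \Delta(C(-\sigma,\chi))-\Delta(C(\sigma,\chi))=\kappa\sum_{b\le g}\sum_{j,i}\sigma_{bi}^{(j)}\sum_s(w(p_{b,j,i,s})-1)\chi_s.
\]

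\emph{Choosing the signs.} Each coordinate $\sigma_{bi}^{(j)}$ can be chosen independently, because every $\sigma_{bi}\in\{-1,1\}^n$ is allowed. For fixed $\chi$, take $\sigma_{bi}^{(j)}$ to be the sign of its inner sum. The gap then equals $\kappa\sum_{b,j,i}|\langle A^{(b,j,i)},\chi\rangle|$, where $A^{(b,j,i)}=(w(p_{b,j,i,s})-1)_s$.

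\emph{Averaging over $\chi$.} By Lemma~\ref{lem:doubling-signs} and linearity of expectation, $\E_\chi\sum_{b,j,i}|\langle A^{(b,j,i)},\chi\rangle|\ge\frac1{\sqrt3}\sum\|A^{(b,j,i)}\|_2$. Some $\chi\in\mathcal H$ attains at least this average. As in \eqref{eq:euclidean-omissions}, each omitted point contributes $1$ to a squared norm, so $\|A^{(b,j,i)}\|_2\ge\sqrt{M_{bji}}\ge M_{bji}/\sqrt q$, where $M_{bji}\le q$ counts the omissions in that group.

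\emph{Counting omissions.} The active copies are the $g$ copies with the most omissions. The total number of omissions is at least $|X|/2=Bnmq/2$, so the active copies contain at least $g\cdot nmq/2$ of them. The resulting gap is therefore at least $\kappa\, g nm\sqrt q/(2\sqrt3)$. Since $R\ge4$, the mean value theorem gives $\kappa\ge 2z(R-1)^{z-1}\gtrsim R^{z-1}$, so the gap is $\gtrsim gR^{z-1}nm\sqrt q$.

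\emph{Conclusion.} One of $C^\pm$ has absolute error at least half the gap. Each of $C^\pm$ has exactly $k$ centers by \eqref{eq:doubling-center-budget}. Under \eqref{eq:doubling-cost-balance}, \eqref{eq:doubling-cost-bound} gives true cost $\lesssim gR^znmq$, so the relative error is $\gtrsim 1/(R\sqrt q)$, which is \eqref{eq:doubling-relative-error}. The capacity condition \eqref{eq:doubling-capacity} is needed only so that the metric is as in Lemma~\ref{lem:doubling-copy}.

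\emph{Main obstacle.} The key step is the omission-counting: the active set is chosen by most omissions rather than per copy, so we must check that it still carries at least a $g/B$ fraction of all omissions. This holds because the top $g$ copies by omission count contain at least an average share. The rest is routine, since the coefficient $\kappa$ is uniform across groups and no calibration is required.
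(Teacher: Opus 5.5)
Your proposal is correct and follows the paper's proof essentially step for step. The steps coincide: pairing $C(\sigma,\chi)$ with $C(-\sigma,\chi)$ so that inactive contributions cancel, the uniform coefficient $(R+1)^z-(R-1)^z$, choosing $\sigma$ by signs, averaging over $\chi$ via Lemma~\ref{lem:doubling-signs}, and the top-$g$ omission count giving $\gtrsim gnm\sqrt q$. The only cosmetic difference is at the end: you divide half the gap by the cost of the worse query, whereas the paper divides the full gap by the sum of the two costs, and these are equivalent.
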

\begin{proof}
Fix $(S,w)$ and let $P_1,\ldots,P_g$ be the active copies selected above. Throughout this proof, sums over $b,j,i$ range over $[g]\times[n]\times[m]$, and sums over $s$ range over $[q]$. Extend $w$ by zero to omitted points, and set
\[
 A_{bji}=(w(p_{b,j,i,s})-1)_{s\in[q]}.
\]
Write the signed error as
\[
 \Delta(C)=\cost_z(S,C)-\cost_z(X,C)
          =\sum_{p\in X}(w(p)-1)d^z(p,C).
\]
For arbitrary $\chi\in\mathcal H$ and $\sigma\in\bigl(\{-1,1\}^n\bigr)^{g\times m}$, let $C^+=C(\sigma,\chi)$ and $C^-=C(-\sigma,\chi)$. For an active point, \eqref{eq:doubling-pointwise-cost} gives
\[
d^z(p_{b,j,i,s},C^-)-d^z(p_{b,j,i,s},C^+)
=\sigma_{bi}^{(j)}\chi_s\bigl[(R+1)^z-(R-1)^z\bigr]\asymp R^{z-1}\sigma_{bi}^{(j)}\chi_s.
\]
Every inactive point has the same cost in both center sets. Subtracting the errors therefore cancels all inactive contributions and yields
\begin{align*}
 \Delta(C^-)-\Delta(C^+)
 &=\sum_{p\in X}(w(p)-1)
      \bigl[d^z(p,C^-)-d^z(p,C^+)\bigr]\\
 &\asymp R^{z-1}
      \sum_{b,j,i}\sigma_{bi}^{(j)}\langle A_{bji},\chi\rangle.
\end{align*}
For fixed $\chi$, choose $\sigma_{bi}^{(j)}=\mathrm{sgn}(\langle A_{bji},\chi\rangle)$, with either choice when this is zero. The last sum then equals $\sum_{b,j,i}|\langle A_{bji},\chi\rangle|$. Applying Lemma~\ref{lem:doubling-signs} and linearity of expectation now gives
\begin{equation}\label{eq:doubling-response}
 \max_{\chi\in\mathcal H}\sum_{b,j,i}|\langle A_{bji},\chi\rangle|
 \ge\E_{\chi}\sum_{b,j,i}|\langle A_{bji},\chi\rangle|
 \gtrsim\sum_{b,j,i}\|A_{bji}\|_2.
\end{equation}
The same $\chi$ is used in every term of the sum.

Let $M_{bji}=|\{s\in[q]:p_{b,j,i,s}\notin S\}|$. Each omitted point has coefficient $-1$, so $\|A_{bji}\|_2\ge\sqrt{M_{bji}}$. The active copies contain the most omissions, hence their average number of omissions is at least the average over all $B$ copies. Since $|X\setminus S|\ge Bnmq/2$, we have $\sum_{b,j,i}M_{bji}\gtrsim gnmq$. Together with $0\le M_{bji}\le q$, this gives
\begin{equation}\label{eq:doubling-omissions}
 \sum_{b,j,i}\|A_{bji}\|_2
 \ge\sum_{b,j,i}\sqrt{M_{bji}}
 \ge\frac{\sum_{b,j,i}M_{bji}}{\sqrt q}
 \gtrsim gnm\sqrt q.
\end{equation}
Choose $\chi$ attaining the maximum in \eqref{eq:doubling-response} and align $\sigma$ as above. Combining \eqref{eq:doubling-response} and \eqref{eq:doubling-omissions} with $(R+1)^z-(R-1)^z\gtrsim R^{z-1}$ gives
\begin{equation*}
 \Delta(C^-)-\Delta(C^+)\gtrsim gR^{z-1}nm\sqrt q.
\end{equation*}
By the triangle inequality and \eqref{eq:doubling-cost-bound},
\begin{align*}
 \max_{C\in\{C^+,C^-\}}\frac{|\Delta(C)|}{\cost_z(X,C)}
 &\ge\frac{\Delta(C^-)-\Delta(C^+)}
              {\cost_z(X,C^-)+\cost_z(X,C^+)}\\
 &\gtrsim\frac{gR^{z-1}nm\sqrt q}{gR^znmq}
 =\frac1{R\sqrt q}.
\end{align*}
This proves \eqref{eq:doubling-relative-error}.
\end{proof}

\paragraph{Choosing the parameters.}
Set $n=\lfloor D/8\rfloor$. Fix a sufficiently small constant $0<\gamma_z\le1$ so that Proposition~\ref{prop:doubling-discrepancy} gives relative error greater than $\eps$ whenever
\[
\eps R\sqrt q\le\sqrt{\gamma_z}.
\]
Ignoring integer rounding for the moment, take $B\asymp_z k$ and $gm\asymp k$. For fixed $R$, the error condition allows $q\asymp_z\eps^{-2}R^{-2}$, while the cost-balance condition requires $g\gtrsim_z k/R^z$. Taking $g$ at this scale leaves the largest possible order of $m$ within the center budget, giving
\[
g\asymp_z k/R^z,\qquad
m\asymp_z R^z,\qquad
|X|\asymp_z kD\eps^{-2}R^{z-2}.
\]
The requirements $g\ge1$ and $q\ge1$ limit $R^z$ to order $\min\{k,\eps^{-z}\}$. Since $z\ge2$, the dataset size is nondecreasing in $R$. We therefore take
\[
k_0:=\min\{k,\lfloor\eps^{-z}\rfloor\},
\qquad R^z\asymp_z k_0.
\]
These scales yield the desired size:
\[
|X|\asymp_z kD\eps^{-2}k_0^{1-2/z}
\asymp_z kD\min\{k^{1-2/z}\eps^{-2},\,\eps^{-z}\}.
\]

To see why we allow $g$ active copies rather than fixing $g=1$, consider the parameter scales established above. The discrepancy argument gives
\[
 \text{error gap}=\Omega(g n m^{2-1/z}\sqrt q),\qquad
 \cost_z(X,C)=\Theta((gm+B)nmq).
\]
For $g=1$, $B\asymp k$, and $m\le k$, the relative-error lower bound certified by this argument has scale
\[
 \frac{m^{1-1/z}}{(m+B)\sqrt q}
 \lesssim\frac{k^{-1/z}}{\sqrt q}.
\]
Making this guarantee exceed $\eps$ would require $q\lesssim\eps^{-2}k^{-2/z}$, which is incompatible with $q\ge1$ when $k/\eps^{-z}$ is unbounded. Thus this discrepancy argument with a single active copy does not yield the required dataset size throughout the parameter range. Allowing $g\asymp k/m$ lets us keep $m\lesssim\eps^{-z}$ while using $\Theta(k)$ active centers; these parameter scales give the claimed lower bound.

\paragraph{Integral choices and dimension.}
Choose $\rho_z>0$ with
\[
\rho_z\le\min\{1/4,\sqrt{\gamma_z}/2\},
\qquad \rho_z^z\le1/8,
\]
and choose $\eps_z\le1/2$ small enough that $\gamma_z\eps_z^{-2}/16\ge4$. For $k_0$ above a sufficiently large constant depending only on $z$, set
\[
\begin{gathered}
g=\lfloor k/k_0\rfloor,\qquad
R=\rho_z k_0^{1/z},\qquad
B=g\lceil R^z\rceil,\\
m=\lfloor k_0/2\rfloor,\qquad
q=\lfloor\gamma_z\eps^{-2}R^{-2}\rfloor.
\end{gathered}
\]
The threshold ensures $R\ge4$ and $m\ge1$. Since $k_0\le\eps^{-z}$, the unrounded value defining $q$ is at least $\gamma_z/\rho_z^2\ge4$. Thus $q\ge1$, $q\asymp_z\eps^{-2}R^{-2}$, and $\eps R\sqrt q\le\sqrt{\gamma_z}$. Also $1\le g\le B\le2gR^z$, and
\[
B+g(m-1)
=g(\lceil R^z\rceil+m-1)
\le g(R^z+k_0/2)
\le gk_0
\le k.
\]
Hence the center budget and cost balance hold. Since $gk_0\asymp k$, these integral choices preserve the parameter scales above.

Taking $C_z$ sufficiently large in the dimension assumption ensures
\begin{equation}\label{eq:doubling-dimension-slack}
n\ge16(z+20)\log_2(2/\eps).
\end{equation}
Using $m\le\eps^{-z}$, $q\le\eps^{-2}$, and Lemma~\ref{lem:doubling-signs}, we obtain
\[
\begin{aligned}
\log_2(m|\mathcal H|)
&\le(z+14)\log_2(2/\eps)\le n,\\
\log_2(nmq+1)
&\le\log_2 n+(z+4)\log_2(2/\eps)\le2n.
\end{aligned}
\]
Here the last inequality uses $\log_2 n\le n$ and \eqref{eq:doubling-dimension-slack}. These estimates verify \eqref{eq:doubling-capacity}, so Lemma~\ref{lem:doubling-copy} gives
\[
D/8<n+1\le\ddim(\calX,d)\le2n+1\le D.
\]
The capacity estimates depend on $\eps$ and $z$, but not on the number of copies, so no $\log k$ term is needed in the dimension assumption. Together with Proposition~\ref{prop:doubling-discrepancy} and the size calculation above, this proves the claim for sufficiently large $k_0$.

For bounded $k_0$, keep $n=\lfloor D/8\rfloor$ and use
\[
B=g=k,\qquad m=1,\qquad R=4,\qquad
q=\lfloor\gamma_z\eps^{-2}/16\rfloor.
\]
The choice of $\eps_z$ ensures $q\ge1$. The center budget, cost balance, and error condition hold, and the same capacity estimates apply. This gives $|X|\asymp_z kD\eps^{-2}$, which has the required order because $k_0^{1-2/z}=\Theta_z(1)$ in this case. Proposition~\ref{prop:doubling-discrepancy} completes the proof of Lemma~\ref{lem:doubling-lower}.
\begin{remark}[$k$-means]
For $z=2$, the constant-scale choice $B=g=k$, $m=1$, $R=4$, and $q=\lfloor\gamma_2\eps^{-2}/16\rfloor$ already gives $|X|\asymp kD\eps^{-2}$ for every $k$.
\end{remark}

\subsection{Proof of the sign-family lemma}\label{sec:doubling-sign-proof}
To prove Lemma~\ref{lem:doubling-signs}, we use the binary signs from the Euclidean code in Lemma~\ref{lem:euclidean-code}, without the projection levels.

\begin{proof}[Proof of Lemma~\ref{lem:doubling-signs}]
Let $L\ge1$ be the least integer with $2^L>q+\binom q3$. Use the distinct binary words $b_1,\ldots,b_q\in\{0,1\}^L$ constructed in the proof of Lemma~\ref{lem:euclidean-code}, no four distinct of which sum to zero coordinatewise modulo two. For independent uniform signs $r_{a,\beta}$, $a\in[L]$, $\beta\in\{0,1\}$, set
\[
 \chi_s=\prod_{a=1}^L r_{a,b_s(a)}.
\]
The parity-moment calculation from the Euclidean code proof gives unbiased four-wise independent coordinates. There are at most $2^{2L}$ outcomes, and minimality of $L$ gives $2^L\le2(q+\binom q3+1)\le2(q+2)^3$. This proves the upper cardinality bound; unbiasedness gives $|\mathcal H|\ge2$. For $Z=\sum_sA_s\chi_s$, the second and fourth moments satisfy
\[
 \E Z^2=\sum_sA_s^2,\qquad
 \E Z^4\le3\left(\sum_sA_s^2\right)^2.
\]
If $A\ne0$, H\"older's inequality gives $\E|Z|\ge(\E Z^2)^{3/2}/(\E Z^4)^{1/2}\ge\|A\|_2/\sqrt3$. The case $A=0$ is immediate.
\end{proof}

\section{Upper bound in doubling metrics}
\label{sec:doubling-upper-proof}

The following lemma supplies the upper bound used in Theorem~\ref{thm:intro-doubling}. Throughout this section, $X$ has input weights $u_p>0$, and $\cost_z(X,C)=\sum_{p\in X}u_pd^z(p,C)$; input subsets inherit these weights. The output $S$ has weights $w_p$, with $\cost_z(S,C)=\sum_{p\in S}w_pd^z(p,C)$. All cardinalities count support points.

\begin{lemma}[Coreset upper bound in doubling metrics]\label{lem:doubling-upper}
For every fixed real $z\ge2$, integer $k\ge1$, $D\ge1$, and $0<\eps<1$, every finite positively weighted dataset $X$ in a metric space $(\calX,d)$ with $\ddim(\calX,d)\le D$ admits a nonnegatively weighted subset $S\subseteq X$ of size
\begin{equation}\label{eq:doubling-upper}
 O\!\left(k^{2-2/z}\eps^{-2}
       [D\log(2/\eps)+\log(2k)]\right)
\end{equation}
such that
\[
 |\cost_z(S,C)-\cost_z(X,C)|\le\eps\cost_z(X,C)
 \qquad\text{for every }C\subseteq\calX\text{ with }1\le|C|\le k.
\]
\end{lemma}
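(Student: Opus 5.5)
We follow the outline of the technical overview. Compute a constant-factor solution $A$ with at most $k$ anchors; take $A$ to be an $O(1)$-approximation, possibly with $O(k)$ anchors, which only affects constants. Partition $X$ into clusters $P_i$ around $a_i\in A$, with mass $M_i=\sum_{p\in P_i}u_p$ and anchor cost $V_i=\cost_z(P_i,A)$.

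\emph{Two-point summaries.} In each cluster, consider the polytope of nonnegative weights $\widehat u$ on $P_i$ satisfying $\sum\widehat u_p=M_i$ and $\sum\widehat u_p d^z(p,A)=V_i$. The input weights $u$ lie in this polytope. By Carath\'eodory, each vertex has at most two nonzero coordinates, and $u$ is a convex combination of vertices. Sampling a vertex according to these coefficients gives an unbiased summary with $\E\widehat u=u$, which preserves $M_i$ and $V_i$ exactly (Lemma~\ref{lem:doubling-quadrature}). Because mass is preserved exactly, the error is $\sum_p(\widehat u_p-u_p)g_C(p)$ with $g_C(p)=d^z(p,C)-d^z(a_i,C)$.

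Write $x=d(p,A)$ and $y=d(a_i,C)$. The generalized triangle inequality gives $|g_C(p)|\lesssim x^z+y^{z-1}x$. Combining this with H\"older, $\sum\widehat u_p x\le M_i^{1-1/z}V_i^{1/z}$, and with exact preservation of $V_i$ yields the deterministic range bound
\[
\Bigl|\sum_p\widehat u_p g_C(p)\Bigr|\lesssim V_i+y^{z-1}M_i^{1-1/z}V_i^{1/z}.
\]
In cluster $i$ we average $s_i=1+\lceil T\alpha_i^{2/z}\rceil$ independent summaries, where $\alpha_i=V_i/V$. Hoeffding's inequality then involves $\sum_i s_i^{-1}(\text{range})^2$. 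The factor $\alpha_i^{2/z}$ cancels $V_i^{2/z}$, which leaves terms of the form $V^{2/z}(M_iy^z)^{2-2/z}$ and $V_i^{2-2/z}V^{2/z}$. Since $2-2/z\ge1$, each such sum is at most $(\sum M_iy^z+V)^2/T$. Using $\sum_i M_iy^z\lesssim\cost_z(X,C)+V$ and $V\lesssim\cost_z(X,C)$, the failure probability for a fixed $C$ is at most $2\exp(-\Omega(\eps^2T))$. Concavity gives $\sum_i\alpha_i^{2/z}\le k^{1-2/z}$, so the total size is $O(k+Tk^{1-2/z})$.

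\emph{Uniform control over all $C$.} This is the main obstacle. Split residuals by whether $d(p,A)<c_z\eps\, d(a_i,C)$. On the near part, $|g_C(p)|\le O(\eps)\, d^z(a_i,C)$ pointwise, up to a term involving $x^z$. Its total over both input and sample is controlled by exact mass preservation, and also by anchor-cost preservation, without any union bound.

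For the far part, build the covering of Lemma~\ref{lem:doubling-covering}. Each query $C$ is encoded by, for each center, a smallest net ball at the relevant scales around anchors, together with cluster cutoffs. Doubling dimension $D$ gives $(2/\eps)^{O(D)}$ choices per center and at most $|X|$ cutoffs, so the log-size is $H=O(k\log(2|X|)+kD\log(2/\eps))$. Queries that share a code have far-part residuals differing by $O(\eps)d^z(p,A)+O(\eps)d^z(p,C)$. Summed under both weightings, this difference is bounded by $O(\eps)(V+\cost_z(X,C))$ via exact preservation of $V_i$. For each code we fix a representative before sampling, and choose it to minimize $\cost_z(X,C)+V$ within a factor of $2$. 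A union bound over $e^{H}$ representatives with $T=O(\eps^{-2}H)$ then yields a coreset of size $O(k^{1-2/z}\eps^{-2}H)$, together with the $O(k)$ term. I expect the delicate part to be verifying that nearest-center distances for both $p$ and $a_i$ are determined by the recorded balls at scales up to $d(p,A)$, in both directions.

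\emph{Removing $\log|X|$.} The bound so far depends on $\log|X|$. Apply iterative size reduction~\cite{braverman2021excludedminor}: first reduce to a weighted coreset of size $\operatorname{poly}(k,D,\eps^{-1})$ with error $\eps/3$, and then apply the sampling bound once more to this weighted input, which is why positive weights are allowed. Then $\log|X|=O(\log(kD/\eps))$. Coreset composition, with the error budget split geometrically across stages, gives size $O(k^{2-2/z}\eps^{-2}[D\log(2/\eps)+\log(2k)])$, after absorbing $\log D$ into $D$ and $k\le k^{2-2/z}\eps^{-2}$.
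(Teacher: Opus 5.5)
Your proposal is correct and follows essentially the same route as the paper. The shared steps are constant-factor anchors, unbiased two-point summaries that exactly preserve each cluster's mass and anchor cost, the allocation $s_i=1+\lceil T\alpha_i^{2/z}\rceil$ with H\"older range bounds and Hoeffding, the cutoff at $d(p,A)<c_z\eps\,d(a_i,C)$ with the record-and-cutoff covering and representatives fixed before sampling, and a final round of iterative size reduction.

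Your small deviations are harmless. You bound the cut-off residuals by $O(\eps)d^z(a_i,C)$ and sum them via exact mass preservation, where the paper uses $d(a_i,C)\le2d(p,C)$ and then absorbs $\cost_z(S,C)$. The extra $O(\eps)d^z(p,C)$ slack you allow in the covering is also fine, because $\cost_z(S,C)\le2^{z-1}\bigl[\cost_z(X,A)+\sum_iM_id^z(a_i,C)\bigr]$ follows from preserving both mass and anchor cost, not anchor cost alone as you wrote.
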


\subsection{Coreset construction and proof of Lemma \ref{lem:doubling-upper}}

Let $n=|X|$. Choose a subset $A\subseteq X$ with $1\le|A|\le k$ minimizing $\cost_z(X,A)$; then $\cost_z(X,A)\le2^z\cost_z(X,C)$ for every query $C$ with $1\le|C|\le k$. Write $A=\{a_1,\ldots,a_{|A|}\}$ and partition $X$ into $P_1,\ldots,P_{|A|}$ by nearest-anchor assignment, breaking ties arbitrarily. We use the optimum only to simplify constants; approximate anchors are discussed at the end of this section.
Within each cluster $P_i$, the algorithm draws $s_i$ independent two-point weighted summaries and averages their weight vectors:

\newcounter{doublingalgorithm}
\newenvironment{doublingalgorithm}[1]{
  \par\addvspace{1em}\noindent
  \begin{minipage}{\linewidth}
  \refstepcounter{doublingalgorithm}
  
  \hrule\smallskip
  \textbf{Algorithm \thedoublingalgorithm. #1}\par\smallskip\small
}{
  \par\smallskip\hrule
  
  \end{minipage}\par\addvspace{1em}
}

\begin{doublingalgorithm}{Coreset construction via two-point sampling}
\label{alg:doubling-stage}
\textbf{Input:} $X$ with weights $u_p>0$, anchors $A$ and clusters $P_i$ as above, and $0<\eps,\beta<1$.
\begin{enumerate}
\setlength{\itemsep}{2pt}\setlength{\parskip}{0pt}
\item If $\cost_z(X,A)=0$, return $X$ with its input weights.
\item For each $i$, set
\begin{equation}\label{eq:doubling-allocation}
 s_i=1+\left\lceil K_z\eps^{-2}
       \left(\frac{\cost_z(P_i,A)}{\cost_z(X,A)}\right)^{2/z}
       [k\log(2n)+kD\log(2/\eps)+\log(2/\beta)]
     \right\rceil,
\end{equation}
where $K_z>0$ is a sufficiently large constant depending only on $z$.
\item Independently for each $i$ and $j\in[s_i]$, draw $(\widehat u_p^{(j)})_{p\in P_i}$ using the distribution in Lemma~\ref{lem:doubling-quadrature} with weights $(u_p)_{p\in P_i}$ and $h_p=d^z(p,A)$. Set
\begin{equation}\label{eq:doubling-average}
 w_p=\frac1{s_i}\sum_{j=1}^{s_i}\widehat u_p^{(j)} \qquad(p\in P_i).
\end{equation}
\item Return $(S,(w_p)_{p\in S})$, where $S=\{p\in X:w_p>0\}$.
\end{enumerate}
\end{doublingalgorithm}

The next lemma supplies the two-point sampling distribution for Step~3.

\begin{lemma}[Two-point unbiased sampling]\label{lem:doubling-quadrature}
Let $(u_p)$ be nonnegative weights on a finite set and let $h_p\ge0$. There is a finite distribution over nonnegative weight vectors $(\widehat u_p)$, each supported on at most two points of $\{p:u_p>0\}$, such that
\[
 \sum_p\widehat u_p=\sum_pu_p,\qquad
 \sum_p\widehat u_ph_p=\sum_pu_ph_p,\qquad
 \E\widehat u_p=u_p\quad\text{for every }p.
\]
\end{lemma}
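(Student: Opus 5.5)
We set this up as a convexity argument on a polytope. Restrict to the support $P=\{p:u_p>0\}$; if it is empty, the zero vector is a trivial distribution (a point mass). Let $M=\sum_pu_p>0$ and $H=\sum_pu_ph_p$, and consider
\[
Q=\Bigl\{y\in\R_{\ge0}^{P}:\ \sum_py_p=M,\ \sum_py_ph_p=H\Bigr\}.
\]
This set is nonempty, since $u\in Q$. It is bounded, since $0\le y_p\le M$. Hence $Q$ is a polytope, and by Carath\'eodory/Minkowski $u$ is a finite convex combination $u=\sum_r\lambda_ry^{(r)}$ of vertices $y^{(r)}$ of $Q$.

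The distribution is then: pick $y^{(r)}$ with probability $\lambda_r$. Three properties follow. Each outcome satisfies both equality constraints exactly because it lies in $Q$. Each outcome is nonnegative and supported inside $P$. Finally, $\E\widehat u=\sum_r\lambda_ry^{(r)}=u$, so the summaries are unbiased coordinatewise.

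It remains to show that every vertex has at most two nonzero coordinates, and this is the only substantive step. Use the standard basic-feasible-solution characterization. At a vertex $y$ of $\{y\ge0,\ Ay=b\}$ with $A\in\R^{2\times P}$ having rows $\mathbf 1$ and $h$, the columns of $A$ indexed by $\mathrm{supp}(y)$ are linearly independent. Vectors in $\R^2$ admit at most two linearly independent members, so $|\mathrm{supp}(y)|\le2$.

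To keep the argument self-contained, one can prove this directly. Suppose three coordinates $p_1,p_2,p_3$ of $y$ are positive. The columns $(1,h_{p_t})$, $t=1,2,3$, are linearly dependent in $\R^2$, so there is a nonzero direction $\delta$ supported on $\{p_1,p_2,p_3\}$ with $\sum\delta_p=0$ and $\sum\delta_ph_p=0$. For small $\tau>0$, both $y\pm\tau\delta$ lie in $Q$, which contradicts $y$ being a vertex.

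I expect no real obstacle here. The only care needed is finiteness of the decomposition, which Carath\'eodory's theorem guarantees with at most $|P|+1$ vertices, and restricting to points with $u_p>0$, which is built into the definition of $Q$.
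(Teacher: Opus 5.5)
Your proposal is correct and follows the paper's proof. Both arguments take the bounded polytope of nonnegative weight vectors on $\{p:u_p>0\}$ satisfying the two linear constraints, note that its vertices (basic feasible solutions) have at most two positive entries, and sample a vertex according to a convex decomposition of $(u_p)$; your perturbation argument for the support bound and your handling of the empty-support case only fill in details the paper leaves implicit.
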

\begin{proof}
Consider the bounded polytope of nonnegative weight vectors $(\widehat u_p)$ supported on $\{p:u_p>0\}$ that satisfy the two linear equations. There are two equality constraints and nonnegativity constraints, so every vertex (basic feasible solution) has at most two positive entries. Since $(u_p)$ lies in this polytope, we can decompose $(u_p)$ as a convex combination of its vertices and sample a vertex with probability equal to its convex coefficient.
\end{proof}

Each draw in Step~3 is supported on at most two points of $P_i$ and preserves the cluster's total weight and anchor cost. Consequently, the averaged weights $(w_p)_{p\in X}$ satisfy, deterministically for every $i$,
\begin{equation}\label{eq:doubling-sample-totals}
\sum_{p\in P_i}w_p=\sum_{p\in P_i}u_p,\qquad
\sum_{p\in P_i}w_p d^z(p,A)=\sum_{p\in P_i}u_p d^z(p,A)=\cost_z(P_i,A).
\end{equation}
The averaged weights are also unbiased ($\E w_p=u_p$ for all $p\in X$). Together, these properties give the following guarantee for Algorithm~\ref{alg:doubling-stage}.

\begin{lemma}[Coreset guarantee for Algorithm~\ref{alg:doubling-stage}]\label{lem:doubling-finite}
For every $0<\eps,\beta<1$, Algorithm~\ref{alg:doubling-stage} returns a nonnegatively weighted subset $S\subseteq X$ of size at most
\begin{equation}\label{eq:doubling-finite}
 O\!\left(k^{1-2/z}\eps^{-2}
       [k\log(2n)+kD\log(2/\eps)+\log(2/\beta)]\right)
\end{equation}
that is a strong $\eps$-coreset of $X$ with probability at least $1-\beta$.
\end{lemma}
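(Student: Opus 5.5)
The plan has four parts: a size count, a pointwise deterministic bound on each two-point summary, concentration for a fixed query, and a union bound over a finite cover of queries. The degenerate case $\cost_z(X,A)=0$ is trivial, since the algorithm returns $X$ itself.

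\textbf{Size.} Each draw has support at most two, so $|S|\le2\sum_i s_i$. Write $\alpha_i=\cost_z(P_i,A)/\cost_z(X,A)$. Concavity of $x\mapsto x^{2/z}$ (as $z\ge2$) over at most $k$ clusters with $\sum_i\alpha_i=1$ gives $\sum_i\alpha_i^{2/z}\le k^{1-2/z}$. Substituting into \eqref{eq:doubling-allocation} yields \eqref{eq:doubling-finite}, with the additive $O(k)$ absorbed.

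\textbf{Fixed query.} Fix $C$, write $M_i=\sum_{P_i}u_p$ and $V_i=\cost_z(P_i,A)$, and set $g_C(p)=d^z(p,C)-d^z(a_i,C)$. By \eqref{eq:doubling-sample-totals}, mass is preserved, so
\[
\cost_z(S,C)-\cost_z(X,C)=\sum_i\sum_{P_i}(w_p-u_p)g_C(p).
\]
The key deterministic estimate bounds each single summary. Use $|g_C(p)|\lesssim d^z(p,A)+d(a_i,C)^{z-1}d(p,A)$, which follows from the triangle inequality and the mean value theorem for $x^z$. Then apply exact anchor-cost preservation together with H\"older, $\sum\widehat u_pd(p,A)\le M_i^{1-1/z}V_i^{1/z}$. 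This gives, on every draw,
\[
\Bigl|\sum_{P_i}\widehat u_p g_C(p)\Bigr|\lesssim V_i+d(a_i,C)^{z-1}M_i^{1-1/z}V_i^{1/z}=:\Lambda_i.
\]
The centered average over the $s_i$ summaries is then a sum of independent bounded terms. Hoeffding needs $\sum_i\Lambda_i^2/s_i$, which I bound by $\cost_z(X,A)^{2/z}T^{-1}\sum_i\Lambda_i^2V_i^{-2/z}$, where $T$ is the bracket times $K_z\eps^{-2}$. The terms are $V_i^{2-2/z}$ and $d(a_i,C)^{2z-2}M_i^{2-2/z}$. Since $2-2/z\ge1$, sums of such powers are at most the power of the sum. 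Combined with $\sum_iM_id^z(a_i,C)\lesssim\cost_z(X,C)+\cost_z(X,A)$ and $\cost_z(X,A)\le2^z\cost_z(X,C)$, this gives variance proxy $O(\cost_z(X,C)^2/T)$ and tail $2\exp(-\Omega(\eps^2T))$.

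\textbf{Union over queries (main obstacle).} This step is the hardest. Pointwise, I truncate: where $d(p,A)<c_z\eps\,d(a_i,C)$, the residual $g_C(p)$ is $O(\eps)$ times $d^z(a_i,C)$. Its contribution is controlled deterministically for both $X$ and $S$ via mass preservation. For the remaining points, I would build a discretization of queries of log-size $H=O(k\log(2n)+kD\log(2/\eps))$. For each center, record a net point of a doubling net of radius $\eps$ times its scale, where the scale is taken relative to the anchors and input points and there are only $\mathrm{poly}(n)$ relevant scales. Any two queries with the same record then have truncated residuals differing by $O(\eps)d^z(p,A)+O(\eps)|g_C|$. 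Summed against $u$ or $w$, this is $O(\eps)\cost_z(X,A)$ by anchor-cost preservation.

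The delicate point is making the nearest center to $p$ and to $a_i$ determined by records at scales at most $d(p,A)$. My first attempt is to record, for each center, its smallest covering scale around each anchor, which is what the truncation at $c_z\eps\,d(a_i,C)$ permits. For each record class, fix a representative query in advance and apply the fixed-query bound to it. A union bound over $e^{O(H)}$ classes, plus the $\log(2/\beta)$ term in $s_i$, gives success probability $1-\beta$. Rescaling $\eps$ by constants then completes the proof.
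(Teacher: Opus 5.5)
Your size count and fixed-query concentration match the paper's argument, including the H\"older step with the two preserved totals and the $\alpha_i^{2/z}$ allocation cancelling $V_i^{2/z}$. Bounding the truncated residuals by $O(\eps)d^z(a_i,C)$ and summing with mass preservation is also a valid substitute for the paper's $\frac{\eps}{4}d^z(p,C)$ bound. The gap is the union bound over queries, which you leave at a ``first attempt.'' Three ingredients are missing.

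\emph{First}, an arbitrary representative per class does not suffice. Its fixed-query bound is $O(\eps)[\cost_z(X,C_v)+\cost_z(X,A)]$, and costs within one class need not be comparable. Suppose every point of $P_i$ is truncated and the center nearest $a_i$ lies outside all covers. That center can then be moved so that $M_id^z(a_i,C)$, and hence $\cost_z(X,C)$, changes by a factor not bounded by any constant. Meanwhile the code is unchanged and $g_C$ stays nontrivial on other clusters. The paper fixes, before sampling, a representative whose $\cost_z(X,C_v)+\cost_z(X,A)$ is within a factor two of the infimum over its class. The infimum is positive because $\cost_z(X,A)>0$. Only with this choice does the representative's bound transfer to every member of the class.

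\emph{Second}, each center must carry a single record that is automatically fine at every scale where it matters. For each $p\in P_i$ with $r_p=d(p,A)>0$, the paper covers $B(a_i,Lr_p)$, where $L=2^{3z}\eps^{-1}+2$, by balls of radius $\delta r_p$ with $\delta\asymp\eps^z$. All these balls are indexed separately, and each $c\in C$ records only the smallest-radius ball containing it. For an active $p$, any nearest center of $C$ to $p$ or to $a_i$ lies in $B(a_i,Lr_p)$. Its record therefore has radius at most $\delta r_p$, and a center of $C'$ with the same record lies within $2\delta r_p$. This yields $k$ records with $\log(N+1)\lesssim\log(2n)+D\log(2/\eps)$ each. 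Recording per center--anchor pair, as your phrase ``around each anchor'' suggests, gives $k^2$ records and loses a factor $k$ in the size.

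\emph{Third}, the code must include the truncation pattern: the active suffix of each cluster sorted by $r_p$, with at most $(n+1)^k$ choices. Otherwise $C$ and $C'$ can disagree on which coordinates are truncated, so $g_C(p)$ and $g_{C'}(p)$ need not be close. The reverse comparison, which needs the nearest center of $C'$ to lie in $B(a_i,Lr_p)$, is then not guaranteed.

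With these three ingredients, the mean value theorem gives $|g_C(p)-g_{C'}(p)|\le2^{-z-6}\eps\,d^z(p,A)$, which sums to $O(\eps)\cost_z(X,A)$ by anchor-cost preservation. Your stated discrepancy $O(\eps)d^z(p,A)+O(\eps)|g_C|$ is not what a net argument produces. A coarser net of radius $\eps r_p$ gives $O(\eps)[d^z(p,A)+d(p,A)d(a_i,C)^{z-1}]$. The second part sums to $O(\eps)\cost_z(X,C)$ only after a further H\"older step over clusters, not by anchor-cost preservation alone.
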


To see the size bound in Lemma~\ref{lem:doubling-finite}, note first that if $\cost_z(X,A)=0$, then $|X|\le k$ and the bound holds trivially. Otherwise, each of the $s_i$ draws in cluster $P_i$ contributes at most two points to $S$, so
\begin{equation}\label{eq:doubling-sample-count}
\begin{aligned}
|S|\le 2\sum_i s_i&\lesssim \eps^{-2}\left[k\log(2n)+kD\log(2/\eps)+\log(2/\beta)\right]
\sum_i\left(\frac{\cost_z(P_i,A)}{\cost_z(X,A)}\right)^{2/z}\\
&\lesssim k^{1-2/z}\eps^{-2}\left[k\log(2n)+kD\log(2/\eps)+\log(2/\beta)\right].
\end{aligned}
\end{equation}
The last inequality follows from the concavity of $t^{2/z}$ for $z\ge2$ and $\sum_i\cost_z(P_i,A)=\cost_z(X,A)$. This proves the size bound in Lemma~\ref{lem:doubling-finite}; we defer its approximation guarantee to the next subsection.

\paragraph{Removing the input-size dependence.}
We first reduce to polynomial input size, then apply the sharper sampling bound once more. For $0<\eta,\beta<1/4$, Lemma~\ref{lem:doubling-finite} applies to weighted inputs and yields a coreset of size $O(k^{2-2/z}(D+1)\eta^{-3}\log(2n)\log(2/\beta))$, using $\log(2/\eta)=O(\eta^{-1})$. Applying Theorem~\ref{thm:iterative-size-reduction} with $\rho=3$ and $s(k)=k^{2-2/z}(D+1)$ produces an $\eps/4$-coreset $X'\subseteq X$ of size polynomial in $k,D,\eps^{-1}$, with probability at least $0.99$. In particular,
\[
 \log(2|X'|)=O\!\left(\log(2k)+\log(D+1)+\log(2/\eps)\right).
\]
Applying Lemma~\ref{lem:doubling-finite} once more to $X'$ with its coreset weights, accuracy $\eps/2$, and failure probability $1/100$, gives
\begin{align*}
 |S|
 &\lesssim k^{1-2/z}\eps^{-2}\left[k\log(2|X'|)+kD\log(2/\eps)+1\right]\\
 &\lesssim k^{2-2/z}\eps^{-2}\left[D\log(2/\eps)+\log(2k)\right],
\end{align*}
where $\log(D+1)=O(D)$ for $D\ge1$. With constant probability, both stages succeed; coreset composition then makes $S\subseteq X'\subseteq X$ an $\eps$-coreset, proving Lemma~\ref{lem:doubling-upper}.

\paragraph{Matching bounds.}
Combining \eqref{eq:doubling-upper} with the known $\widetilde O(kD\eps^{-z})$ upper bound~\cite{cohenaddad2021newframework,DBLP:conf/stoc/Huang0024} gives the upper bound in Theorem~\ref{thm:intro-doubling}.

\subsection{Proof of Lemma~\ref{lem:doubling-finite}: Error analysis}

It remains to prove the approximation guarantee of Lemma~\ref{lem:doubling-finite}. The case $\cost_z(X,A)=0$ is exact, so assume $\cost_z(X,A)>0$. Fix a query $C$. Clusters of zero anchor cost contain only their anchor and have zero residual; omit them from the range calculations. Exact cluster-mass preservation gives
\[
 \cost_z(S,C)-\cost_z(X,C)=\sum_i\sum_{p\in P_i}(w_p-u_p)\bigl(d^z(p,C)-d^z(a_i,C)\bigr).
\]

To bound the fluctuations, we truncate the cost difference when $C$ is far from $a_i$ relative to $d(p,A)$ (where $d^z(p,C)\approx d^z(a_i,C)$ deterministically) and define, for $p\in P_i$,
\begin{equation}\label{eq:doubling-truncated}
 g_C(p)=\begin{cases}
 d^z(p,C)-d^z(a_i,C),&d(p,A)\ge 2^{-3z}\eps\,d(a_i,C),\\
 0,&\text{otherwise}.
 \end{cases}
\end{equation}
By \eqref{eq:doubling-average},
\[
 \sum_{p\in X}(w_p-u_p)g_C(p)
 =\sum_i\frac1{s_i}\sum_{j=1}^{s_i}\sum_{p\in P_i}(\widehat u_p^{(j)}-u_p)g_C(p).
\]
The summands indexed by $(i,j)$ are independent by construction and centered by unbiasedness in Lemma~\ref{lem:doubling-quadrature}: independence is between weighted draws, not between coordinates within a draw. We bound the sum of their squared ranges to apply Hoeffding's inequality.

\paragraph{Bounding the squared ranges.}
For $p\in P_i$, Lemma~\ref{lem:triangle-inequalities} gives
\begin{equation}\label{eq:doubling-residual-bound}
 |g_C(p)|\le |d^z(p,C)-d^z(a_i,C)|
 \lesssim d^z(p,A)+d(p,A)d(a_i,C)^{z-1}.
\end{equation}
For the fixed query $C$, let $R_i$ be the range length of $s_i^{-1}\sum_{p\in P_i}\widehat u_p^{(j)}g_C(p)$ over all draws. This range is the same for every $j\in[s_i]$ and is unchanged by centering the summand. By H\"older's inequality (Lemma~\ref{lem:holder-inequality}) and the two preserved totals in \eqref{eq:doubling-sample-totals},
\[
 \sum_{p\in P_i}\widehat u_p^{(j)}d(p,A)
 \le\left(\sum_{p\in P_i}\widehat u_p^{(j)}\right)^{1-1/z}
     \left(\sum_{p\in P_i}\widehat u_p^{(j)}d^z(p,A)\right)^{1/z}
 =\left(\sum_{p\in P_i}u_p\right)^{1-1/z}\cost_z(P_i,A)^{1/z}.
\]
Combining the preceding moment bound with \eqref{eq:doubling-residual-bound} controls each uncentered draw and hence its range length:
\begin{equation}\label{eq:doubling-sample-range}
 R_i\lesssim\frac1{s_i}\left[
 \cost_z(P_i,A)+d(a_i,C)^{z-1}\left(\sum_{p\in P_i}u_p\right)^{1-1/z}
                      \cost_z(P_i,A)^{1/z}\right].
\end{equation}
There are $s_i$ independent terms of range length $R_i$ in cluster $P_i$. Squaring \eqref{eq:doubling-sample-range}, summing over clusters, and substituting the choice of $s_{i}$ from \eqref{eq:doubling-allocation} yields
\begin{align*}
 \sum_i s_iR_i^2
 &\lesssim\sum_i
 \frac{\cost_z(P_i,A)^2+\left(\sum_{p\in P_i}u_pd^z(a_i,C)\right)^{2-2/z}\cost_z(P_i,A)^{2/z}}{s_i}\\
 &\lesssim
 \frac{\eps^2\cost_z(X,A)^{2/z}}{K_z[k\log(2n)+kD\log(2/\eps)+\log(2/\beta)]}\\
 &\qquad{}\cdot\sum_i\left[\cost_z(P_i,A)^{2-2/z}+\left(\sum_{p\in P_i}u_pd^z(a_i,C)\right)^{2-2/z}\right].
\end{align*}
Since $2-2/z\ge1$, each sum of nonnegative powers is at most the corresponding power of the sum. Combining this observation with Lemma~\ref{lem:triangle-inequalities} gives
\begin{align*} \sum_i\sum_{p\in P_i}u_pd^z(a_i,C)&\le2^{z-1}[\cost_z(X,C)+\cost_z(X,A)],\\ \sum_i\cost_z(P_i,A)^{2-2/z}&\le\cost_z(X,A)^{2-2/z},\\ \sum_i\left(\sum_{p\in P_i}u_pd^z(a_i,C)\right)^{2-2/z}&\lesssim[\cost_z(X,C)+\cost_z(X,A)]^{2-2/z}. \end{align*}
Substituting these estimates yields
\begin{equation} \sum_i s_iR_i^2\lesssim\frac{\eps^2[\cost_z(X,C)+\cost_z(X,A)]^2}{K_z[k\log(2n)+kD\log(2/\eps)+\log(2/\beta)]}. \end{equation}

Hoeffding's inequality (Lemma~\ref{lem:hoeffding}) therefore gives
\begin{equation}\label{eq:doubling-concentration-scale}
 \left|\sum_{p\in X}(w_p-u_p)g_C(p)\right|
 \le2^{-z-6}\eps[\cost_z(X,C)+\cost_z(X,A)]
\end{equation}
except with probability at most
\begin{equation}\label{eq:doubling-fixed-query-tail}
 2\exp\!\left(-c_zK_z
       [k\log(2n)+kD\log(2/\eps)+\log(2/\beta)]\right),
\end{equation}
where $c_z>0$ depends only on $z$. 

The preceding concentration bound holds for each fixed $C$. To obtain a guarantee for all queries $C$ simultaneously, we use the following net.

\begin{lemma}[Covering the truncated cost differences]\label{lem:doubling-covering}
There is a finite family $V\subseteq\R^X$ and a map associating each query $C$ with a vector $v\in V$, fixed before sampling, such that
\begin{align}
 |g_C(p)-v_p|&\le2^{-z-6}\eps d^z(p,A)\qquad(p\in X),\label{eq:doubling-vector-error}\\
 \log|V|&\lesssim k\log(2n)+kD\log(2/\eps).\label{eq:doubling-entropy}
\end{align}
\end{lemma}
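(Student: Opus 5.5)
The plan is to follow the ``records'' idea from the technical overview. I encode each query $C$ by a discrete record that determines, up to pointwise error $O(\eps)d^z(p,A)$, the truncated residual $g_C(p)$ at every input point. I then let $V$ consist of one canonical vector per possible record.

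\textbf{Step 1: the record of a query.} For each center $c\in C$, I record its nearest input point $x_c\in X$ and the scale $r_c=d(c,x_c)$, rounded down to a power of $(1+\eps')$ with $\eps'=2^{-cz}\eps$. I also record a point $y_c$ of an $\eps' r_c$-net of the ball $B(x_c,2r_c)$ with $d(c,y_c)\le\eps' r_c$; to make this possible, I allow net points of the ambient space $\calX$. Scales only matter in the window relevant to each cluster, between $\eps' d(a_i,C)$ and the cluster's radius, so I record $r_c$ relative to $d(x_c,A)$ or to the scale of the cluster of $x_c$. This keeps the scale entry to $O(\log(n)+\log(1/\eps))$ choices per center, or else the scale is marked as ``far''. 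I further record, for each cluster $i$, the rounded value of $d(a_i,C)$; these cluster cutoffs decide which points are truncated in \eqref{eq:doubling-truncated}. The doubling property gives $(2/\eps')^{O(D)}$ choices for $y_c$. Hence the number of records is at most $\bigl(n\cdot\log(n/\eps)\cdot(2/\eps)^{O(D)}\bigr)^{k}$. The cluster cutoffs should be derivable from the center records (via $a_i$'s distance to the recorded $y_c$), so they add no extra entropy. This gives \eqref{eq:doubling-entropy}.

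\textbf{Step 2: approximation.} Define $v_p$ as $g_{C'}(p)$ for a fixed representative query $C'$ with the same record, chosen in advance; I would take the rounded centers $y_c$ themselves. For $p\in P_i$ that is not truncated, we have $d(p,A)\ge2^{-3z}\eps\, d(a_i,C)$. Only centers whose scale is at most of order $d(p,A)/\eps$ can be nearest to $p$ or to $a_i$. For those centers the recorded $y_c$ is within $\eps' r_c\lesssim 2^{-cz}d(p,A)$ of $c$. Therefore $d(p,C)$ and $d(p,C')$ differ by $O(2^{-cz}d(p,A))$, and the same holds for $a_i$. The far centers are irrelevant by the same comparison in both directions. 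Lemma~\ref{lem:triangle-inequalities} (the approximate triangle inequality for $z$-th powers) then converts these distance errors into an error of $O(2^{-cz})\bigl[d^z(p,A)+d(p,A)\,d(a_i,C)^{z-1}\bigr]$ on $g$.

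\textbf{Main obstacle.} The bound just obtained carries the term $d(p,A)\,d(a_i,C)^{z-1}$, which can exceed $\eps d^z(p,A)$. Fixing this requires relative precision $\eps\, d(p,A)/d(a_i,C)$, so the net resolution must be taken relative to $d(a_i,C)$ rather than $d(p,A)$. That inflates the entropy to $D\log(1/\eps^2)$, which is still $O(D\log(2/\eps))$, and I would try this first. The truncation threshold must also match between $C$ and $C'$, since points near the cutoff could switch status. I would handle this by rounding the cutoffs in the record and absorbing the boundary points into the $O(\eps)\,d^z$ bound: at the boundary, $|g|\lesssim\eps^{-(z-1)}d^z(p,A)$ is not small. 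This is the delicate step, and I might need to redefine the threshold via the recorded cutoff instead of the exact $d(a_i,C)$.
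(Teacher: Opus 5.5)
Your architecture matches the paper's: a finite code per query, one fixed vector per code, and a nearest-center comparison that only uses centers within $O(\eps^{-1}d(p,A))$ of $a_i$. But the step you flag as delicate is a genuine gap, and neither remedy you suggest works. The function $g_C$ is fixed by \eqref{eq:doubling-truncated} with the exact value $d(a_i,C)$, so you may not redefine the threshold inside this lemma. The cutoffs are also \emph{not} derivable from the rounded $y_c$. A point with $d(p,A)$ just above $2^{-3z}\eps\,d(a_i,C)$ can be active for $C$ and truncated for another query with the same record, and there $|g_C(p)|$ can be of order $\eps^{-(z-1)}d^z(p,A)$. For the same reason, taking $v=g_{C'}$ with $C'=\{y_c\}$ fails, since $C'$ need not share $C$'s active sets.

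The missing idea is to put the \emph{exact} active set of each cluster into the code. Activity in $P_i$ is a threshold on $d(p,A)$, so the active set is a suffix of $P_i$ sorted by $d(p,A)$. That gives at most $n+1$ choices per cluster and a cost of $k\log(n+1)$, which is precisely what the $k\log(2n)$ term in \eqref{eq:doubling-entropy} pays for. With active sets matched, truncated coordinates agree exactly. For active $p$, write $r_p=d(p,A)$; then both $d(a_i,C)$ and $d(a_i,C')$ are at most $2^{3z}\eps^{-1}r_p$. Hence in \emph{both} queries the centers nearest to $p$ and to $a_i$ lie in $B(a_i,Lr_p)$ with $L=2^{3z}\eps^{-1}+2$. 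This is what your ``same comparison in both directions'' needs.

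Two further steps need repair. First, both distances can be of order $Lr_p$, so the mean value theorem forces absolute distance accuracy of order $\eps\,r_p^z/d(a_i,C)^{z-1}$. When $d(a_i,C)\asymp\eps^{-1}r_p$ this is $\asymp\eps^z r_p$, so your resolution (and your estimate $D\log(1/\eps^2)$) must be taken at precision $\eps^z$ relative to $r_p$. The entropy is still $O_z(D\log(2/\eps))$, but Step~2 as written, with accuracy $2^{-cz}r_p$, does not yield \eqref{eq:doubling-vector-error}.

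Second, your scale entry does not have $O(\log n+\log(1/\eps))$ values. Rounding $d(c,X)$ to powers of $1+\eps'$ ranges over a set governed by the aspect ratio of $X$. Normalizing by $d(x_c,A)$ does not help: it vanishes when $x_c$ is an anchor and is otherwise unrelated to $d(c,X)$.

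The paper avoids scale rounding by tying scales to the $n$ values $r_p$. For each $p\in P_i$ with $r_p>0$ it fixes a cover of $B(a_i,Lr_p)$ by balls of radius $\delta r_p$, where $\delta=2^{-z-8}\eps/(zL^{z-1})\asymp\eps^z$. It indexes all $N\le n\,(O(L/\delta))^D$ balls and records, for each center, a smallest-radius ball containing it. A center nearest to $p$ or $a_i$ then has a record of radius at most $\delta r_p$, so the center of $C'$ with the same record is within $2\delta r_p$. This gives $|g_C(p)-g_{C'}(p)|\le4z\delta L^{z-1}r_p^z=2^{-z-6}\eps r_p^z$. Taking $V=\{g_{C_E}\}$ for one fixed query $C_E$ per attained code $E$, the count $(N+1)^k(n+1)^k$ gives \eqref{eq:doubling-entropy}.
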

\begin{proof}
Write $r_p=d(p,A)$ and $L=2^{3z}\eps^{-1}+2$. Points with $r_p=0$
equal their anchors and have $g_C(p)=0$ for every $C$; if all scales
are zero, take $V=\{0\}$. Otherwise, fix
$\delta=2^{-z-8}\eps/(zL^{z-1})\asymp\eps^z$.
For each $p\in P_i$ with $r_p>0$, fix a cover of $B(a_i,Lr_p)$
by balls of radius $\delta r_p$. Give every covering ball its own
index, retaining its radius even if its center occurs in other covers.
For each $c\in C$, record the index of a smallest-radius covering
ball containing $c$, with ties broken by a fixed order; use a null
record if no ball contains $c$. Sort the records and pad to $k$ slots
with null records. Together with the active suffix in each cluster,
as defined by \eqref{eq:doubling-truncated}, these slots form the query's
code. All covers and ordering rules are fixed before sampling.

Consider queries $C,C'$ with the same code and an active
$p\in P_i$ with $r_p>0$. At this coordinate, keep only records
of radius at most $\delta r_p$.
For $x\in\{p,a_i\}$, a center $c\in C$ nearest to $x$ satisfies
$d(a_i,c)\le d(a_i,C)+2r_p\le Lr_p$.
It therefore has a record of radius at most $\delta r_p$ and is
not discarded. The same record contains a center $c'\in C'$, so
$d(c,c')\le2\delta r_p$ and
$d(x,C')\le d(x,C)+2\delta r_p$.
The active suffixes agree, so exchanging $C,C'$ gives
$|d(x,C)-d(x,C')|\le2\delta r_p$.
Thus discarding coarser and null records changes neither nearest distance.
Both distances are at most $Lr_p$, by the cutoff and the triangle
inequality. The mean value theorem now gives
\begin{equation*}
 |g_C(p)-g_{C'}(p)|\le4z\delta L^{z-1}r_p^z
 =2^{-z-6}\eps r_p^z.
\end{equation*}
Inactive coordinates and zero-scale coordinates agree exactly.

By doubling, the total number $N$ of indexed covering balls is at most
$n(O(L/\delta))^D$. Sorting each cluster by $r_p$ gives at most $n+1$
active suffixes, including the empty suffix; equal radii are kept or
discarded together. There are consequently at most
$(N+1)^k(n+1)^k$ codes. Choose one query $C_E$ for each attained code
$E$, set $V=\{g_{C_E}\}$, and map $C$ to $g_{C_{E(C)}}$.
These choices are fixed before sampling. The preceding comparison
proves \eqref{eq:doubling-vector-error}; counting the codes gives
\eqref{eq:doubling-entropy}, since $L/\delta=O_z(\eps^{-(z+1)})$.
\end{proof}

\paragraph{Uniform coreset guarantee.}
For each attained $v\in V$, fix before sampling a representative query $C_v$ whose cost $\cost_z(X,C_v)+\cost_z(X,A)$ is within a factor of two of the infimum over queries associated with $v$. Thus, for every $C$ associated with $v$,
\[
 \cost_z(X,C_v)+\cost_z(X,A)\le2[\cost_z(X,C)+\cost_z(X,A)],
\]
The infimum is positive because $\cost_z(X,A)>0$, so such a representative exists. Its concentration error is bounded in terms of the cost of every query it represents. There are at most $|V|$ such fixed representatives. By \eqref{eq:doubling-fixed-query-tail} and \eqref{eq:doubling-entropy}, a union bound shows that the probability that \eqref{eq:doubling-concentration-scale} fails for any of them is at most
\[
 2|V|\exp\!\left(-c_zK_z[k\log(2n)+kD\log(2/\eps)+\log(2/\beta)]\right)\le\beta,
\]
for sufficiently large $K_z$. Hence, with probability at least $1-\beta$, concentration holds for every $C_v$. On this event, combining the cost comparison above and $|g_C(p)-g_{C_v}(p)|\le 2^{-z-5}\eps d^z(p,A)$ with exact anchor-cost preservation and $\cost_z(X,A)\le 2^z\cost_z(X,C)$ gives, for every $C$ associated with $v$,
\begin{equation}
 \left|\sum_{p\in X}(w_p-u_p)g_C(p)\right|
 \le\left|\sum_{p\in X}(w_p-u_p)g_{C_v}(p)\right|+2^{-z-4}\eps\cost_z(X,A)
 \le\frac{\eps}{8}\cost_z(X,C).
\end{equation}
For points removed by the cutoff, $g_C(p)=0$ and $d(p,A)<2^{-3z}\eps\,d(a_i,C)$. The latter condition implies $d(a_i,C)\le 2d(p,C)$, so Lemma~\ref{lem:triangle-inequalities} gives $|d^z(p,C)-d^z(a_i,C)|\le\frac{\eps}{4}d^z(p,C)$. Combining the sampled and truncated contributions over $X$, with $|w_p-u_p|\le w_p+u_p$, yields
\[
 |\cost_z(S,C)-\cost_z(X,C)|
 \le\frac{\eps}{8}\cost_z(X,C)+\frac{\eps}{4}\bigl[\cost_z(X,C)+\cost_z(S,C)\bigr],
\]
which implies $|\cost_z(S,C)-\cost_z(X,C)|\le\frac{5\eps/8}{1-\eps/4}\cost_z(X,C)<\eps\cost_z(X,C)$, proving Lemma~\ref{lem:doubling-finite}.

\paragraph{Approximate anchors.}
Write $Q_C=\cost_z(X,C)$ and $V=\cost_z(X,A)$.
The argument also applies to any $A\subseteq X$ with $1\le|A|\le k$
and $V\le\alpha Q_C$ for every query, where $1\le\alpha=O_z(1)$,
using the nearest-anchor partition.
For target accuracy $\eps$ in a sampling stage, use internal accuracy
$\eta=\eps/[8(1+\alpha)]$ in place of $\eps$.
Before applying the anchor comparison, the preceding proof bounds
the error on truncated residuals by $2^{-z-6}\eta(2Q_C+6V)$ and the discarded
error by $\eta[Q_C+\cost_z(S,C)]/4$.
Substituting $V\le\alpha Q_C$ and moving the sampled-error term
to the left gives relative error at most
$\eta[2^{-z-6}(2+6\alpha)+1/2]/(1-\eta/4)<\eps$.
This rescales both the sampling and covering accuracies, changing
only constants depending on $z$ and $\alpha$.
At every iterative size-reduction stage, choose such anchors for
the current weighted input and rescale that stage's target accuracy
in the same way.

\section*{AI methodology}

We used ChatGPT (OpenAI) to help develop and draft parts of the proofs
and to revise the exposition. The authors revised and verified the arguments, checked the references,
and take full responsibility for the paper.

\bibliographystyle{alphaurl}
\bibliography{sample}

\appendix 

\section{Auxiliary results}

The following inequalities and size-reduction theorem supply the auxiliary tools used in Section~\ref{sec:doubling-upper-proof}.

\begin{lemma}[Metric inequalities]\label{lem:triangle-inequalities}
Let $(\calX,d)$ be a metric space, let $C\subseteq\calX$ be nonempty and finite, and let $z\ge2$ be real. For all $p,q\in\calX$,
\begin{align*}
 d^z(p,C)&\le2^{z-1}\bigl[d^z(p,q)+d^z(q,C)\bigr],\\
 |d^z(p,C)-d^z(q,C)|&\le z2^{z-2}\bigl[d^z(p,q)+d(p,q)d(q,C)^{z-1}\bigr].
\end{align*}
\end{lemma}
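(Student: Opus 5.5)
The first inequality is a convexity estimate and the second follows from the mean value theorem plus the first. Throughout, I use the elementary fact that $x\mapsto d(x,C)$ is $1$-Lipschitz: $|d(p,C)-d(q,C)|\le d(p,q)$, since $d(p,C)\le d(p,c)\le d(p,q)+d(q,c)$ for every $c\in C$.

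\emph{First inequality.} By the Lipschitz bound, $d(p,C)\le d(p,q)+d(q,C)$. Because $t\mapsto t^z$ is convex and nondecreasing on $[0,\infty)$ for $z\ge1$, we have $(a+b)^z=2^z\bigl(\tfrac{a+b}{2}\bigr)^z\le2^{z-1}(a^z+b^z)$. Taking $a=d(p,q)$ and $b=d(q,C)$ gives the claim.

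\emph{Second inequality.} Write $x=d(p,C)$, $y=d(q,C)$ and $r=d(p,q)$, so that $|x-y|\le r$. The mean value theorem gives $|x^z-y^z|\le z\,\xi^{z-1}|x-y|\le z\,\xi^{z-1}r$ for some $\xi$ between $x$ and $y$. Then $\xi\le\max\{x,y\}\le y+r$. Since $z-1\ge1$, the same convexity estimate with exponent $z-1$ yields $(y+r)^{z-1}\le2^{z-2}(y^{z-1}+r^{z-1})$. Substituting,
\[
|x^z-y^z|\le z2^{z-2}\bigl(r\,y^{z-1}+r^{z}\bigr),
\]
which is exactly the stated bound $z2^{z-2}[d^z(p,q)+d(p,q)d(q,C)^{z-1}]$.

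\emph{Where the hypothesis enters.} The only delicate point is that applying convexity to the power $z-1$ requires $z-1\ge1$, i.e.\ $z\ge2$; this is precisely the standing hypothesis. For $1<z<2$ one would instead use subadditivity of $t^{z-1}$, which changes the constant, but that case is not needed here. Finiteness and nonemptiness of $C$ only ensure that the minimum defining $d(x,C)$ is attained and finite.
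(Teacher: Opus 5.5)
Your proof is correct and follows the paper's argument exactly: the $1$-Lipschitz bound on $d(\cdot,C)$, the convexity estimate $(a+b)^z\le 2^{z-1}(a^z+b^z)$ for the first inequality, and the mean value theorem with $\max\{x,y\}\le y+r$ followed by the same estimate at exponent $z-1\ge1$ for the second. One small wording point: the second bound uses the convexity estimate at exponent $z-1$, not the first inequality itself, so your opening sentence slightly misdescribes what you actually do.
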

\begin{proof}
The triangle inequality gives $d(p,C)\le d(p,q)+d(q,C)$ and $|d(p,C)-d(q,C)|\le d(p,q)$. The first bound follows from $(x+y)^z\le2^{z-1}(x^z+y^z)$ for $x,y\ge0$.

For the second bound, the mean value theorem gives, for all $x,y\ge0$,
\begin{equation}
 |x^z-y^z|\le z|x-y|\max\{x,y\}^{z-1}.
\end{equation}
Substitute $x=d(p,C)$ and $y=d(q,C)$. The triangle inequality bounds $\max\{d(p,C),d(q,C)\}\le d(q,C)+d(p,q)$; applying $(x+y)^{z-1}\le2^{z-2}(x^{z-1}+y^{z-1})$ then gives the second bound.
\end{proof}

\begin{lemma}[H\"older's inequality]\label{lem:holder-inequality}
For $z>1$ and nonnegative finite sequences $(a_p)$ and $(b_p)$,
\[
 \sum_p a_pb_p\le
 \left(\sum_p a_p^{z/(z-1)}\right)^{1-1/z}
 \left(\sum_p b_p^z\right)^{1/z}.
\]
In particular, for nonnegative $(v_p)$ and $(t_p)$,
\[
 \sum_p v_pt_p\le
 \left(\sum_p v_p\right)^{1-1/z}
 \left(\sum_p v_pt_p^z\right)^{1/z}.
\]
\end{lemma}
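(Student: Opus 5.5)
This is the standard weighted Hölder inequality. The plan is to prove the first form directly and then obtain the second as a specialization.

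For the first inequality, set the conjugate exponent $z'=z/(z-1)$, so that $1/z'+1/z=1$. Let $\alpha=(\sum_p a_p^{z'})^{1/z'}$ and $\beta=(\sum_p b_p^z)^{1/z}$. If $\alpha=0$, then every $a_p=0$ and the left side vanishes; the case $\beta=0$ is the same. Otherwise, normalize to $\tilde a_p=a_p/\alpha$ and $\tilde b_p=b_p/\beta$. Apply Young's inequality $xy\le x^{z'}/z'+y^z/z$ for $x,y\ge0$ to each pair $(\tilde a_p,\tilde b_p)$. Young's inequality itself follows from the concavity of $\log$: $\log(xy)=\tfrac1{z'}\log x^{z'}+\tfrac1z\log y^z\le\log(x^{z'}/z'+y^z/z)$ when $x,y>0$, and it is trivial if $x$ or $y$ is zero. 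Summing over $p$ gives $\sum_p\tilde a_p\tilde b_p\le 1/z'+1/z=1$. Multiplying through by $\alpha\beta$ gives the first inequality.

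For the second form, substitute $a_p=v_p^{1-1/z}$ and $b_p=v_p^{1/z}t_p$, both of which are nonnegative. Then $a_pb_p=v_pt_p$, $a_p^{z/(z-1)}=v_p$, and $b_p^z=v_pt_p^z$. The first inequality then yields exactly $\sum_pv_pt_p\le(\sum_pv_p)^{1-1/z}(\sum_pv_pt_p^z)^{1/z}$.

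There is no real obstacle here. The only care needed is in the degenerate cases: zero normalizers, and zero entries when Young's inequality is derived through logarithms. Both are handled by the direct checks above.
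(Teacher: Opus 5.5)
Your proof is correct: normalizing and applying Young's inequality gives the first form, the substitution $a_p=v_p^{1-1/z}$, $b_p=v_p^{1/z}t_p$ gives the weighted form exactly, and you handle the degenerate cases properly. The paper states this lemma as a standard auxiliary fact and gives no proof, so your argument supplies the textbook derivation that the paper relies on implicitly.
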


\begin{lemma}[Hoeffding's inequality~{\cite{hoeffding1963probability}}]\label{lem:hoeffding}
Let $Y_1,\ldots,Y_m$ be independent real-valued random variables, and let $a_j\le b_j$ be finite deterministic bounds such that $a_j\le Y_j\le b_j$ almost surely for every $j\in[m]$. For every $t>0$,
\[
 \Pr\!\left[\left|\sum_{j=1}^m(Y_j-\E Y_j)\right|\ge t\right]
 \le2\exp\!\left(-\frac{2t^2}{\sum_{j=1}^m(b_j-a_j)^2}\right).
\]
If $\sum_{j=1}^m(b_j-a_j)^2=0$, the probability on the left is zero.
\end{lemma}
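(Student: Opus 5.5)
This is the classical Hoeffding inequality, and I will prove it by the standard Chernoff route. Write $S=\sum_{j}(Y_j-\E Y_j)$ and set $c_j=b_j-a_j$. First, the degenerate case. If $\sum_j c_j^2=0$, then every $Y_j$ equals $a_j=b_j$ almost surely, so $S=0$ almost surely. The probability of $|S|\ge t>0$ is then zero. From now on I assume $\sum_j c_j^2>0$. By symmetry (replacing $Y_j$ by $-Y_j$ with bounds $[-b_j,-a_j]$), it suffices to prove the one-sided bound $\Pr[S\ge t]\le\exp(-2t^2/\sum_j c_j^2)$, and then add the two tails.

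The key step is Hoeffding's lemma. For a random variable $Z$ with $a\le Z\le b$ almost surely and $\E Z=0$, we need $\E e^{\lambda Z}\le e^{\lambda^2(b-a)^2/8}$ for every real $\lambda$. I will prove it via convexity of $x\mapsto e^{\lambda x}$ on $[a,b]$:
\[
e^{\lambda Z}\le\frac{b-Z}{b-a}e^{\lambda a}+\frac{Z-a}{b-a}e^{\lambda b}.
\]
Taking expectations with $\E Z=0$ gives $\E e^{\lambda Z}\le e^{\phi(u)}$. Here $u=\lambda(b-a)$, $\theta=-a/(b-a)\in[0,1]$, and $\phi(u)=-\theta u+\log(1-\theta+\theta e^u)$. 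The mean-zero condition forces $a\le0\le b$, which is why $\theta\in[0,1]$; if $a=b$ the claim is trivial. Direct computation gives $\phi(0)=\phi'(0)=0$. The second derivative has the form $\phi''(u)=\pi(1-\pi)$ for some $\pi\in[0,1]$, so $\phi''\le1/4$. Taylor's theorem then yields $\phi(u)\le u^2/8$. This calculus step is the only real obstacle, and it is routine.

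Finally, I apply Markov's inequality to $e^{\lambda S}$ for $\lambda>0$ and use independence to factor the moment generating function. Each centered variable $Y_j-\E Y_j$ lies in an interval of length $c_j$, so Hoeffding's lemma gives
\[
\Pr[S\ge t]\le e^{-\lambda t}\prod_j\E e^{\lambda(Y_j-\E Y_j)}\le\exp\Bigl(-\lambda t+\tfrac{\lambda^2}{8}\sum_j c_j^2\Bigr).
\]
Choosing $\lambda=4t/\sum_j c_j^2$ gives $\exp(-2t^2/\sum_jc_j^2)$. Adding the symmetric lower tail produces the factor $2$ and completes the proof.
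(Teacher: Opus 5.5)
Your proof is correct and follows the same route as the paper, which cites Hoeffding's one-sided bound and obtains the two-sided version by applying it to $(Y_j)$ and $(-Y_j)$ with a union bound. The only difference is that you also prove the cited one-sided bound yourself, via the Chernoff argument and Hoeffding's lemma, and you treat the degenerate case $\sum_j(b_j-a_j)^2=0$ explicitly; both steps are done correctly.
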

The two-sided bound follows by applying Hoeffding's one-sided inequality to $(Y_j)_{j=1}^m$ and $(-Y_j)_{j=1}^m$ and taking a union bound.

\begin{theorem}[Iterative size reduction~{\cite[Theorem~3.1]{braverman2021excludedminor}}]\label{thm:iterative-size-reduction}
Fix $z\ge1$, a constant $\rho\ge1$, and a family $\calM$ of metric spaces. Suppose that, for some function $s:\N\to[1,\infty)$, a randomized algorithm constructs, for every integer $k\ge1$, every finite positively weighted dataset $X\subseteq\calX$ with $(\calX,d)\in\calM$, and every $0<\eta,\beta<1/4$, a strong $\eta$-coreset for $(k,z)$-clustering supported on $X$ of size
\[
 O\!\left(\eta^{-\rho}s(k)\log(2|X|)\log(2/\beta)\right)
\]
with probability at least $1-\beta$. Then iterating this algorithm with suitable accuracies and failure probabilities constructs, for every $0<\eps,\beta<1/4$, a strong $\eps$-coreset supported on $X$ of size
\[
 \widetilde O\!\left(\eps^{-\rho}s(k)\log(2/\beta)\right)
\]
with probability at least $1-\beta$. Here $|X|$ counts support points, and $\widetilde O$ hides only logarithmic factors in $\eps^{-1}$, $s(k)$, and $\log(2/\beta)$.
\end{theorem}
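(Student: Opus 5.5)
The plan is to iterate the given algorithm on its own output. Each weighted coreset is a positively weighted dataset supported on a subset of $X$, so it is a legal input to the next stage. We then compose the guarantees. Write $X_0=X$, $n_0=|X|$. At stage $t$ we run the algorithm on $X_t$ with accuracy $\eta_t$ and failure probability $\beta_t$, producing $X_{t+1}\subseteq X_t\subseteq X$. Its size is
\[
n_{t+1}\le c\,\eta_t^{-\rho}s(k)\log(2n_t)\log(2/\beta_t).
\]
Coreset composition works as follows. If $X_{t+1}$ is an $\eta_t$-coreset of $X_t$ for every $C$ with $|C|=k$, then $\cost_z(X_T,C)\in\prod_t(1\pm\eta_t)\cost_z(X,C)$. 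So it suffices to arrange $\sum_t\eta_t\le\eps/2$ with all $\eta_t<1/4$, since then $\prod_t(1+\eta_t)\le e^{\eps/2}\le1+\eps$ and $\prod_t(1-\eta_t)\ge1-\eps$. A union bound over stages with $\sum_t\beta_t\le\beta$ gives overall success probability $1-\beta$.

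\textbf{Schedule.} Fix a threshold $N_*=\mathrm{poly}(\eps^{-1},s(k),\log(2/\beta))$, chosen large enough below. While $n_t>N_*$, we use a \emph{reduction stage} with
\[
\eta_t=\frac{\eps}{8\,(\log(2n_t))^{1/(2\rho)}}\qquad\text{and}\qquad\beta_t=\frac{\beta}{4\log(2n_t)}.
\]
With these choices,
\[
n_{t+1}\lesssim \eps^{-\rho}s(k)(\log 2n_t)^{3/2}\log\bigl(8\log(2n_t)/\beta\bigr)
=\mathrm{poly}\bigl(\eps^{-1},s(k),\log(2/\beta)\bigr)\cdot(\log 2n_t)^{2}.
\]
Once $n_t>N_*$ is large compared with that polynomial prefactor, this is at most $(\log 2n_t)^{3}$. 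Thus $\log n$ contracts roughly like $\log n_{t+1}\le3\log\log(2n_t)$. The reduction stages visited therefore have $\log(2n_t)$ at least doubly-exponentially increasing when read backward from the last reduction stage. In particular these values are at least geometrically increasing, and the smallest one exceeds $\log N_*\ge1$. Hence
\[
\sum_t(\log 2n_t)^{-1/(2\rho)}=O_\rho(1)\qquad\text{and}\qquad\sum_t(\log 2n_t)^{-1}=O(1).
\]
After rescaling the constants 8 and 4, this gives $\sum\eta_t\le\eps/4$ and $\sum\beta_t\le\beta/2$.

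\textbf{Final stage.} When $n_t\le N_*$ (or immediately, if $n_0\le N_*$), we run one final stage with $\eta=\eps/4$ and failure probability $\beta/2$. Its output has size
\[
O\bigl(\eps^{-\rho}s(k)\log(2N_*)\log(4/\beta)\bigr)=\widetilde O\bigl(\eps^{-\rho}s(k)\log(2/\beta)\bigr),
\]
because $\log(2N_*)$ is logarithmic in $\eps^{-1}$, $s(k)$ and $\log(2/\beta)$. Crucially, the number of stages, which is about $\log^*n_0$, affects only the error and failure budgets, and these are summable. It never enters the final size. The sizes of intermediate stages only matter through $\log n_t$. The output is supported on $X$ because each stage's output is supported on its input.

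\textbf{Main obstacle.} I expect the delicate step to be exactly this balancing. A naive schedule would fail: with $\eta_t=\eps/T$, or $\beta_t=\beta/T$ for the final stage, the count $T\approx\log^* n$ of stages would reappear in the size. The tempting fix of putting geometric decay in $t$ makes early $\eta_t$ tiny, and so inflates early sizes by $\eta_t^{-\rho}$. I would handle this by tying $\eta_t$ and $\beta_t$ to $\log n_t$ rather than to $t$, as above. I would then check carefully that the extra factor $\eta_t^{-\rho}\log(1/\beta_t)=\mathrm{poly}(\log n_t)$ is still absorbed by the contraction $n_t\mapsto\mathrm{polylog}(n_t)$ once $n_t$ exceeds $N_*$. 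If the threshold interplay proves awkward, a fallback is to stop the reduction as soon as $\log n_t\le(\eps^{-1}s(k)\log(2/\beta))^{O(1)}$ and accept that bound inside the final logarithm, which still yields the claimed $\widetilde O$.
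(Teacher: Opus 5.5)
The paper does not prove this statement; it quotes it from \cite{braverman2021excludedminor} and uses it as a black box, so your argument has to stand on its own. Its architecture is right: feed each output back in, tie $\eta_t,\beta_t$ to $\ell_t:=\log(2n_t)$ rather than to $t$, stop at a threshold $N_*$, and run one final stage. However, the contraction step contains an $n$-versus-$\log n$ slip.

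Your size bound reads $n_{t+1}\le P\,\ell_t^2$ with $P=\mathrm{poly}(\eps^{-1},s(k),\log(2/\beta))$, and you assert $P\ell_t^2\le\ell_t^3$ once $n_t>N_*$. That inequality is equivalent to $\ell_t\ge P$, i.e.\ $n_t\gtrsim e^{P}$, not $n_t\ge\mathrm{poly}(P)$. Neither way of choosing $N_*$ rescues it:
\begin{itemize}
\item If you raise $N_*$ to about $e^{P}$ to make the inequality true, the final stage pays $\log(2N_*)\approx P$. That is a polynomial factor, and the $\widetilde O$ bound is lost.
\item If you keep $N_*$ polynomial, the claimed $\log n_{t+1}\le 3\log\log(2n_t)$ does not follow from the size bound in the window $N_*<n_t<e^{P}$. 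In the worst case it fails near the threshold.
\end{itemize}
Your fallback has the same slip. Stopping once $\log n_t\le\mathrm{poly}(\eps^{-1},s(k),\log(2/\beta))$ puts that polynomial into the final size through $\log(2n_t)$. What you need is $n_t$, not $\log n_t$, to be polynomially bounded.

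The repair is short, because you only use backward geometric growth of $\ell_t$, never the cubic bound. Taking logarithms gives $\ell_{t+1}\le\log(2P)+2\log\ell_t$. This is at most $\ell_t/2$ whenever $\ell_t\ge\max\{C_0,4\log(2P)\}$, for an absolute constant $C_0$. So $N_*:=\max\{e^{C_0},(2P)^4\}$ is polynomial, and every reduction stage halves $\ell_t$.

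Counting backward from the last reduction stage, the $i$-th one has $\ell\ge 2^i\log(2N_*)\ge 2^i$. Hence $\sum_t\ell_t^{-1/(2\rho)}\le(1-2^{-1/(2\rho)})^{-1}=O(\rho)$ and $\sum_t\ell_t^{-1}\le 2$. The bookkeeping then goes through:
\begin{itemize}
\item Replace your constant $8$ by $4(1-2^{-1/(2\rho)})^{-1}$. This only changes the $\rho$-dependent constant inside $P$, which is fixed before $N_*$, and gives $\sum_t\eta_t\le\eps/4$.
\item Your $\beta_t$ already gives $\sum_t\beta_t\le\beta/2$.
\item The final stage has size $O(\eps^{-\rho}s(k)\log(2/\beta)\log(2P))$, which is the claimed bound.
\end{itemize}
One minor point: $\eta_t$ and $\beta_t$ depend on the random sizes $n_t$. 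Phrase the union bound as a sum of conditional failure probabilities up to the first failed stage; the budget estimates hold pathwise on that prefix.
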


\end{document}